\renewenvironment{table}[1][]%
{\tableorg[#1]%
\tablebodyfont%
\renewcommand\footnotetext[2][]{{\removelastskip\vskip3pt%
\let\tablebodyfont\tablefootnotefont%
\hskip0pt\if!##1!\else{\smash{$^{##1}$}}\fi##2\par}}%
}{\endtableorg}
\def\P{\mathbb P}
\def\E{\mathbb E}
\def\R{\mathbb R}
\def\Var{\mathbb Var}
\def\Cov{\mathbb Cov}
\def\du{du}
\def\P{\mathbb P}
\def\R{\mathbb R}
\def\1{\mathbf 1}
\def\x{\mathbf x}
\def\z{\mathbf z}
\def\y{\mathbf y}
\def\E{\mathbb E}
\def\W{\mathbb W}
\def\w{\mathbf w}
\DeclarePairedDelimiter\abs{\lvert}{\rvert}
\DeclarePairedDelimiter\norm{\lVert}{\rVert}
\DeclareMathAlphabet\mathscr{LS1}{stixscr}{m}{n}
\SetMathAlphabet\mathscr{bold}{LS1}{stixscr}{b}{n}
\DeclareMathAlphabet\mathcal{LS2}{stixcal}{m}{n}
\SetMathAlphabet\mathcal{bold}{LS2}{stixcal}{b}{n}
\DeclareFontFamily{U}{BOONDOX-calo}{\skewchar\font=45 }
\DeclareFontShape{U}{BOONDOX-calo}{m}{n}{ <-> s*[1.05] BOONDOX-r-calo}{}
\DeclareFontShape{U}{BOONDOX-calo}{b}{n}{ <-> s*[1.05] BOONDOX-b-calo}{}
\DeclareMathAlphabet{\mathcalboondox}{U}{BOONDOX-calo}{m}{n}
\SetMathAlphabet{\mathcalboondox}{bold}{U}{BOONDOX-calo}{b}{n}
\DeclareMathAlphabet{\mathbcalboondox}{U}{BOONDOX-calo}{b}{n}
\DeclareMathAlphabet\mathbfcal{OMS}{cmsy}{b}{n}
\DeclareMathAlphabet\mathbfscr{OMS}{cmsy}{b}{n}
\newcounter{algsubstate}
\algrenewcommand\algorithmicforall{\textbf{foreach}}
\algrenewcommand\algorithmicindent{.8em}
\newtheorem*{theorem}{Theorem}
\newcommand{\comm}[1]{}
\newcommand{\inlineitem}[1][]{
\ifnum\enit@type=\tw@
{\descriptionlabel{#1}}
\hspace{\labelsep}
\else
\ifnum\enit@type=\z@
\refstepcounter{\@listctr}\fi
\quad\@itemlabel\hspace{\labelsep}
\fi}
\theoremstyle{thmstyletwo}%
\theoremstyle{thmstylethree}%
\begin{document}

\title[Article Title]{Modelling Home Range and Intraspecific Spatial Interaction in Wild Animal Populations}

\author*[1,2]{\fnm{Fekadu L.} \sur{Bayisa}}\email{fbayisa@uoguelph.ca; ORCID: https://orcid.org/0000-0003-1421-6825}

\author[3]{\fnm{Christopher L.} \sur{Seals}}\email{cls0035@auburn.edu}

\author[3]{\fnm{Hannah J.} \sur{Leeper}}\email{hzl0122@auburn.edu}

\author[3]{\fnm{Todd D.} \sur{Steury}}\email{tds0009@auburn.edu}

\author[1]{\fnm{Elvan} \sur{Ceyhan}}\email{ezc0066@auburn.edu}

\affil*[1]{\orgdiv{Department of Mathematics and  Statistics}, \orgname{Auburn University}, \orgaddress{\street{221 Parker Hall}, \city{Auburn}, \postcode{36849-5319}, \state{AL}, \country{USA}}}

\affil[2]{\orgdiv{Department of Mathematics and  Statistics}, \orgname{University of Guelph}, \orgaddress{\street{50 Stone Road East}, \city{Guelph}, \postcode{N1G 2W1}, \state{ON}, \country{Canada}}}

\affil[3]{\orgdiv{College of Forestry, Wildlife and Environment}, \orgname{Auburn University}, \orgaddress{\street{602 Duncan Dr}, \city{Auburn}, \postcode{36849}, \state{AL}, \country{USA}}}

\abstract{Interactions among individuals from the same-species of wild animals are an important component of population dynamics. An interaction can be either static (based on overlap of space use) or dynamic (based on movement). The goal of this work is to determine the level of static interactions between individuals from the same-species of wild animals using 95\% and 50\% home ranges, as well as to model their movement interactions, which could include attraction, avoidance (or repulsion), or lack of interaction, in order to gain new insights and improve our understanding of ecological processes. Home range estimation methods (minimum convex polygon, kernel density estimator, and autocorrelated kernel density estimator), inhomogeneous multitype (or cross-type) summary statistics, and envelope testing methods (pointwise and global envelope tests) were proposed to study the nature of the same-species wild-animal spatial interactions.  This study provides comprehensive, self-contained methodological details for investigating spatial interactions between individuals of the same species in wildlife populations.  Using GPS collar data, we applied the methods to quantify both static and dynamic interactions between black bears in southern Alabama, USA. In general, our findings suggest that the black bears in our dataset showed no significant preference to live together or apart, i.e., there was no significant deviation from independence toward association or avoidance (i.e., segregation) between the bears.  This can be loosely interpreted to mean that a black bear is generally indifferent to the presence of other black bears living or wandering nearby.}  

\keywords{ Inhomogeneous multitype point process, cross-type summary function,   minimum convex polygon,  kernel density estimator, autocorrelated kernel density estimator, pointwise and global envelope tests}

\maketitle

\section{Introduction}\label{sec1}

Biodiversity sustains life on Earth, providing essential resources such as food, shelter, medicine, and recreation. However, it faces threats from habitat loss, climate change, overexploitation of resources, and other human activities. Habitat loss and fragmentation, identified as the most significant threats to biodiversity worldwide \cite{allendorf2017genetics}, underscore the need for understanding how species use spaces during range expansions. Such insights can guide land management strategies, aid in designing conservation areas \citep{bocedi2014mechanistic}, and identify suitable habitats for future population growth \citep{mladenoff1999predicting}. Besides, they can help mitigate human-wildlife conflicts \citep{wilton2014distribution} and improve knowledge of movement patterns of species and social interactions. Modelling space-use dynamics offers a framework for organizing management actions effectively \citep{fortin2020quantitative, wysong2020space}. To this end, home range estimation and understanding spatial species interactions are crucial for effective species management and conservation.   \\

This work is aimed at investigating the use of space by wild animal species and the types of spatial interactions that may exist between pairs of the same species of wild animals. We illustrate the proposed methodology with a black bear population in Alabama, USA. By employing aggregated telemetry locations of wild animals, we can estimate their total space use, often referred to as the home range. The swift adoption of telemetry and home-range estimators utilizing telemetry data has led to a substantial body of literature on wild animal home range estimation 
\citep{powell2012home}.\\

According to \cite{burt1943territoriality}, the home range is the area covered by an individual wild animal during its regular activities of foraging, mating, and caring for its young. Occasional sallies outside this area, likely exploratory, are not considered part of the home range. While this definition forms the basis for the meaning of a home range, the definition does not offer guidance on quantifying occasional sallies or identifying the area from which they occur. Utilizing observed data is essential to understanding a wild animal's cognitive map of its home range.  With some level of predictability, home range estimators of wild animals should delimit where wild animals can be found. These estimators  quantify the likelihood that the wild animal is in different places. Moreover, home range estimators  can be used to assess the importance of different places to the wild animal \citep{powell2012movements}. \\

The Minimum Convex Polygon (MCP), a widely used home range estimator \citep{horne2009habitat}, is the smallest convex polygon that covers all known wild animal locations \citep{hayne1949calculation}. However, it neglects internal structures and central tendencies within home ranges, which are crucial for understanding wildlife ecology \citep{powell2012movements}. The utilization distribution, derived from relocation data, indicates the intensity of space use. Most studies adopt a 95\% probability threshold for home range estimation \citep{powell2012movements}, excluding the 5\% most isolated locations to reduce bias. Advances in data collection have enhanced home range estimation and conservation efforts \citep{morato2016space}. The most commonly used statistical tools for estimating wild animal home ranges are MCP and kernel density estimation (KDE) \citep{fleming2015rigorous}, though both have limitations: MCP is non-probabilistic, and KDE assumes independent and identically distributed (iid) location processes. For autocorrelated data, particularly from GPS tracking, KDE may underestimate home ranges. Continuous-time stochastic models address autocorrelation \citep{calabrese2016ctmm}, enabling variogram modeling and improving home range estimation, even with correlated data \citep{fleming2014non}. \\

The conditional distribution of encounters is proposed to characterize the long-term location probabilities of encounters for animal movement within home ranges. An estimator for the spatial distribution of encounter events is developed, directly building upon one of the most widely used analyses in movement ecology, namely home-range estimation \citep{noonan2021estimating}.  According to \cite{braunstein2020black}, animal movement dynamics can be influenced by resource selection, which, in turn, is governed by the movement limitations of the animal.   In stark contrast, most real animals exhibit non-uniform space use within spatially restricted home ranges \citep{bowen1982home,burt1943territoriality, fleming2014fine, kie2010home, martinez2020range, moorcroft2006mechanistic,  noonan2019comprehensive, tucker2019large, powell2000animal}, and encounters between individuals do not occur uniformly in space, but are instead concentrated at territorial boundaries  \citep{bermejo2004home, ellwood2017active, nievergelt1998group,   wilson2012construction}, in/around heavily used habitats and/or habitat features \citep{weckel2006jaguar, whittington2011caribou} or at key resources \citep{de2010spatial, price2013habitat}. We therefore base our work on recent analytical work by \cite{martinez2020range} incorporating non-uniform movement within home ranges into encounter theory.\\

This work is also aimed at exploring the nature of mutual interactions among same-species wild animals. Environmental factors like the amount of sunlight, precipitation, terrain type, soil properties, nutrient availability, and spatial distribution of nutrients can affect the way wild animals interact with each other. Complex spatial  interactions can also result from biological mechanisms like competition, reproduction, and mortality. As a result, due to the limitations of the human eye in discerning complex interaction patterns beyond basic trends and second-order aspects  \citep{ripley1976second}, ecologists have turned to spatial statistics \citep{diggle1983statistical, gelfand2010handbook,  illian2008statistical, van2019theory} to quantify and test hypotheses related to interaction patterns  \citep{gelfand2019handbook, wiegand2013handbook}. Functional statistics in exploratory data analysis for spatial point patterns offer a means to capture diverse aspects of the underlying pattern, including tendencies to seek out or avoid other individuals within and between species \citep{de2022testing}.  In shared habitats, wild animals can partition their space through attraction, avoidance (or repulsion), or exhibit no interaction. Statistical methods can discern the specific type of spatial interaction between pairs of the same-species wild animals.\\

We propose using inhomogeneous multitype summary statistics and envelope testing to explore spatial interactions among same-species wild animal pairs. For intraspecific interactions, cross-versions of inhomogeneous summary statistics, such as the inhomogeneous cross-type $L$-function, which is a scaled version of the $K$-function, and the cross-type $J$-function, can be applied \citep{moller2003statistical}. Monte Carlo tests help evaluate the null hypothesis that animal pairs do not exhibit spatial interactions, comparing observed summary functions with pointwise envelopes from null model simulations. To avoid multiple testing issues, it is crucial to fix the spatial lag beforehand \citep{shaffer1995multiple}. To address potential misinterpretations of pointwise envelopes \citep{baddeley2015spatial}, we use a numerical index-based approach, including the \emph{maximum absolute deviation (MAD)} and the \emph{Diggle-Cressie-Loosmore-Ford (DCLF)} tests \citep{loosmore2006statistical}, for simultaneous hypothesis testing across various spatial lags. \\

The main contributions of our work are as follows: we formulated existing and state-of-the-art methods for estimating home ranges of wild animals, applied them to relocation data for home range estimation, and introduced novel inhomogeneous cross-type summary functions from spatial statistics to analyze spatial interactions among pairs of the same species. This application represents a unique use of spatial statistics in ecology, establishing a foundation for future research on home range estimation and spatial interactions, particularly within our specified domain. The methodology employed in this article is adaptable for other wild animal species beyond our application domain or can be suitably modified for various plant species. The article is structured as follows: Section \ref{StatisticalFrameworkToModelHomeRanges} presents statistical methods, Section \ref{ApplicationToBlackBear} examines a black bear case study, Section \ref{Discussion} covers implications and summarizes the work, and Section \ref{FutureWork} suggests future directions.

\section{Statistical Methods}\label{StatisticalFrameworkToModelHomeRanges}
\subsection{Modelling of Home Ranges}
This section provides an overview of the statistical methods used to estimate the home ranges of wild animal species. Towards this end, the relocation of wild animal species is assumed to be a stochastic process. Let \[\w = \left\lbrace \z_{t_{i}} =\left(x_{t_{i}}, y_{t_{i}}\right)^{T}\mid t_{i} \ge 0, \; i = 1, 2, \cdots, n\right\rbrace  \subset \W,\] denote the relocation data of a wild animal in a study region $ \W\subset \mathbb{R}^{2}$ at times $t_{i}$, $i = 1, 2, \cdots, n$. Here, $x_{t_{i}}$ and $y_{t_{i}}$ represent the longitude and latitude coordinates of a wild animal tracked by GPS at times $t_{i}$.

\subsubsection{Minimum convex polygon estimation}
The minimum  convex polygon is a widely used technique for estimating home ranges. It generates the smallest convex polygon covering observed relocation data $\w$, providing a simple yet popular method \citep{mohr1947table}. This estimation lacks a probabilistic model, relying on peripheral points, making it susceptible to outliers which potentially influence home range estimation irrespective of the distribution of interior points.

\subsubsection{Kernel density estimation} \label{KDEsub}
The home range of wild animal species can be described in terms of a probabilistic model. \cite{worton1989kernel}  proposed a nonparametric approach,  namely \emph{a kernel density estimator},  to estimate the intensity $p\left(\mathbf{z}\right)$  of home range use at a location $\mathbf{z}$  given by 
\begin{align}\label{Mo'a2021}
\hat{p}_{\boldsymbol\Lambda}\left(\mathbf{z}\right)  = \frac{1}{n} \sum_{i = 1}^{n}\abs{\boldsymbol\Lambda}^{-\frac{1}{2}}\mathcal{K}\left(\boldsymbol\Lambda^{-\frac{1}{2}}\left(\mathbf{z} - \mathbf{z}_{t_{i}}\right) \right),
\end{align}
where $\mathcal{K}:\R^{2} \rightarrow \left[0, \infty\right) $ is a kernel function in  $\R^{2}$, and $\boldsymbol\Lambda$ is a symmetric and positive definite $2 \times 2$ bandwidth matrix. The home ranges of wild animal species can be defined as the $c$-level set $\left\lbrace \mathbf{z}: p\left(\mathbf{z}\right) \ge c\right\rbrace$ of the utilization density $p\left(\mathbf{z}\right)$ with a probability content of $100(1-\alpha)\%$, $\alpha \in \left(0, 1\right)$. That is, $1-\alpha = \int_{\left\lbrace \mathbf{z}:  p\left( \mathbf{z}\right) \ge c \right\rbrace}p\left(\mathbf{z}\right)d\mathbf{z}$.  The kernel home range estimator, based on Equation \eqref{Mo'a2021}, is the $\hat{c}$-level set $\left\lbrace \mathbf{z}: \hat{p}_{\boldsymbol\Lambda}\left(\mathbf{z}\right) \ge \hat{c}\right\rbrace$ of the kernel density estimator $\hat{p}_{\boldsymbol\Lambda}\left(\mathbf{z}\right)$. If $\alpha = 0.05$, then $\hat{c}$ is chosen to achieve a specific probability content, such as $0.95 = \int_{\left\lbrace\mathbf{z}:   \hat{p}_{\boldsymbol\Lambda}\left( \mathbf{z}\right) \ge \hat{c}\right\rbrace}\hat{p}_{\boldsymbol\Lambda}\left(\mathbf{z}\right)d\mathbf{z}.$\\

The choice of kernel function has minimal impact on the accuracy of the kernel density estimator compared to the influence of bandwidth \citep{wand60jones}. \cite{worton1989kernel} utilized a constrained bandwidth matrix $\lambda \mathbf{I}$, where $\mathbf{I}$ is the identity matrix, dependent on a single smoothing parameter $\lambda > 0$. The probability of a wild animal being in an infinitesimal region centered at $\mathbf{z}$ in the study area $\W$ can be approximated by
\begin{align}\label{Mo'a2021b}
p\left(\mathbf{z}\right)  \approx \varphi\left( \mathbf{z}, \boldsymbol\mu, \boldsymbol\Sigma\right) 
= \left[2\pi \det\left(\boldsymbol\Sigma\right)\right]^{-\frac{1}{2}}\exp\left\lbrace -\frac{1}{2}\left(\mathbf{z} - \boldsymbol\mu\right)^{T} \boldsymbol\Sigma^{-1}\left( \mathbf{z} - \boldsymbol\mu\right)\right\rbrace, 
\end{align}
where $\varphi\left( \mathbf{z}, \boldsymbol\mu, \boldsymbol\Sigma\right)$ is the probability density function of a multivariate normal distribution with mean $\boldsymbol\mu$ and covariance matrix $\boldsymbol\Sigma$. The parameters $\boldsymbol\mu$ and $\boldsymbol\Sigma$ can be estimated from the relocation data. The mean integrated squared error $MISE\left( \lambda\right) = \E\left[ \int \left( \hat{p}_{\lambda}\left(\mathbf{z}\right) - p\left(\mathbf{z}\right)\right)^{2}d\mathbf{z}\right]$ of the reference density function in Equation \eqref{Mo'a2021b} and the kernel density estimator in Equation \eqref{Mo'a2021} can be used for selecting $\lambda$ through either the 'ad hoc' method or least-squares cross-validation, see the details in \cite{van2020infill} and the reference therein. 

\subsubsection{Autocorrelated kernel density estimation}
The  autocorrelation that may exist in wild animal tracking data, used for home range estimation, can violate the assumption of \emph{iid} data in kernel density estimation. When we refer to autocorrelation in wild animal relocation data, we mean the statistical correlation between an individual's current and past locations, persisting into the future (i.e., temporal autocorrelation). Applying kernel density estimation to autocorrelated data leads to underestimated home ranges \citep{fleming2014fine}. Underestimation results from the fact that a set of \emph{iid} observations carries more information about the home range than an equivalent number of highly autocorrelated observations \citep{fleming2015rigorous}. Thus, assuming \emph{iid} data in home range estimation overestimates information and underestimates home ranges. Higher resolution in tracking wild animal movement can increase autocorrelation in relocation data. Employing such relocation data can degrade the accuracy of the kernel density estimator for home range analysis.   To mitigate the limitations of kernel density estimation, we employ autocorrelated kernel density estimation in our dataset. We assume that the relocation data represent a sample from a nonstationary autocorrelated continuous movement process. Introduced by \cite{fleming2015rigorous}, this method enhances home range estimation accuracy by effectively incorporating the information content of autocorrelated data. Autocorrelated kernel density estimation incorporates movement effects through the autocorrelation function, derived from a fitted movement model or directly estimated from the data \citep{fleming2014fine}. When autocorrelation approaches zero and relocation data become independent, autocorrelated kernel density estimation converges to standard kernel density estimation. Therefore, autocorrelated kernel density estimation serves as a generalization of standard kernel density estimation \citep{fleming2015rigorous}. Wild animal species movement can be represented as a nonstationary process, allowing the mean and autocorrelation functions to change over time. Utilizing a time-varying autocorrelation function enhances the modelling of spatial dependence in wild animal species relocation data. The relocation data of wild animal species can be associated with specific time points $t_{1}$, $t_{2}$, $\cdots$, $t_{n}$. The probability of a wild animal  being in an infinitesimal region centered at location $\mathbf{z}$ at time $t_{i}$ can be approximated using the probability density function  $\mathcal{N}\left( \mathbf{z}, \boldsymbol\mu_{t_{i}}, \boldsymbol\Sigma_{t_{i}}\right)$ of a multivariate normal distribution given by
\begin{align}\label{Mo'a2022}
p\left(\mathbf{z}, t_{i}\right)  \approx \mathcal{N}\left( \mathbf{z}, \boldsymbol\mu_{t_{i}}, \boldsymbol\Sigma_{t_{i}}\right)
=  \frac{\exp\left\lbrace -\frac{1}{2}\left(\mathbf{z} - \boldsymbol\mu_{t_{i}}\right)^{T} \boldsymbol\Sigma_{t_{i}} ^{-1}\left( \mathbf{z} - \boldsymbol\mu_{t_{i}}\right)\right\rbrace}{\left[2\pi \det\left( \boldsymbol\Sigma_{t_{i}}\right)\right]^{\frac{1}{2}}}.
\end{align} 
For $i, k = 1, 2, \cdots, n$, \[\boldsymbol\mu_{t_{i}} = \E\left[\mathbf{z}_{t_{i}}\right],\;\; \boldsymbol\Sigma_{t_{i}, t_{k}} = \E\left[\left(\mathbf{z}_{t_{i}} - \boldsymbol\mu_{t_{i}}\right)\left(\mathbf{z}_{ t_{k}} - \boldsymbol\mu_{ t_{k}}\right)^{T}\right],\] represent the mean and the autocorrelation function. Here, $\boldsymbol\Sigma_{t_{i}}$ represents the covariance structure when  the time points $t_{i}$ and $ t_{k}$ in $\boldsymbol\Sigma_{t_{i}, t_{k}}$ are the same.  Based on Equation \eqref{Mo'a2022}, the probability of a wild animal being in an infinitesimal region centered at location $\mathbf{z}$ can be approximated as a time-averaged density given by 
\begin{align}\label{Mo'a2023}
\P\left(\mathbf{z}\right)  \approx
\frac{1}{n}\displaystyle\sum_{i = 1}^{n} \left[2\pi \det\left( \boldsymbol\Sigma_{t_{i}}\right)\right]^{-\frac{1}{2}}\exp\left\lbrace -\frac{1}{2}\left(\mathbf{z} - \boldsymbol\mu_{t_{i}}\right)^{T} \boldsymbol\Sigma_{t_{i}}^{-1}\left( \mathbf{z} - \boldsymbol\mu_{t_{i}}\right)\right\rbrace.
\end{align}
The parameters $\boldsymbol\mu_{t_{i}}$ and $\boldsymbol\Sigma_{t_{i}}$ can be obtained from the spatio-temporal data, and the bandwidth $\lambda$ can be estimated using the mean integrated squared error of the kernel density estimator in Equation \eqref{Mo'a2021} with reference to the density function in Equation \eqref{Mo'a2023}.

\subsection{Stochastic movement processes}
The relocation data set $\w$ for wild animal species can be viewed as a realization of a stochastic process, which is a sequence of time-indexed random variables $\mathbf{z}\left(t\right)$  that can be correlated in time. The mean location $\boldsymbol\mu\left( t\right) $ of  a nonstationary stochastic movement process $\mathbf{z}\left(t\right)$ can reveal shifts in mean location over time. In our case, such shifts could correspond to movement behaviours such as movement within home ranges (or territories). In such cases, we may obtain a complete analysis by considering the autocorrelation function or, equivalently, the \emph{semivariance function} of the stochastic movement process.  The semivariance function measures spatial distance variability between locations of the wild animal. \\

Most time-series methods for estimating semivariance functions assume stationarity, implying that the statistical properties of a stochastic process remain constant over time. However, ecological systems exhibiting daily, seasonal, or annual cycles violate this assumption, requiring a nonstationary approach when analyzing wild animal movement data. In nonstationary processes involving wild animal movement, the semivariance $\gamma(t_1, t_2)$ between locations $\mathbf{z}(t_1)$ and $\mathbf{z}(t_2)$  at times $t_{1}$ and $t_{2}$ can be influenced not only by the lag $\tau = t_{2}-t_{1}$ but also by the absolute times. To address this, we calculate the average time $\bar{t} = \left( t_{1} + t_{2}\right)/2$ for each pair of wild animal locations.  In the context of individual wild animal movement analysis, reliability in semivariance estimates is confined to the lag range $t_{d} < \tau \ll T$, where $t_{d}$ is the sampling time step, and $T$ is the sampling duration. According to \cite{fleming2014fine}, avoiding direct estimates of the mean and variance provides an unbiased estimator of the semivariance function. That is to say that the  semivariance function has unbiased estimators, while the  autocorrelation function does not. Using the described nonstationary approach, the method-of-moments estimator for the semivariance function in evenly sampled data can be expressed as follows.
\begin{align}\label{Semivariance}
\hat{\gamma}\left(\tau\right)  = \displaystyle\frac{1}{2n\left(\tau\right) }\sum_{\bar{t}}\left[ \mathbf{z}\left(\frac{\bar{t} + \tau}{2}\right)-\mathbf{z}\left(\frac{\bar{t} - \tau}{2}\right)\right]^{2}.
\end{align}
Here, $n\left(\tau\right)$ represents the number of wild animal location pairs separated by lag $\tau$, and $\hat{\gamma}\left(\tau\right)$ is obtained by summing over the time average value $\bar{t}$ for the lag $\tau$. For more details, refer to the Online Resources and the reference therein. With increasing lag, fewer wild animal locations are available for semivariance estimation. Thus, more reliable semivariance estimates can be obtained from shorter lags in evenly sampled relocation data. That is, fine-scale features at smaller lags are as important as, or more important than, larger-scale features at larger lags in driving more reliable semivariance estimates. \\

Empirical semivariance, obtained from Equation \eqref{Semivariance}, can be plotted against the time lag between relocations, offering an unbiased estimation of the autocorrelation structure in the relocation data. For a range-resident wild animal, the semivariance of its relocation data should eventually reach an asymptote proportional to its home range. If the semivariance does not approach an asymptote with increasing time lag, the relocation data may be unsuitable for home range analysis \citep{calabrese2016ctmm}.\\

Closer-in-time wild animal locations exhibit greater similarity than distant ones. Directional persistence in wild animal motion results in autocorrelated velocities, indicating that the direction and speed of a wild  animal at one point in time correlate with those at other points. \cite{calabrese2016ctmm} suggest that position autocorrelation, velocity autocorrelation, and range residency are useful for classifying continuous-time stochastic processes (or movement models). The \emph{iid} process models wild animal movement, assuming uncorrelated locations and velocities, a simplification in traditional home range estimation. In contrast, the \emph{Brownian process} lacks velocity autocorrelation, limiting its ability to capture diverse movement patterns \citep{turchin1998quantitative}. On the other hand, the Ornstein-Uhlenbeck process, with mean-reverting behaviour and attraction to the mean, suits data without directional persistence but with confined space use. The \emph{Ornstein-Uhlenbeck with foraging process} is effective for analyzing wild animal relocation data with correlated velocities and limited space use. It is also applicable to diverse datasets with fine sampling for exploring velocity autocorrelation and prolonged range residence. For further details, refer to the Online Resource and the references therein. Wild animal relocation data can also be isotropic or anisotropic; with isotropic processes lacking directional dependence and anisotropic processes exhibiting variation based on the direction of interest.\\

Empirical semivariance plots derived from Equation \eqref{Semivariance} offer insights into wild animal movement behaviour. They assist in evaluating theoretical semivariance models for \emph{iid}, Ornstein-Uhlenbeck, and Ornstein-Uhlenbeck with foraging processes. These theoretical models are fitted to empirical data using maximum likelihood, and the fitted models are compared using the Akaike information criterion (AIC) to determine the best model. The selected model can then be applied for estimating home ranges through autocorrelated kernel density estimation.

\subsection{Spatial interaction modelling}
This section outlines modelling wild animal relocation data as multitype (or marked) point patterns, where the multitype spatial point patterns (or relocation data of wild animals) encompass both the spatial locations (or points) of the wild animals and their identities.
\subsubsection{Multitype spatial point pattern } 
When we say an {\em event} in a given geographical (or spatial) region $\W\subseteq\R^2$,  we mean a wild animal location (GPS position) in $\W$  during a given time period.  The total collection of events be referred to as the {\em wild animal relocation data set}, which can be thought of as a collection of locations $\{\mathbf{z}_1, \mathbf{z}_2, \cdots, \mathbf{z}_n\}\subseteq \W$, $n\geq0$, and  a {\em mark} $m_j$, which is attached to each location $\mathbf{z}_j$, $j =1, 2, \cdots, n$.  In advance, we do not know $n$, i.e.,~the number of wild animal locations within $\W$  during the time period in question. Such data set $\mathbfcal{Z} = \{(\mathbf{z}_{1}, m_{1}), (\mathbf{z}_{2}, m_{2}), \cdots,(\mathbf{z}_{n},m_{n})\}\subseteq \W\times \mathcal{M}$, where $\mathcal{M}$ is the set of marks, is most naturally classified as a {\em marked point pattern} \citep{baddeley2015spatial}.  When the mark space is discrete, say,  $\mathcal{M} =  \left\lbrace 1, 2, \cdots, k\right\rbrace$, $k > 1$, we say that  $\mathbfcal{Z}$ is a \emph{multitype point pattern} and we may split $\mathbfcal{Z}$ into the marginal (or purely spatial) point processes $\mathbfcal{Z}_{i} = \left\lbrace  \mathbf{z}_{j}: (\mathbf{z}_{j},  m_{j}) \in \mathbfcal{Z},  m_{j} = i\right\rbrace$, $i = 1, 2, \cdots, k.$  This collection may formally be represented by the vector $\left(\mathbfcal{Z}_{1},  \mathbfcal{Z}_{2}, \cdots, \mathbfcal{Z}_{k}\right)$, which may be referred to as a \emph{multivariate point process}.  In a recent study, \cite{bayisa2023regularised} used this approach to model real-world problem.

\subsubsection{Summary functions for pairs of types} \label{Summary2080}
We assume that the spatial point process $\mathbfcal{Z}_{i}$ corresponding to a wild  animal species (or type) $i$ is a nonstationary process. When we say a nonstationary process, we mean the probability distribution of the point process is not invariant under translation. Intuitively, we mean that the relocation of the wild animal in $\W\subseteq\R^{2}$ does not look the same from every angle in terms of relocation density and intra-relocation interactions.\\

\cite{moller2003statistical} proposed  an inhomogeneous  cross-type $K$-function $K^{ij}_{inhom}\left( r\right)$ to describe the proportional of points of type (or wild animal species) $j$ seen within distance $r \ge 0$ of a typical point of type $i$. It is obtained by weighting the expected number of type $j$ points  lying within a distance $r \ge 0 $ of a typical type $i$  point by the intensity of type $j$  points. That is,
\begin{align}\label{Kfun}
K^{ij}_{inhom}\left(r\right) = \E\left[\sum_{\mathbf{z}\in\mathbfcal{Z}_{j}}\frac{1_{\left\lbrace \norm{\mathbf{z}-\mathbf{v}}\le r\right\rbrace}}{\lambda\left(\mathbf{z} \right)}\;\big|\; \mathbf{v}\in\mathbfcal{Z}_{i}\right],
\end{align} 
where the distance $r \ge 0$ and $1_{\left\lbrace \cdot\right\rbrace} $ is an indicator function.  The estimator of the $K^{ij}_{inhom}\left(r\right)$ that involves edge correction can be found in the Online Resource and the reference therein.  Inhomogeneous  cross-type $L$-function can be obtained from Equation \eqref{Kfun} using the standard relationship $L^{ij}_{inhom}\left(r\right) = 	\sqrt{K^{ij}_{inhom}\left(r\right)/\pi}$. Notice that when type $j$ points are mostly found in the $r$-neighbourhood of type $i$ points, $K^{ij}_{inhom}(r)$ takes values greater than $\pi r^2$, which is the value of $K^{ij}_{inhom}(r)$ when the spatial points of components (or wild animal species) $i$ and $j$ are independent (or when there is no spatial interaction between types $i$ and $j$). This is equivalent to saying that when type $j$ points  are mostly found in the $r$-neighbourhood of  type $i$ points, $L^{ij}_{inhom}\left(r\right)$ takes values greater than $r$, which is the value of $L^{ij}_{inhom}\left(r\right)$ in the absence of spatial interaction (or independent components).  The transformation from $K^{ij}_{inhom}\left(r\right)$ to $L^{ij}_{inhom}\left(r\right)$  approximately stabilizes the variance of the estimate \citep{baddeley2015spatial}. \\

The empty-space function  $F_{j}\left(r\right)$  for type $j$ points  can be defined as the cumulative distribution function of the distance from an arbitrary location, say, $\mathbf{v}$, to the nearest type $j$ point.  It  represents the probability of finding a type $j$ point within a distance $r \ge 0$ of an arbitrary point $\mathbf{v}$ \citep{van1999indices}. Let $B\left(\mathbf{0}, r \right)$ denote a closed ball with radius $r$ centered at the origin $\mathbf{0}$ and  $\mathbfcal{Z}$ be an arbitrary spatial point process. One minus the empty-space function  can be given by 
\begin{align}\label{Mo'aa2022D}
1-F\left(r\right)   = 	\P\left(\mathbfcal{Z}\cap B\left(\mathbf{0}, r \right) = \emptyset\right)  =  \E\left[\displaystyle\prod_{\mathbf{z}\in \mathbfcal{Z}}1_{\displaystyle\left\lbrace \mathbf{z}\;\notin\; B\left(\mathbf{0}, r \right)\right\rbrace } \right], 
\end{align} 
which is the probability that there is no point of the point process $\mathbfcal{Z}$ within the ball $B\left(\mathbf{0}, r \right)$.\\

The cross-type nearest-neighbour function $G_{ij}$ represents the cumulative distribution function of the distance from a type $i$ point to the nearest type $j$ point.   It can be understood as the probability that the distance from an arbitrary type  $i$ point to the nearest type $j$  point is at most $r$ \citep{van1999indices}. The cross-type $G_{ij}$ measures the association between types (or same-species wild animals) $i$ and $j$.  For an arbitrary spatial point process $\mathbfcal{Z}$, the nearest-neighbour distance distribution function $G\left(r\right)$ can also be expressed as follows.
\begin{align}
1-G\left(r\right)   &= 	\P\left(\mathbfcal{Z}\cap B\left(\mathbf{0}, r\right) \backslash \left\lbrace \mathbf{0}\right\rbrace = \emptyset\;\big|\;\mathbf{0}\in \mathbfcal{Z}\right),\label{Mo'aa2023A}\\&  =  \E\left[\displaystyle\prod_{\mathbf{z}\in \mathbfcal{Z}\backslash\left\lbrace \mathbf{0}\right\rbrace }1_{\displaystyle\left\lbrace \mathbf{z}\;\notin \;B\left(\mathbf{0}, r \right)\;\big|\;\mathbf{0}\;\in\; \mathbfcal{Z}\right\rbrace }\right], \label{Mo'aa2023B}
\end{align} 
which represents the conditional probability that there is no additional point within the ball $B\left(\mathbf{0}, r \right)$,  given that $\mathbfcal{Z}$ already has a point at the origin $\mathbf{0}$. The probability-generating functional can be used to re-express the summary functions, see the details in \cite{daley2007introduction}. A probability-generating functional $\mathscr{D}$  for a spatial point process $\mathbfcal{Z}$ on $\W$  can be defined by
\begin{align}
\mathscr{D}_{\mathbfcal{Z}}\left(u\left(\boldsymbol\xi\right) \right)  = 	\E\left[\displaystyle\prod_{\boldsymbol\xi\in \mathbfcal{Z}}u\left(\boldsymbol\xi\right)\right],
\end{align} 
where $u: \W \rightarrow \left[0, 1\right]$ is a bounded nonnegative measurable function on the space $\W$  such that $ 0 \le u\left(\boldsymbol\xi\right) \le 1$ for any point $\boldsymbol\xi\in \W$. Let $u\left(\mathbf{z} \right)  = 1_{\left\lbrace \mathbf{z}\;\in\; B\left(\mathbf{0}, r\right)\right\rbrace }$.  Based on Equation \eqref{Mo'aa2022D}, it follows that $1-F\left(r\right)$ is equal to $\mathscr{D}_{\mathbfcal{Z}}\left(1 - u\left(\mathbf{z}\right) \right)$. Let $\P^{!\mathbf{0}}$ denote the reduced Palm distribution, see the details in \cite{palm1943intensitatsschwankungen}. Then $\P^{!\mathbf{0}}$, intuitively speaking, is the conditional probability that an event, say $A$, will occur given that $\mathbf{0}\in \mathbfcal{Z}$, i.e.,  given that there is a point of the process $ \mathbfcal{Z}$ at the origin $\mathbf{0}$. Using the reduced palm distribution and based on Equation \eqref{Mo'aa2023A}, we have that $1-G\left(r\right) =  \P^{\mathbf{!0}}$. Moreover, based on Equation \eqref{Mo'aa2023B}, the generating functional  corresponding to the reduced Palm distribution $\P^{\mathbf{!0}}$ is $\mathscr{D}^{\mathbf{!0}}$. Using these notations, we have that $1-G\left(r\right)  =  \mathscr{D}^{\mathbf{!0}}_{\mathbfcal{Z}}\left(1 - u\left(\mathbf{z}\right)\right)$. If $1-F\left(r\right) \neq 0$ , the $J$-function for the homogeneous point process can be given by 
\begin{align}
J\left(r\right)  = 	\frac{1-G\left(r\right)}{1-F\left(r\right)}.
\end{align} 
Spatial point patterns can be characterized as clustered, regular, or random.  In clustered point patterns, there are small distances between neighbouring points and large empty spaces between clusters. Consequently, $1-G\left(r\right)$  is smaller than $1-F\left(r\right)$ in clustered point patterns. That is, $J$ takes values smaller than 1  for clustered patterns. On the other hand,  $1-G\left(r\right)$  is larger than $1-F\left(r\right)$ in regular point patterns. Thus, $J$ takes values greater than 1 for regular point patterns. For Poisson point processes,  $\mathscr{D}^{\mathbf{!0}}$ is the same as $\mathscr{D}$, and that leads to $J = 1$ for $r\ge 0$.\\

\cite{van2011aj} extended the summary functions ($F$, $G$, and $J$) to an inhomogeneous point process. Using the probability-generating functionals, 
\begin{align}
1-F_{inhom}\left(r\right) =    \mathscr{D}_{\mathbfcal{Z}}\left(1 - \tilde{u}_{r}\left(\mathbf{z}\right) \right) \;\;\text{and}\;\;  1-G_{inhom}\left(r\right) = \mathscr{D}^{\mathbf{!0}}_{\mathbfcal{Z}}\left(1 - \tilde{u}_{r}\left(\mathbf{z}\right)\right),
\end{align} 
where 
\begin{align}\label{Infimum}
\tilde{u}_{r}\left(\mathbf{z}\right) = \tilde{\lambda}  \frac{1_{\displaystyle\left\lbrace \mathbf{z}\;\in\; B\left(\mathbf{0}, r \right)\right\rbrace }}{\lambda\left( \mathbf{z}\right)},\;\; 0 <  \tilde{\lambda}  =  \displaystyle\inf_{\mathbf{z}}\left\lbrace \lambda\left( \mathbf{z}\right)\right\rbrace,
\end{align} 
and $\lambda\left( \mathbf{z}\right)$ is the spatial intensity that is used to take the inhomogeneity of observing points of  $\mathbfcal{Z}$ over $\W$ into account. It is worth noting that in Equation \eqref{Infimum}, the infimum is used to ensure that the function $\tilde{u}_{r}\left(\mathbf{z}\right)$ takes values in $\left[ 0, 1\right]$, as required by the definition of generating functionals \citep{daley2007introduction}.  It follows that 
\begin{align}
J_{inhom}\left(r\right)  = 	\frac{1-G_{inhom}\left(r\right)}{1-F_{inhom}\left(r\right)}, \;\; F_{inhom}\left(r\right) \ne 1.
\end{align} 
The values of $J_{inhom}\left(r\right)$ are statistically interpreted in the same way as in the homogeneous point process case. That is, $J_{inhom}\left(r\right) = 1$ indicates the absence of interaction, and values greater than one indicate regular patterns, while values less than one indicate clustering. The details of the summary functions can be seen in \cite{van1999indices}, \cite{van2011aj}, and  \cite{baddeley2015spatial}.\\

In studying intra-specific interactions, such as the interaction between pairs of same-species wild animals, we employ cross versions of summary statistics that describe associations between points of different types. \cite{van2011aj} introduced an inhomogeneous cross-type $J$-function, which \cite{cronie2016summary} thoroughly investigated. Let $\mathbfcal{Z}_{i}$ be the marginal point process that consists of the points in  $\mathbfcal{Z}$ having label $i$. Then
\begin{align}\label{Ginhomcross}
1-G^{ij}_{inhom}\left(r\right) = \E\left[\displaystyle\prod_{\mathbf{z}\in\mathbfcal{Z}_{j}} \left(1 - \tilde{u}_{j, r}\left(\mathbf{z}\right)\right)\;\big|\;\mathbf{0}\;\in\; \mathbfcal{Z}_{i}\right],
\end{align} 
where 
\begin{align}\label{Infimumm}
\tilde{u}_{j, r}\left(\mathbf{z}\right) = \tilde{\lambda}_{j}  \frac{1_{\displaystyle\left\lbrace \mathbf{z}\;\in\; B\left(\mathbf{0}, r \right)\right\rbrace }}{\lambda_{j}\left( \mathbf{z}\right)},\;\; 0 < \tilde{\lambda}_{j} = \inf_{\mathbf{z}}\left\lbrace \lambda_{j}\left( \mathbf{z}\right)  \right\rbrace.
\end{align} 
Using Equation \eqref{Ginhomcross} and  under appropriate technical conditions \citep{cronie2016summary}, the inhomogeneous cross-type $J$-function for the inhomogeneous multitype  point process $ \mathbfcal{Z}$ can be given by 
\begin{align}
J^{ij}_{inhom}\left(r\right)  = 	\frac{1-G^{ij}_{inhom}\left(r\right)}{1-F^{j}_{inhom}\left(r\right)}, 
\end{align} 
where $F^{j}_{inhom}$ is the empty space function for the inhomogeneous point process $\mathbfcal{Z}_{j}$,  $F^{j}_{inhom}\left(r \right)  \ne 1$,  $G^{ij}_{inhom}$ is the nearest-neighbour distance distribution function from a point of type $i$ to the nearest one of type $j$, and $J^{ij}_{inhom}\left(r\right)$ compares the distribution of distances from a point of type $i$ to the nearest one of type $j$ to those from an arbitrarily chosen point to the nearest one of type $j$ of the point process $\mathbfcal{Z}_{j}$, see the details in \cite{cronie2016summary}.   The inhomogeneous  cross-type	$J^{ij}_{inhom}\left(r\right)$ can be used  to assess the independence of point patterns of types $i$  and $j$. Note that  $J^{ij}_{inhom}\left(r\right)  = 1$ when the wild animals $i$ and $j$ are independent. Values  of $J^{ij}_{inhom}\left(r\right)$ less than 1 suggest that the wild animal $j$ tends to cluster around those of wild animal $i$, while values of $J^{ij}_{inhom}\left(r\right)$ greater than 1 indicate that the wild animal $j$ tends to avoid the wild animal $i$ \citep{de2022testing}.
\subsection{Modelling the intensity function of a marked point process}
The spatially varying intensity functions must be estimated in order to compute the inhomogeneous cross-type summary functions. To accomplish this, we model the spatially varying intensity function parametrically.  Let $\mathbf{Z}$  denote a realization of the marked spatial point process  $\mathbfcal{Z} = \{(\mathbf{z}_{1}, m_{1}),\cdots, (\mathbf{z}_{n},m_{n})\}\subseteq \W\times \mathcal{M}$, where $\mathbf{z}_{i} \in \W$ is a spatial location, $m_{i} \in \mathcal{M}$ is the corresponding mark (or type), and $i = 1, 2, \cdots, n$.  Let  $(\mathbf{z}, m) \in \mathbf{Z}$ and $\lambda\left( (\mathbf{z}, m), \mathbf{Z}\right)$ denote the conditional intensity of a marked point process. We interpret $\lambda\left((\mathbf{z}, m), \mathbf{Z}\right)d\mathbf{z}$ as the conditional probability of finding a point of type $m$ in an infinitesimal neighbourhood of the point $\mathbf{z}$, given that the rest of the point process coincides with $\mathbf{Z}$ \citep{baddeley2015spatial}. One can model the logarithm of the conditional intensity function $\lambda\left( (\mathbf{z}, m), \mathbf{Z}\right)$ at location $\mathbf{z} = \left(x, y\right) $ and mark  $m$ as follows.
\begin{align}\label{LinearModel}
\log\left(\lambda\left( (\mathbf{z}, m), \mathbf{Z}\right)\right) = \alpha_{m} + \beta_{m}x +  \gamma_{m}y.
\end{align} 
In this model, we consider marks as a factor, so that the trend (or model) has a separate constant value for each level of marks. Furthermore, the intensity is modelled in a loglinear form in the location with different intercepts for each wild animal species. The model can be estimated using the method of maximum likelihood via the approach of  \cite{berman1992approximating}, which  \cite{baddeley2000practical} extended to multitype point patterns. \\

Edge correction issues may arise when the process $\mathbfcal{Z}$ is unbounded and the data $\mathbf{Z}$ is a partially observed realization of $\mathbfcal{Z}$. Some points from $\mathbfcal{Z}$ may fall on the edge or outside $\W \times \mathcal{M}$. As a result, the conditional intensity $\lambda\left( (\mathbf{z}, m), \mathbf{Z}\right)$ of $\mathbfcal{Z}$ may not be fully captured, leading to systematic error in parameter estimation. To address edge effects, we use the border method or reduced sample estimator, which is fast to compute and applicable to windows of arbitrary shape \citep{Ripley1988BD}. 

\subsection{Monte Carlo envelope tests}\label{MCET2080}
Monte Carlo envelope tests are commonly employed to assess whether estimated summary functions show significant deviations from their values under the null hypothesis \citep{besag1977simple, ripley1979tests, ripley1981spatial, marriott1979barnard}. In our case, the summary functions are the inhomogeneous cross-types  $L$-and $J$-functions.  The simulations for the Monte Carlo tests are generated from the null hypothesis of independence of components. The simulations are obtained by shifting the pattern and intensity function of one type (or mark) with respect to the other, which leaves the marginal structures unchanged and affects only the interactions between types (or marks). To account for the possibility of points being moved outside the plot due to this shift, a torus correction can be implemented \citep{baddeley2015spatial, cronie2016summary, de2022testing}.  The null hypothesis explicitly states that spatial point patterns of different types are realizations of independent point processes. The procedure to test the null hypothesis is as follows. Let  $T_{obs}\left(r\right)$ be the summary function estimator for the observed point pattern at  spatial distance $r$. Let $S$ be the number of spatial point pattern simulations generated from the null hypothesis, and the estimators $T_{k}\left(r\right), \; k = 2, 3, \cdots, S+1,$ be the summary function estimators for the simulated spatial point patterns.  This can be restated as   $T_{1}\left(r\right)$  =  $T_{obs}\left(r\right)$ , $T_{2}\left(r\right)$,  $T_{3}\left(r\right)$, $\cdots$, $T_{S+1}\left(r\right)$ represent the summary function estimators for the observed spatial point pattern (the first) and the simulated spatial point patterns (the last $S$ estimators).   It is clear that $T_{k}\left(r\right)$ and  $T_{obs}\left(r\right)$ are functions of $r$, and we must obtain the estimates in order to perform the test.  The values of $r$ are assumed to be within a given range, say $\left[0, r_{max}\right]$. The ideal $r_{max}$ is determined by the range of spatial interaction. However, we usually choose $r_{max}$ based on the size of the study region, the summary function at hand, and problem-specific knowledge \citep{de2022testing}.
\subsubsection{Pointwise envelope test}
Let $r$ be a fixed distance in $\left[0, r_{max}\right]$. At the specific distance $r$ and under the null hypothesis,  the random variables $T_{obs}\left(r\right)$ , $T_{2}\left(r\right)$,  $T_{3}\left(r\right)$, $\cdots$, $T_{S+1}\left(r\right)$ are \emph{iid}.  If $T_{obs}\left(r\right)$  is randomly chosen from $\left\lbrace T_{obs}\left(r\right) , T_{2}\left(r\right),  T_{3}\left(r\right),  \cdots,  T_{S+1}\left(r\right)  \right\rbrace$,  the probability that it is the largest  or the $k$-$th$ largest for any $k\in\left\lbrace1, 2, \cdots, S+1\right\rbrace$ is $\frac{1}{S+1}.$  The two-sided Monte Carlo test with significance level $\alpha = \frac{2}{S+1}$ rejects the null hypothesis if the estimate of $T_{obs}\left(r\right)$  lies outside the range of the estimates of  $T_{k}\left(r\right), k = 2, 3, \cdots, S+1.$ That is to say,  we reject the null hypothesis when the estimate of $T_{obs}\left(r\right)$ lies outside the interval given by 
\begin{align}\label{Envelop2080}
\left( L_{env} \left( r\right) = \displaystyle\min_{k\in\left\lbrace 2, \cdots, S+1\right\rbrace}\left\lbrace T_{k}\left(r\right)\right\rbrace , \;\; U_{env}\left(r\right) =  \displaystyle\max_{k\in\left\lbrace 2, \cdots, S+1\right\rbrace}\left\lbrace T_{k}\left(r\right)\right\rbrace\right),
\end{align} 
with a significance level of $\alpha = \frac{2}{S+1}$. It can be seen in Equation \eqref{Envelop2080} that the test is a pointwise envelope  as $L_{env} \left( r\right) $ and $ U_{env}\left(r\right) $ depend on $r$. The pointwise envelope is typically plotted over the entire interval $\left[0, r_{max}\right]$ of $r$ values. It is used as a diagnostic tool to identify the ranges where the data deviates from the assumed model \citep{besag1977simple, ripley1977modelling}.\\

We would like to emphasize that pointwise envelopes are not confidence bands for the true value of the function.  Instead, they define the critical points based on a Monte Carlo simulation in a pointwise fashion \citep{ripley1981spatial}. The pointwise envelope test is constructed by choosing a fixed value of $r$ and rejecting the null hypothesis if the observed function value lies outside the envelope at this value of  $r$.
\subsubsection{Global envelope test}
When the plot of the observed data does not fall entirely within the pointwise envelopes, the null hypothesis cannot be simply rejected \citep{loosmore2006statistical}. The reason for this is that considering all r at the same time causes a multiple testing problem \citep{shaffer1995multiple}. A global envelope test with a properly chosen range of $r$ values can be used to address this multiple testing problem. It rejects the null hypothesis when 
\begin{align}\label{GEnvelop208}
G_{env}\left(T_{obs} \right)  = 
\begin{cases}
1, & \text{if}\;\; \exists \;r\in \;\left[0, r_{max}\right]\text{such that }\; T_{obs}\left(r\right) \;\notin\;\left(T_{low} \left(r \right),\; T_{up}\left( r\right) \right) \\
0, & \text{otherwise}
\end{cases}
\end{align} 
is equal to one. The lower and upper values $T_{low} \left(r \right)$ and  $T_{up}\left( r\right)$ in Equation \eqref{GEnvelop208} are determined in such a way that the  significance level of the hypothesis testing is adjusted for multiple testing.  Thus, the global envelope test  rejects the null hypothesis if the observed function $T_{obs}\left(\cdot\right)$ is not completely inside the envelope.\\

One of the global envelope tests is the \emph{maximum absolute deviation (MAD)} between the estimator of the summary function and the theoretical (or expected value) of the estimator under the null hypothesis \citep{diggle1979parameter}. The \emph{MAD} test is based on the measure given by 
\begin{align}\label{MGEnvelop208}
V_{k}  = \displaystyle\max_{r\;\in\;\left[0, r_{max}\right]\;}\abs[\bigg]{T_{k} \left(r \right)  - \E\left[T_{k} \left(r \right) \right]},
\end{align} 
where $k = 1, 2, \cdots,  S+1$, $T_{1}\left(r\right) =  T_{obs}\left(r\right)$, and  the expected value $\E\left[T\left(r \right) \right]$ is computed under the null hypothesis.  The test involves computing $V_{1}$ (equivalent to $V_{obs}$) for the observed point pattern and similarly calculating $V_{k}$ for each of the simulated spatial point patterns ($k = 2, 3, \cdots, S+1$) using Equation \eqref{MGEnvelop208}.  For two-sided hypothesis testing, we reject the null hypothesis if $V_{obs}$ is larger than $V_{max} = \max\left\lbrace V_{k}: \;k = 2, 3, \cdots,  S+1\right\rbrace$ with \emph{MAD} test significance level $\alpha = \frac{2}{S+1}$. The global envelope test based on Equation \eqref{GEnvelop208} rejects the null hypothesis at the significance level $\alpha$ when $T_{obs}\left( r\right)$  does not lie entirely within the global envelope given by
\begin{align*}
\left(  \E\left[T\left(r \right) \right] - V_{max},  \; \E\left[T\left(r \right) \right] + V_{max}\right).
\end{align*} 
Apart from \emph{MAD}-based hypothesis testing, another approach involves hypothesis testing using the \emph{integrated squared error}, denoted as $Q$. The \emph{integrated squared error} is calculated as the integral of the squared difference between the estimator $T(r)$ and its expected value $\E[T(r)]$ over the range $\left[0, r_{max}\right]$, expressed as:
\begin{align*}
Q  = \displaystyle\int_{\left[0, r_{max}\right]}\left( T\left(r \right)  - \E\left[T\left(r \right) \right]\right) ^{2}dr.
\end{align*} 
A Monte Carlo test based on $Q$ is known as the \emph{Diggle-Cressie-Loosmore-Ford (DCLF)} test \citep{loosmore2006statistical}. The power of the aforementioned tests can be maximized when the interval length of $\left[0,  r_{max}\right]$ is larger than the range of the spatial interaction.  The \emph{MAD} test  is insensitive to the choice of  $\left[0,  r_{max}\right]$.   Even though the \emph{DCLF} test is quite sensitive to the choice of  $\left[0,  r_{max}\right]$,   it is typically  more powerful than the \emph{MAD} test \citep{baddeley2015spatial}.   \cite{baddeley2014tests} recommend using the \emph{DCLF} test provided that the range of spatial interaction is known approximately.  If there is no information about the range of spatial interaction,  it is advisable to use the \emph{MAD} test, and in this case, we need to choose the interval length of  $\left[0,  r_{max}\right]$ to be as large as practicable \citep{baddeley2015spatial}.\\

The Lotwick-Silverman test  can also be used to determine whether the spatial point patterns of the wild animals are independent of each other \citep{lotwick1982methods}. To do so, we employ the  inhomogeneous cross-type $L$- and $J$-functions presented in this section. Remember that  $L_{inhom}^{ij}\left( r\right) = r$ when the same-species wild animals of  types $i$ and $j$ are independent. Its larger values suggest a positive association, while smaller values indicate a negative association between the same-species wild animals. Similarly, $J_{inhom}^{ij}\left( r\right)  = 1$ when the same-species wild animals of types $i$ and $j$ are independent. Its smaller values suggest that wild animals of type $j$ tend to cluster around wild animals of type $i$, while  values larger than 1 suggest that wild animals of type $j$ tend to avoid wild animals of type $i$.

\section{Application to black bear relocation data}\label{ApplicationToBlackBear}
This section introduces the study area and real data, outlines the simulation of spatial point patterns under the null hypothesis, and provides exploratory analysis along with the main study results.

\subsection{Study area and real data set}
We used relocation data collected from 12 GPS-collared black bears, 2015–2017, in Washington County (in the cities:  Chatom, Mount Vernon, Fruitdale, and Wagarville) and Mobile County of Alabama.  Methods used to collect data is described in \cite{leeper2021resource}. The median sampling interval for the GPS-based tracking of black bear relocation was one hour. The sampling duration of the black bear relocation tracking ranges from 197.041 to 414.334 days. Each bear was relocated between 1,955 and 8,043 times, resulting in 64,471 data points from the 12 bears.  Each relocation dataset includes the initial location of the black bear as well as the relocation of the black bear every hour on average during its movement. Figure \ref{Or213} presents the spatial relocation data for 12 black bears in two southern Alabama counties of the United States (left). It also depicts the initial (blue) and final (red) locations of a single black bear in Mount Vernon City of Mobile County (right).
\begin{figure}[h]
\centering
\includegraphics[width=0.4\textwidth, height=0.45\linewidth]{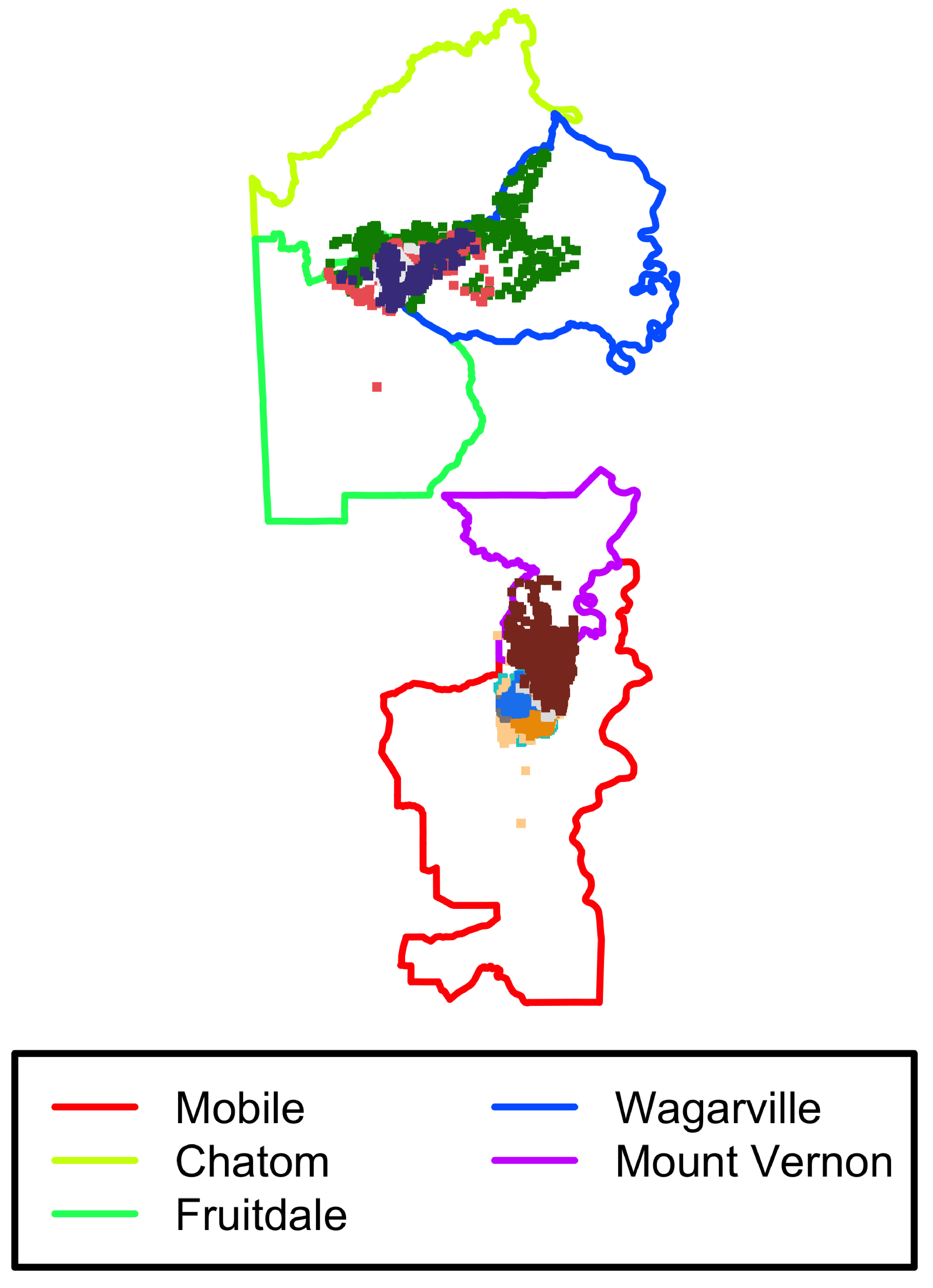}
\includegraphics[width=0.45\textwidth, height=0.4\linewidth]{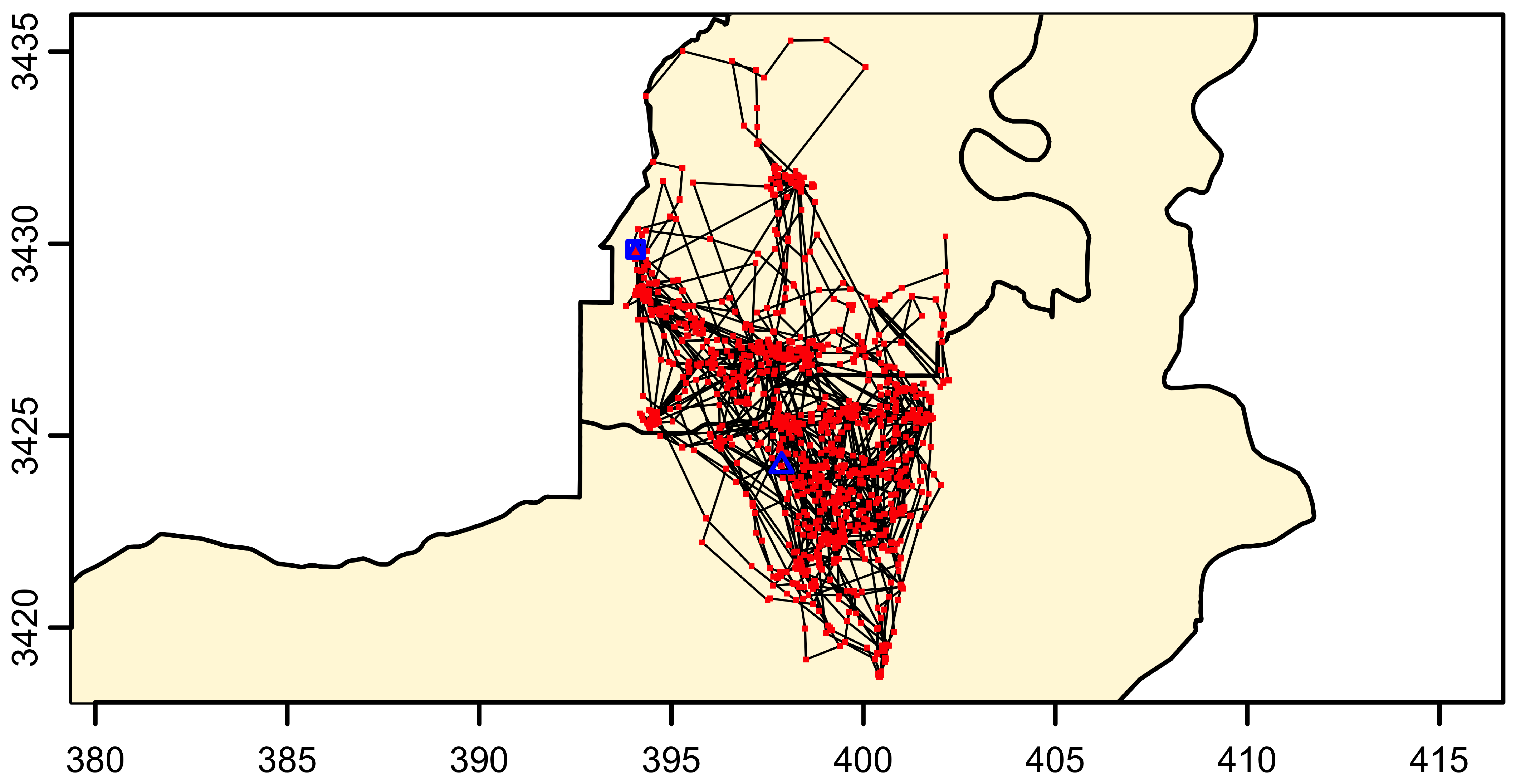}
\caption{Spatial relocation data for the twelve black bears in two southern counties of Alabama in the United States (left), as well as the initial location (triangular) and the final location (rectangular) of the movements of a black bear in Mobile County and Mount Vernon City (right).  The scale of the tick labels in the right-hand side relocation plot is in kilometers (km)}
\label{Or213}
\end{figure}
\subsection{Monte Carlo simulations}
Here, we briefly outline the simulation of spatial point patterns for the inferential methods discussed in Section \ref{MCET2080}. To test the independence of components (or the independence of the spatial point patterns) of pairs of black bears, we generated a reasonable number of spatial point patterns for each type by randomly shifting the points of their corresponding observed spatial point pattern. In other words, we divided the observed multitype spatial point pattern into subpatterns consisting of points belonging to each type, and then we randomly displaced the points within these subpatterns to generate the desired number of simulated point patterns for each type.  In the case of a stationary spatial point process, the shifted version of the observed spatial point pattern is statistically equivalent to the original. However, in the case of a nonstationary spatial point process, it is necessary to also adjust the intensity of the point process along with the shift of the point pattern. Under the null hypothesis of independence of components, we employed different shifts to each type of spatial point patterns, and shifting the pattern and intensity function of one black bear relative to the other preserves the marginal structures while exclusively affecting the inter-species interactions \citep{baddeley2015spatial}. 
\subsection{Exploratory data analysis}
The empirical semivariance plots in Figure \ref{Bearsve} depict average squared distances traveled by twelve black bears at different time lags. If these plots flatten out at larger lags, indicating range residency, then the black bear relocation data sets are suitable for home range analysis \citep{calabrese2016ctmm}. It can be seen from the semivariance plots that there are high variabilities in the relocations of black bears at larger lags. The black bears seem to leave the areas where they mostly spend time and then return after a while. However, we believe that further investigation into the sources of these variabilities is needed. \\

We also fitted the theoretical semivariance functions to the empirical semivariance estimates for subsequent statistical analyses.  The theoretical semivariance functions for \emph{iid},  \emph{OU}, and \emph{OUF} processes, as detailed in \cite{fleming2014fine},  are fitted to the empirical semivariance estimates using maximum likelihood estimation. The \emph{AIC} is employed to select the optimal model that offers a superior fit to the empirical semivariance estimates. Table \ref{table:SVM} shows the estimated theoretical semivariance functions for black bears 493, 498, and 505.  The table  presents $\Delta AIC$ values, which indicate the difference between the \emph{AIC} values of the candidate models and the minimum \emph{AIC}. Consequently, smaller $\Delta AIC$ values signify a superior model, with zero representing the optimal model.\\
\begin{table}[h]
\caption{The estimated theoretical semivariance models for black bears 493, 498, and 505}
\label{table:SVM}
\centering
\begin{tabular}{|c||l|c|}
\hline\hline
\multicolumn{1}{|c|}{Black bear} & \multicolumn{1}{|c|}{Model}& \multicolumn{1}{|c|}{$\Delta$AIC}\\\hline\hline
\multirow{3}{1em}{493} & \emph{OU} anisotropic  & 0.000 \\
& \emph{OUF} anisotropic & 2.000\\ 
& \emph{OU} & 74.070\\\hline
\multirow{3}{1em}{498} & \emph{OUF} anisotropic & 0.000\\   
&\emph{OU}  anisotropic    &93.626\\
& \emph{OUF} &  437.672\\\hline
\multirow{3}{1em}{505} & \emph{iid} anisotropic  &  0.000\\  
&\emph{OU}  anisotropic &    2.000 \\
& \emph{OUF} anisotropic &   4.000 \\\hline\hline
\end{tabular}
\end{table}

According to the \emph{AIC}, the semivariance analysis of the black bear relocation data set revealed three movement behaviours. The semivariance model for the  \emph{OUF}  process fits the empirical semivariances of black bears 498 and 501 well, while the semivariance model for the \emph{OU} process better describes the empirical semivariances of black bears  487, 488, 491, 493, 495, 503,  505 and 506. In contrast, the semivariance model for the \emph{iid} process better fits the movement behaviour of black bears  490 and  504. The best models selected for all the black bears are anisotropic models, which indicate that the black bears have a directional preference in their movements. This anisotropic property in black bear movements may be due to geographical factors, food availability, etc. In general, the empirical semivariance models for black bears 487, 488, 490, 491, 493, 495, 498,  501, and  506  tend to flatten out, and  the theoretical semivariance models also capture the movement behaviour of these bears across the lags, as shown in Figure \ref{Bearsve}. 
\begin{figure}[h]
\centering
\includegraphics[width=0.2\textwidth, height=0.15\linewidth]{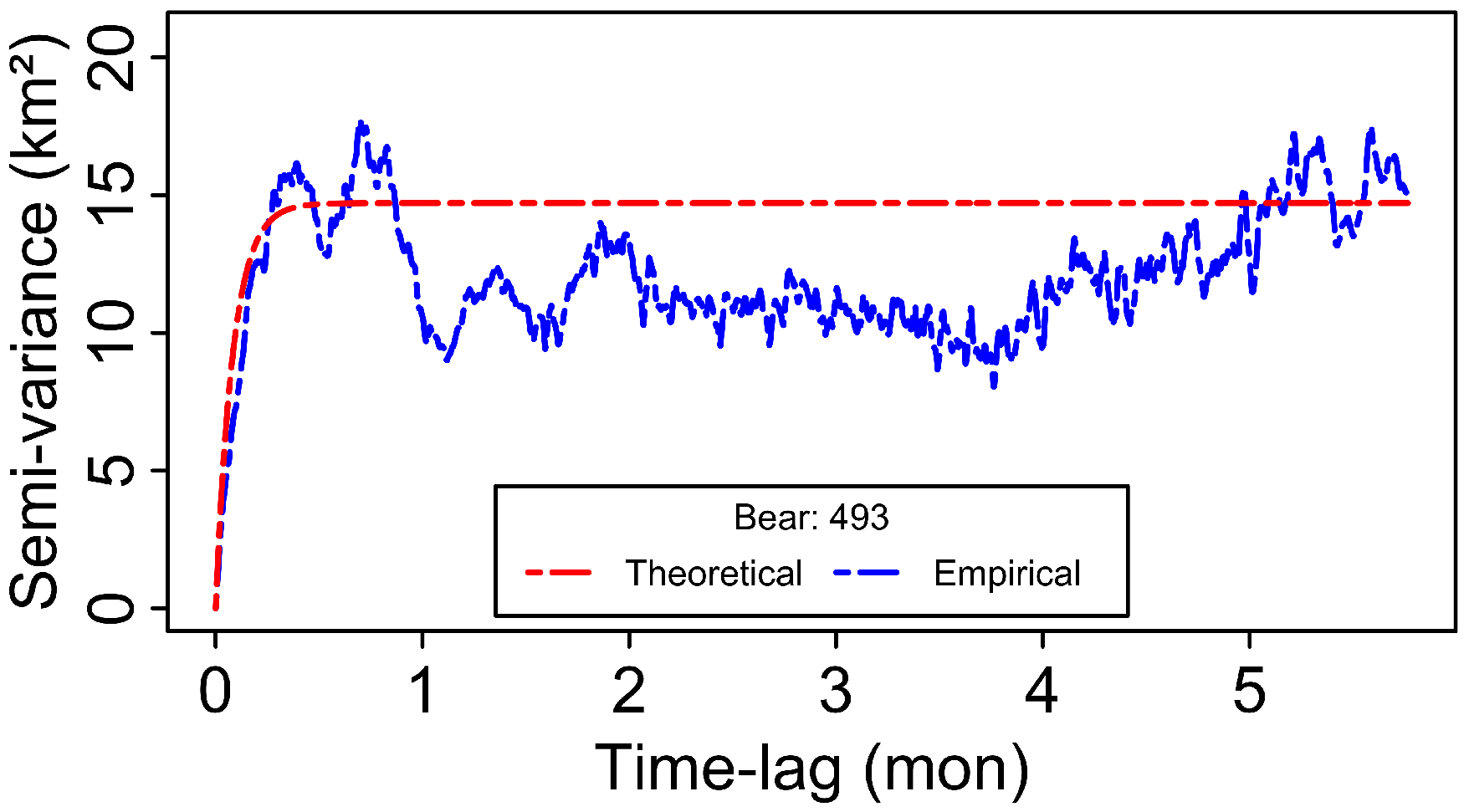}
\includegraphics[width=0.2\textwidth, height=0.15\linewidth]{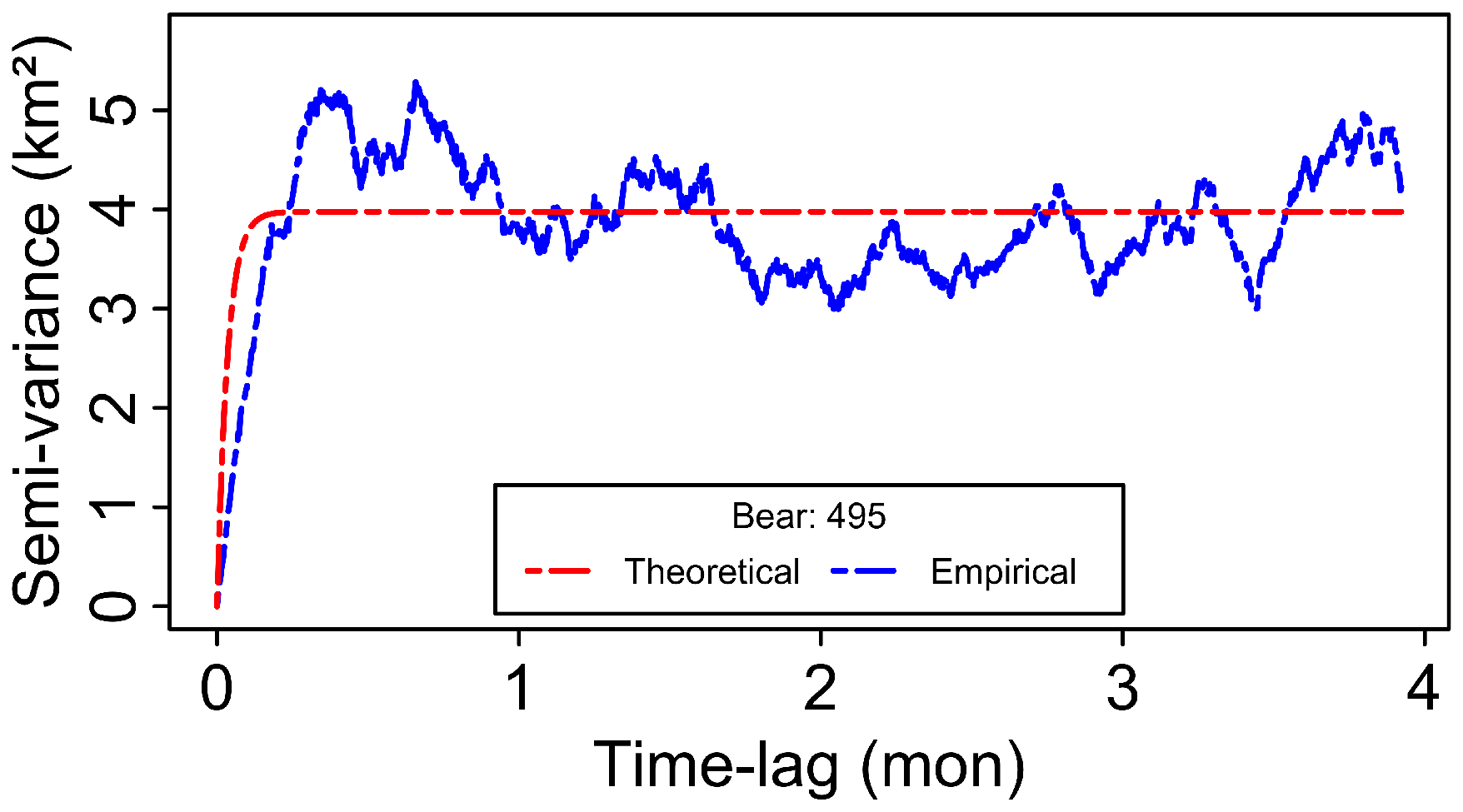}
\includegraphics[width=0.2\textwidth, height=0.15\linewidth]{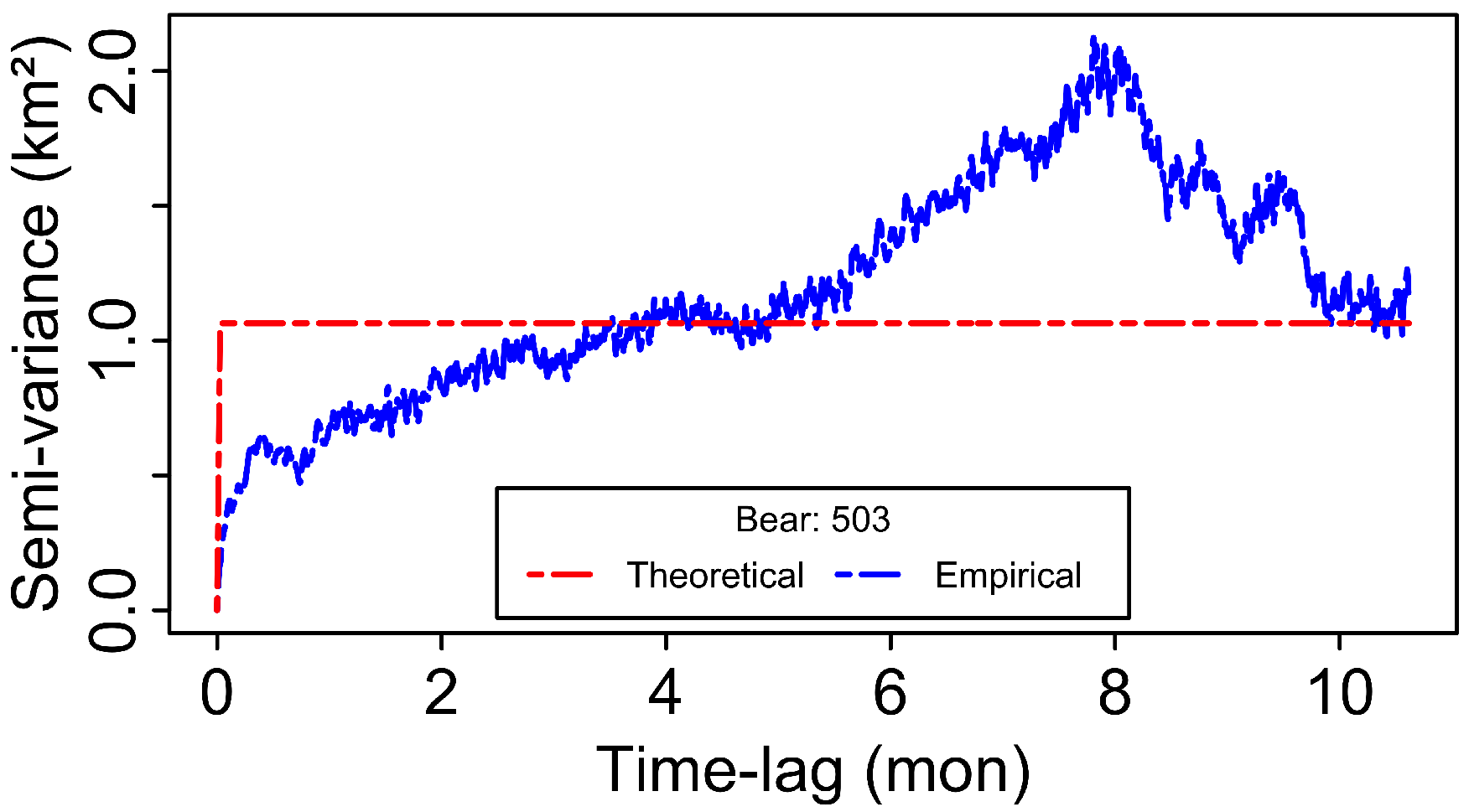}
\includegraphics[width=0.2\textwidth, height=0.15\linewidth]{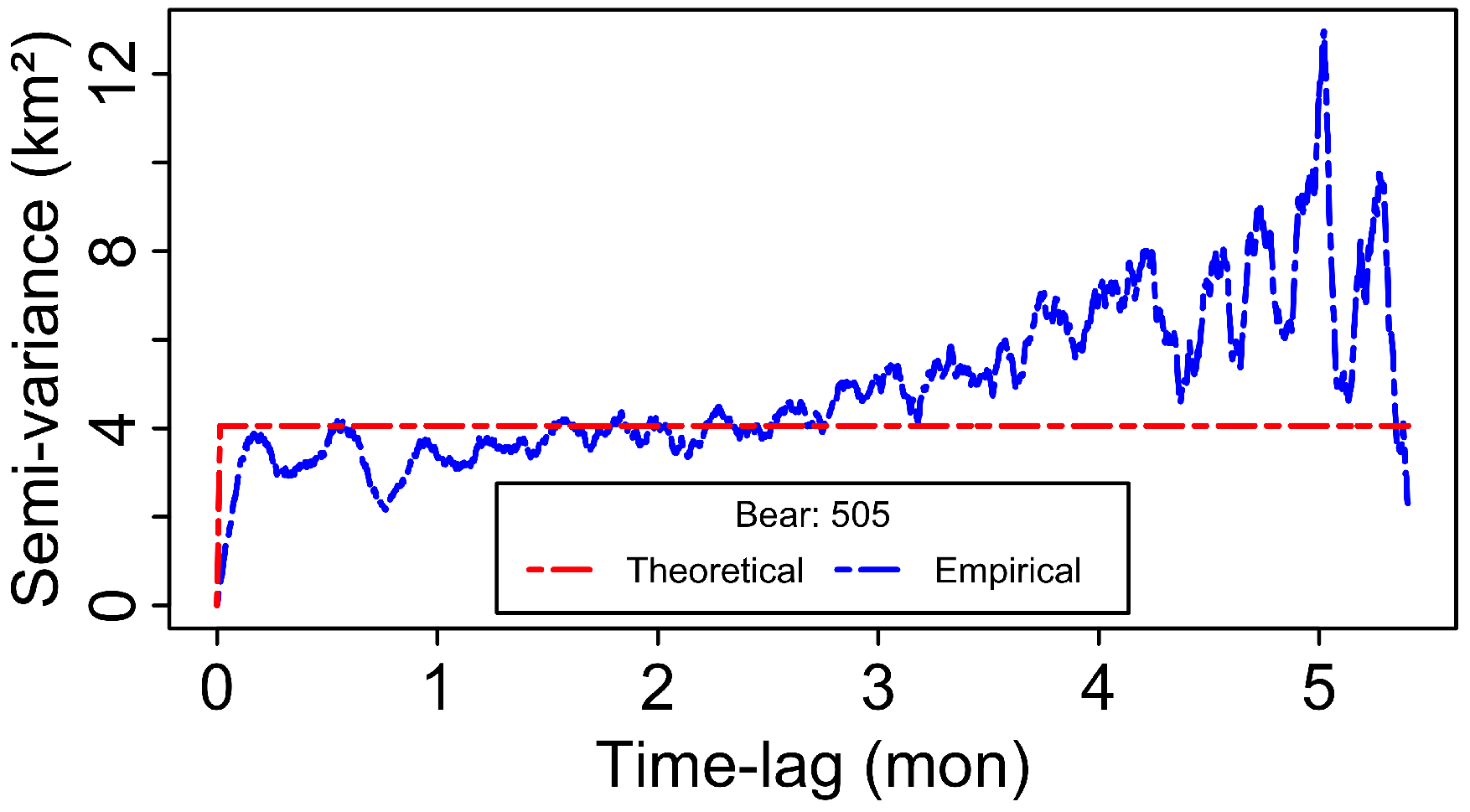}
\includegraphics[width=0.2\textwidth, height=0.15\linewidth]{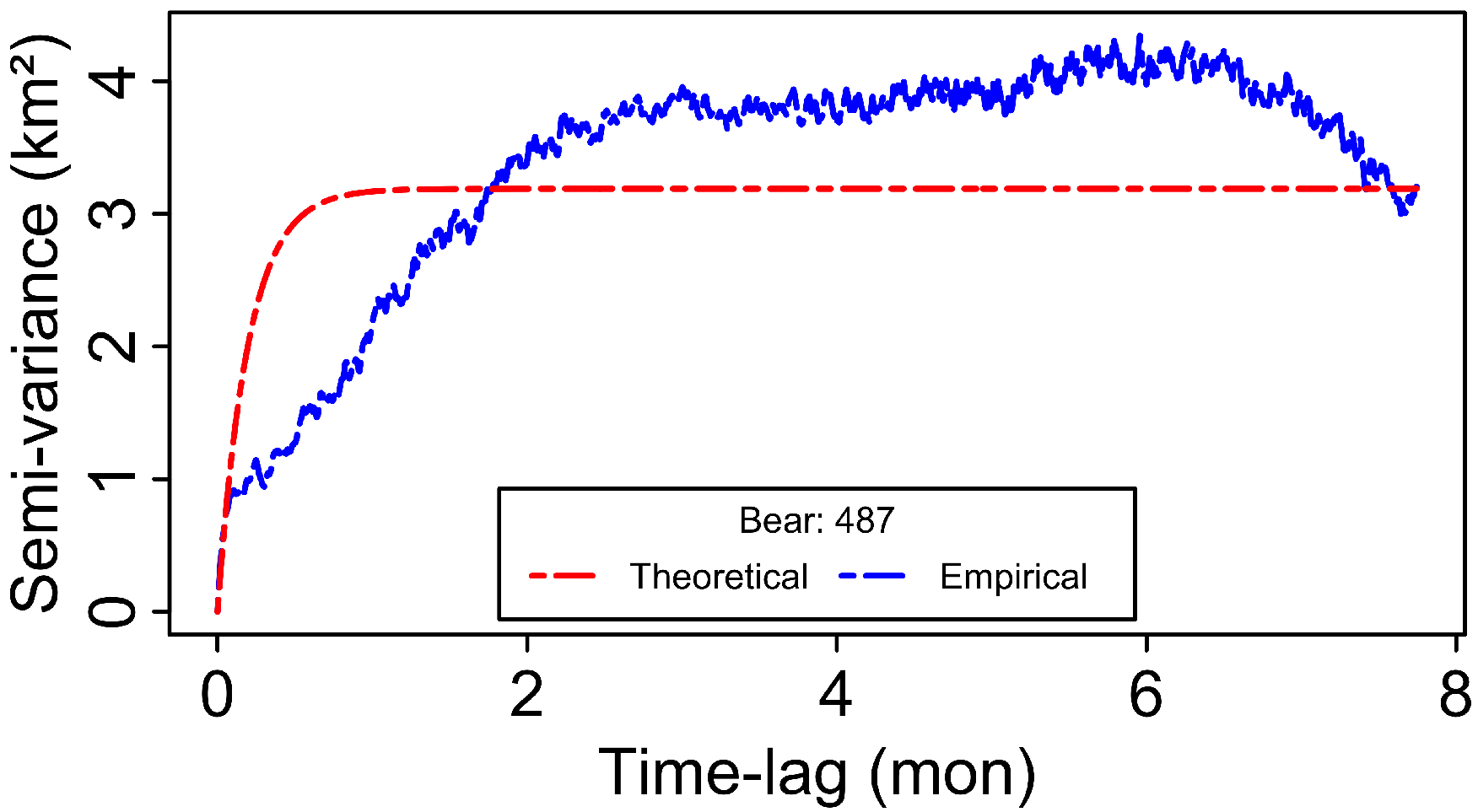}
\includegraphics[width=0.2\textwidth, height=0.15\linewidth]{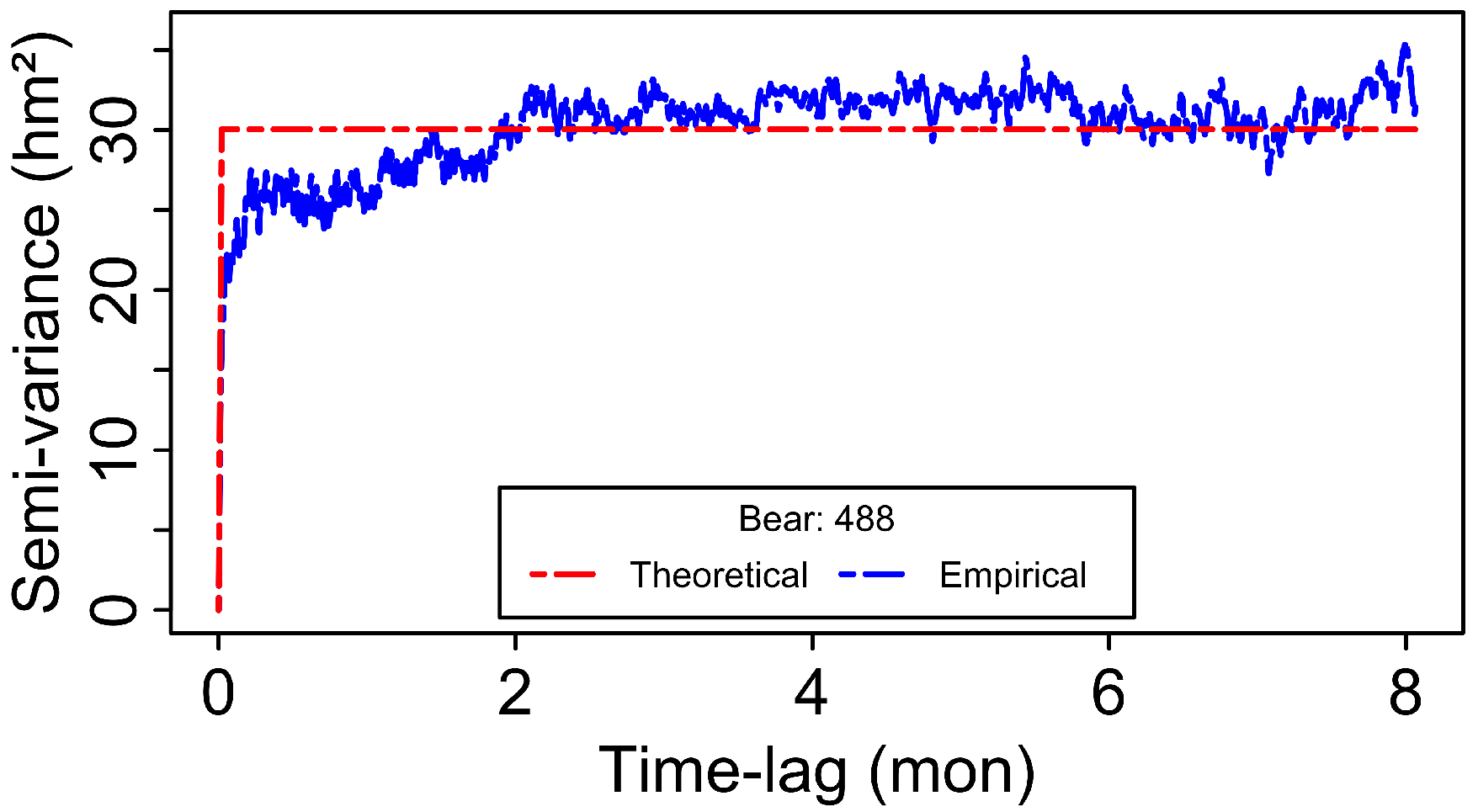}
\includegraphics[width=0.2\textwidth, height=0.15\linewidth]{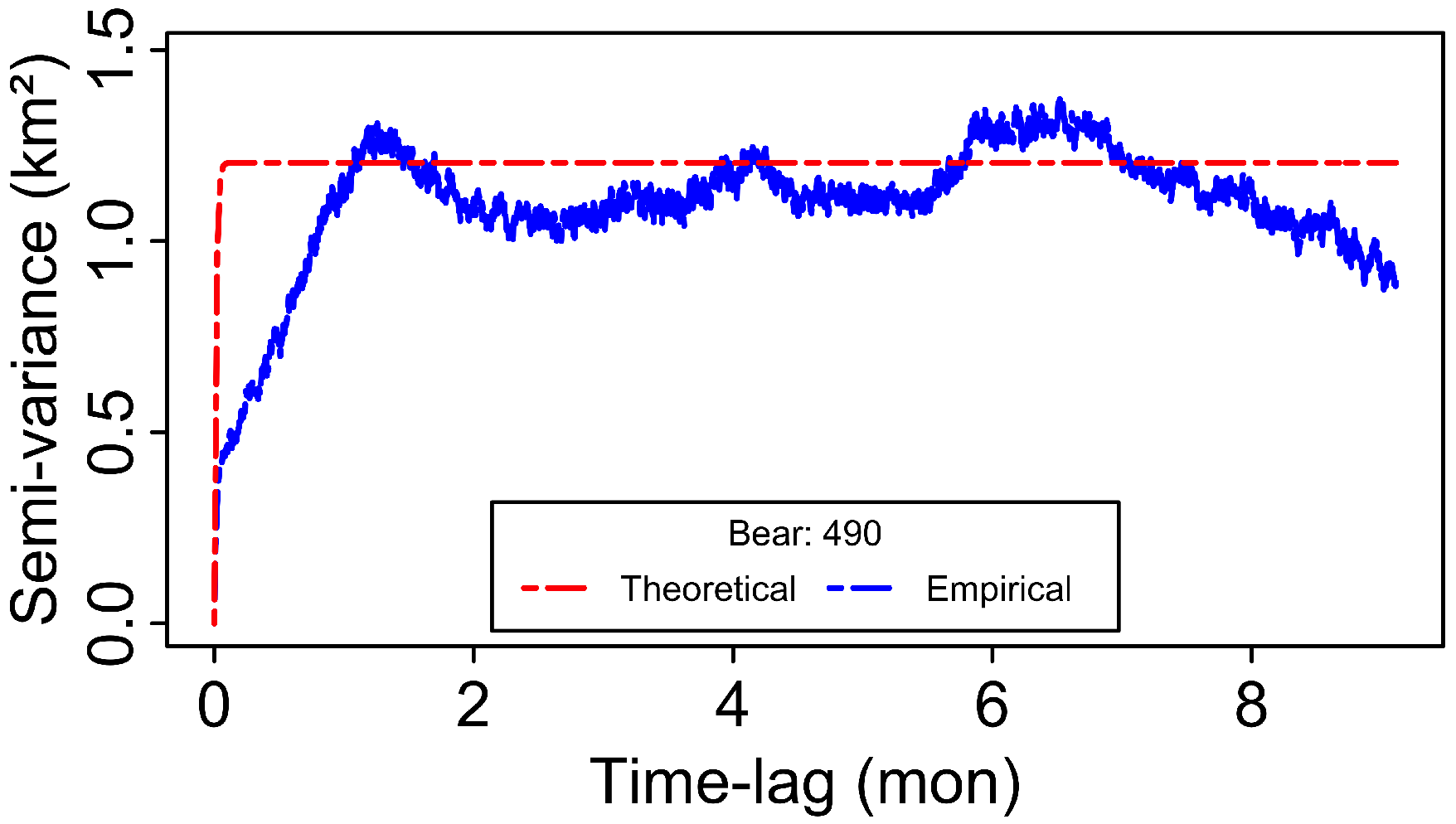}
\includegraphics[width=0.2\textwidth, height=0.15\linewidth]{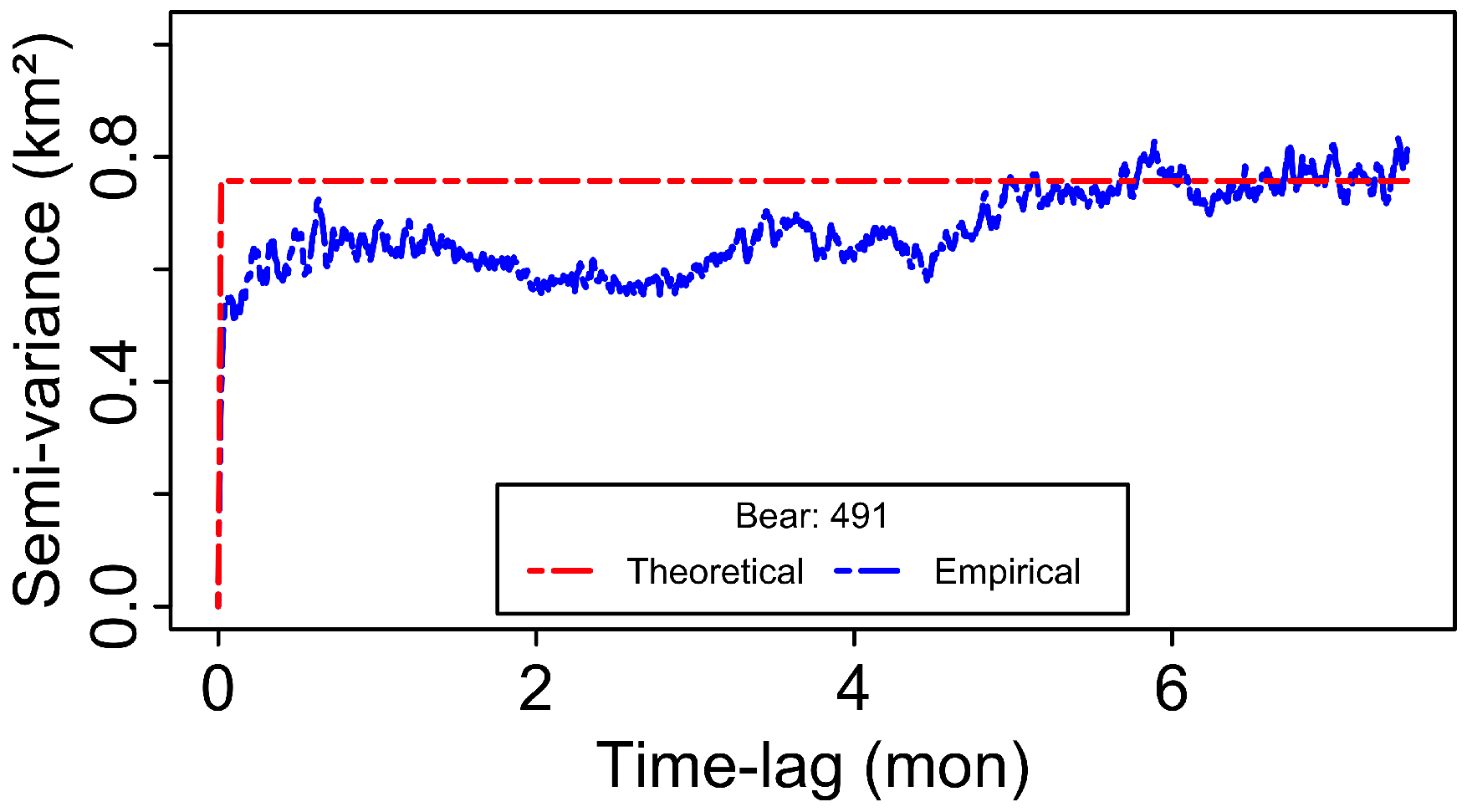}
\includegraphics[width=0.2\textwidth, height=0.15\linewidth]{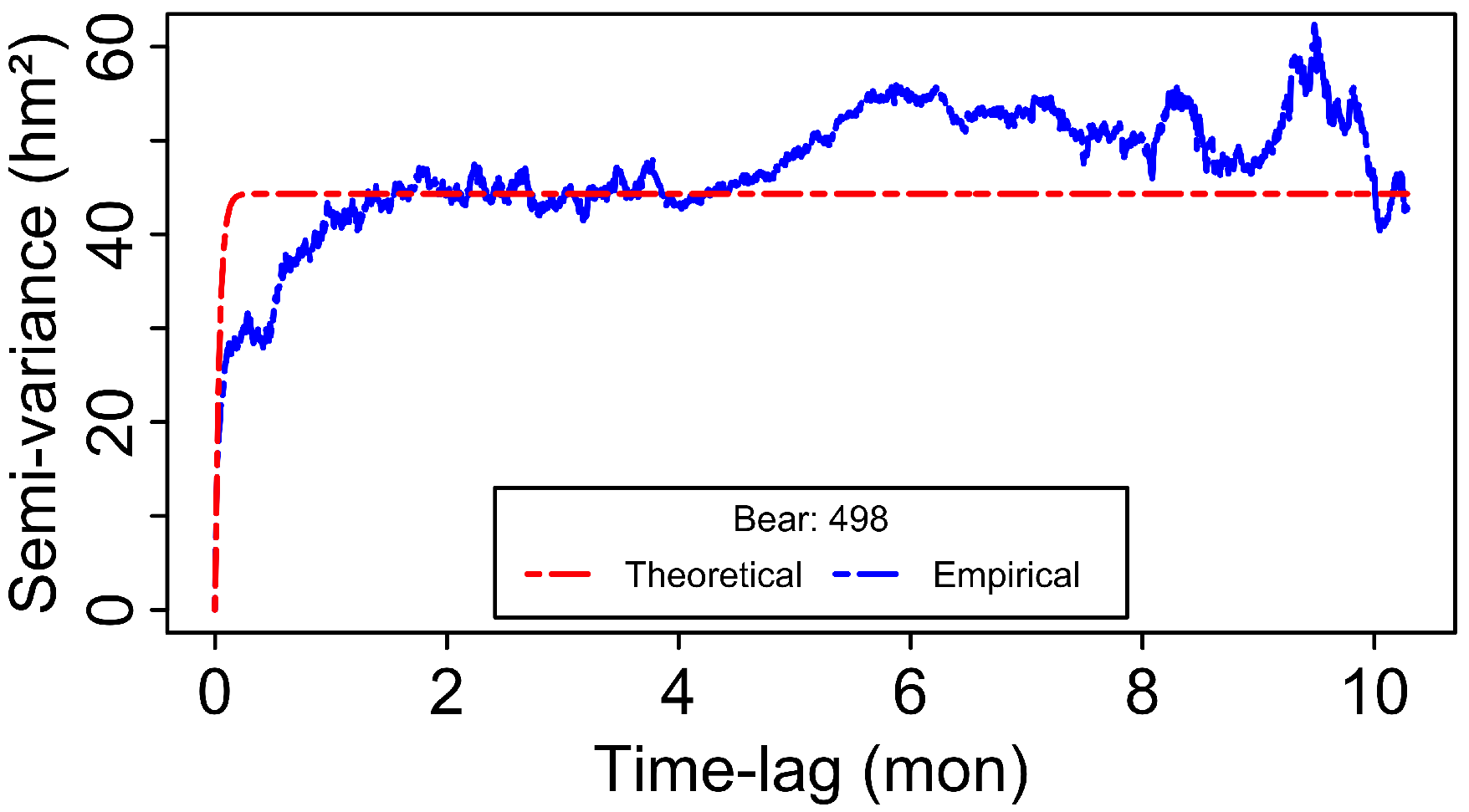}
\includegraphics[width=0.2\textwidth, height=0.15\linewidth]{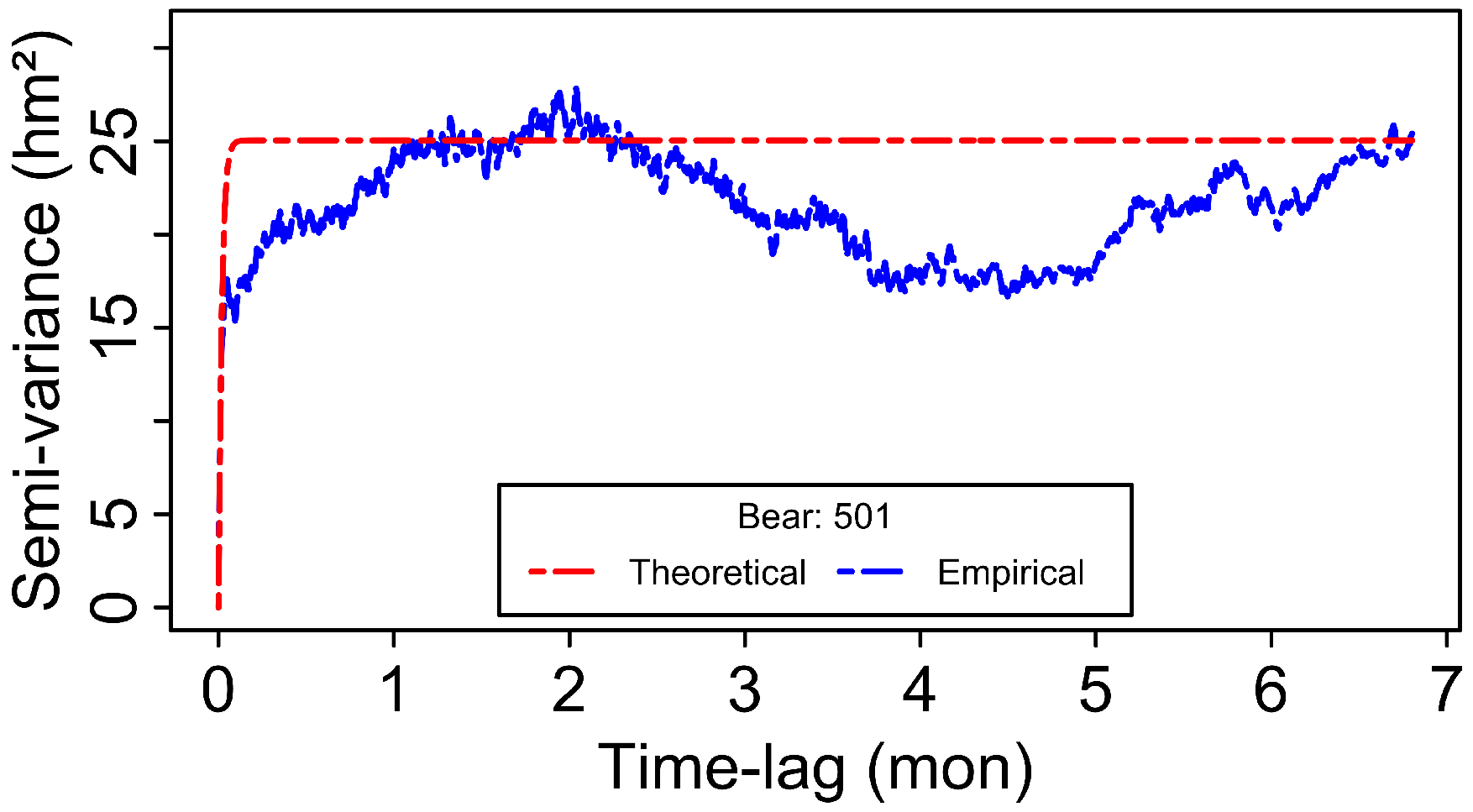}
\includegraphics[width=0.2\textwidth, height=0.15\linewidth]{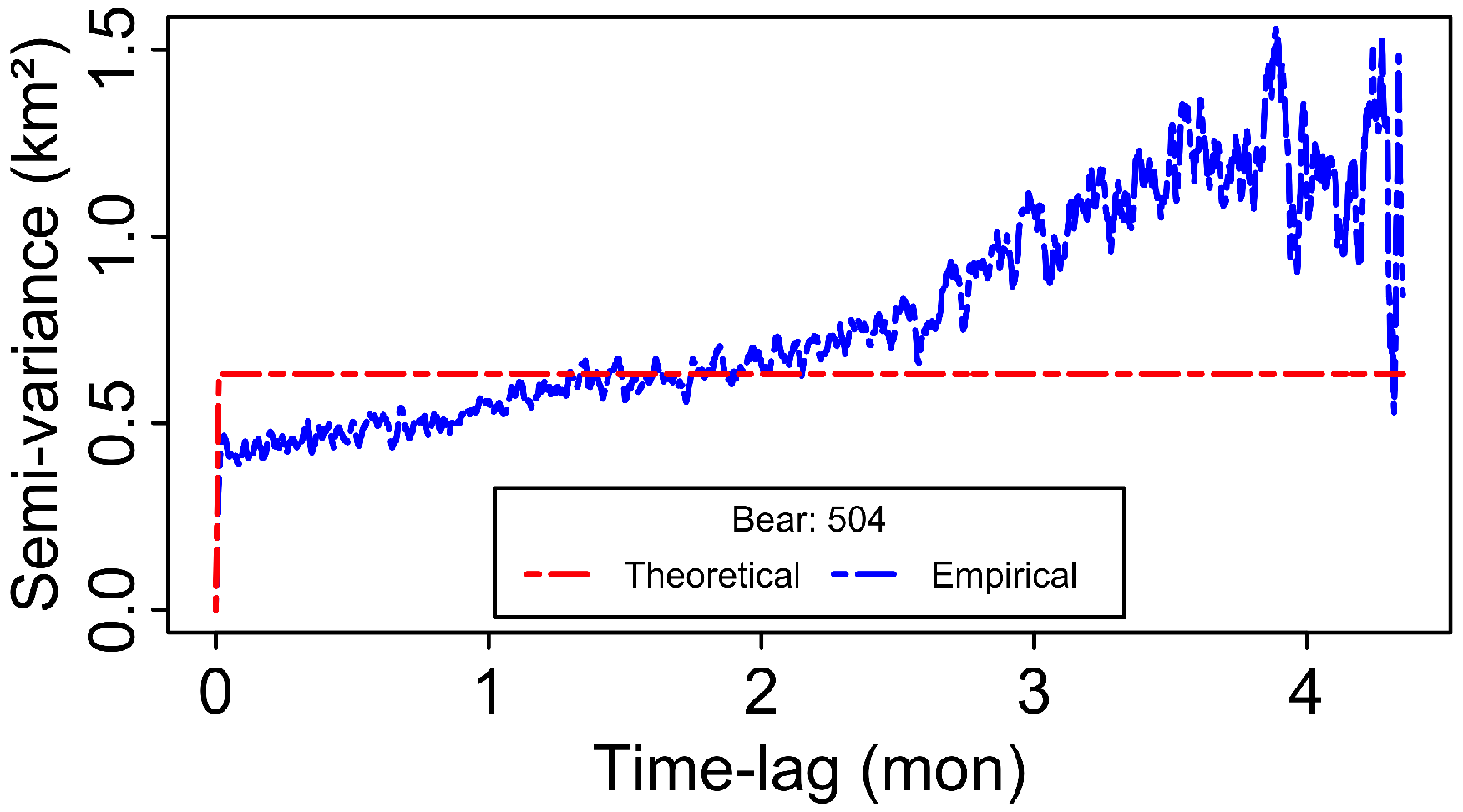}
\includegraphics[width=0.2\textwidth, height=0.15\linewidth]{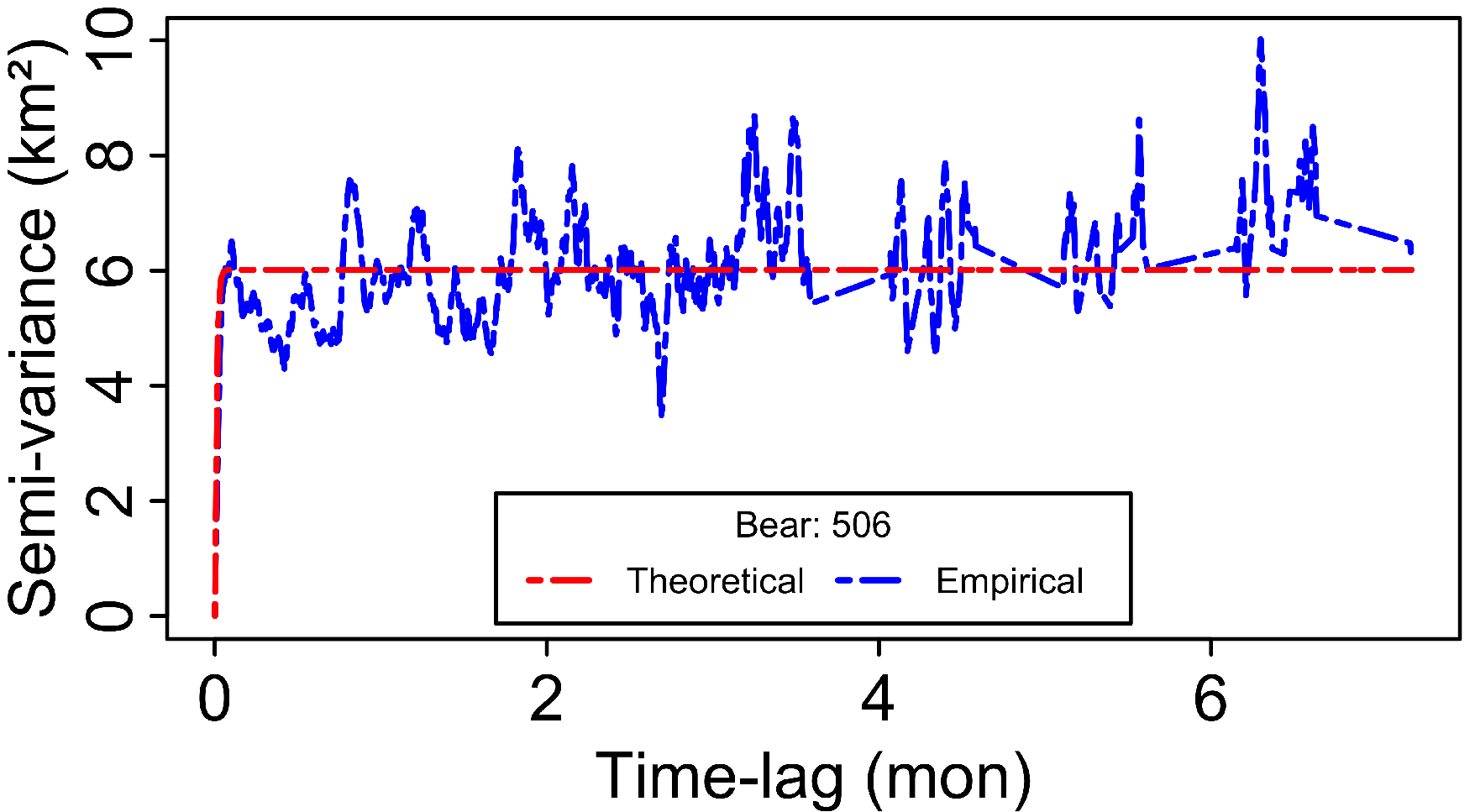}
\caption{ The fitted  theoretical semivariance models to  the empirical semivariances for the  twelve black bears in southern Alabama, USA.  The zigzag curves represent the empirical semivariances, while the other curves denote the fitted theoretical semivariance models. The $x$-axis of the plot represents time-lag in months, while the $y$-axis denotes semivariance in square kilometers (km$^{2}$) or square hectometers (hm$^{2}$)}
\label{Bearsve}
\end{figure}
It is worth noting here that black bear movement behaviour is extremely complex, as illustrated in Figure \ref{Or213}. We want to emphasize that black bears are assumed to be range residents if the empirical semivariance plots flatten out at larger lags (that is, when a range is achieved).  Based on the empirical semivariance plots in Figure \ref{Bearsve}, black bears 487, 488, 490, 491, 493, 495, 498, 501, and 506 exhibit home-range resident behaviours. We believe that the relocation data for these bears are suitable for home-range analysis. On the other hand, the empirical semivariance plots for black bears 503, 504, and 505 do not seem to flatten out, indicating that they display transient (or shifting-range) behaviours. Therefore, we excluded them from the home-range analysis.
\subsection{Home range estimation}\label{HRCA}
In this section, we present home range estimates derived from three methods: minimum convex polygon, kernel density estimation, and autocorrelated kernel density estimation.
\subsubsection{Minimum convex polygon estimation}
Minimum convex polygon estimation techniques are used to estimate the home ranges (95\%) and core home ranges (50\%) of black bears. We estimated the home ranges after removing 5\% and 50\% of relocations that are farther away from home range centroids, which are obtained by arithmetic mean of the relocations of black bears. It is widely accepted to use 95\% of relocations for estimation of home ranges \citep{powell2012home}. The data in Figure \ref{Or213} (left) show that the black bears are spatially isolated, forming two groups. As a result, we analyze the two black bear groups separately. The first group of black bears resides in Mobile County and Mount Vernon City, in southern Alabama, whereas the second group is located in Chatom, Fruitdale, and Wagarville Cities in Washington County, also in southern Alabama.  Figure \ref{Or214} presents the 95\% and 50\% home range estimates using the minimum convex polygon estimation technique.
\begin{figure}[h]
\centering
\includegraphics[width=0.4\textwidth, height=0.3\linewidth]{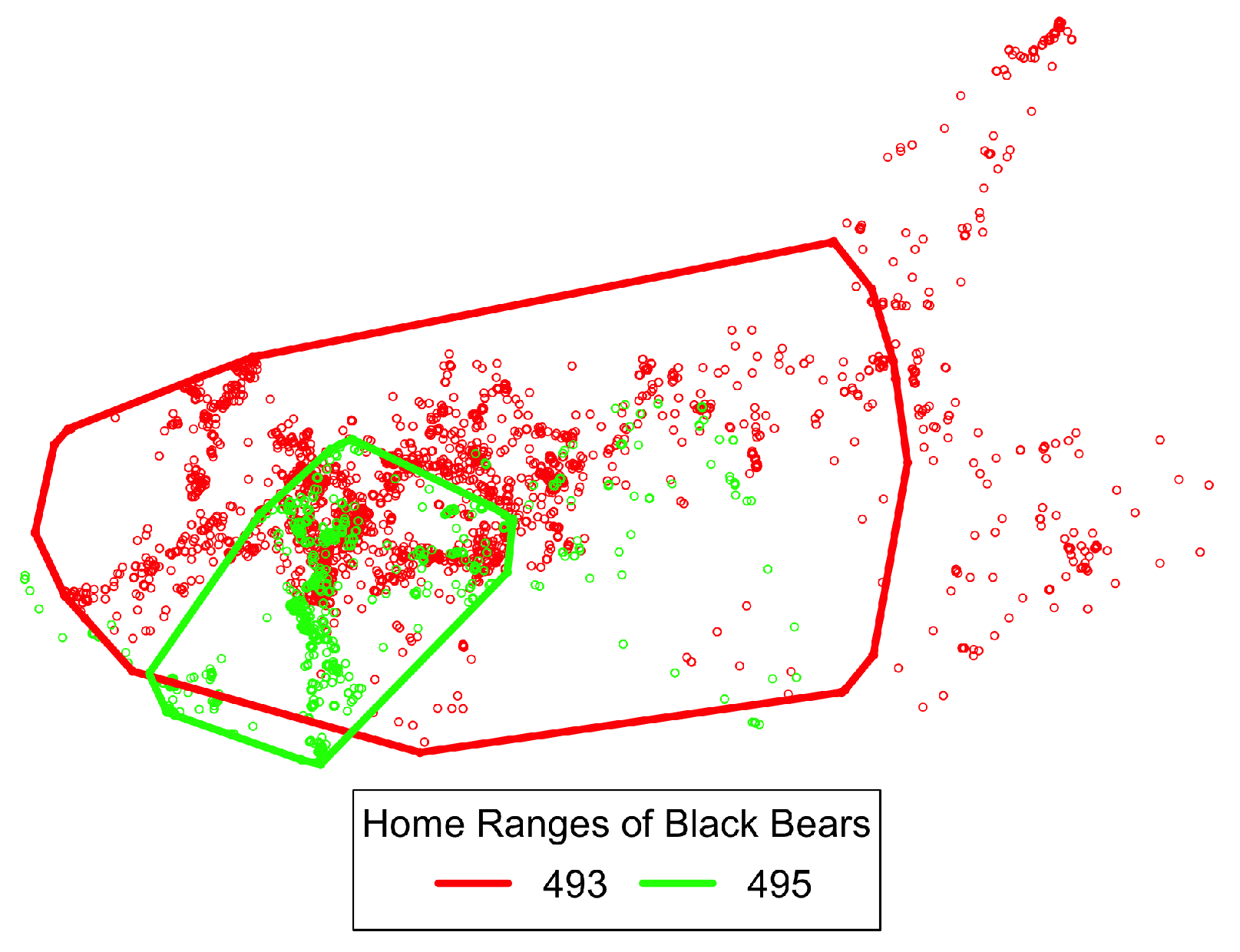}
\includegraphics[width=0.4\textwidth, height=0.3\linewidth]{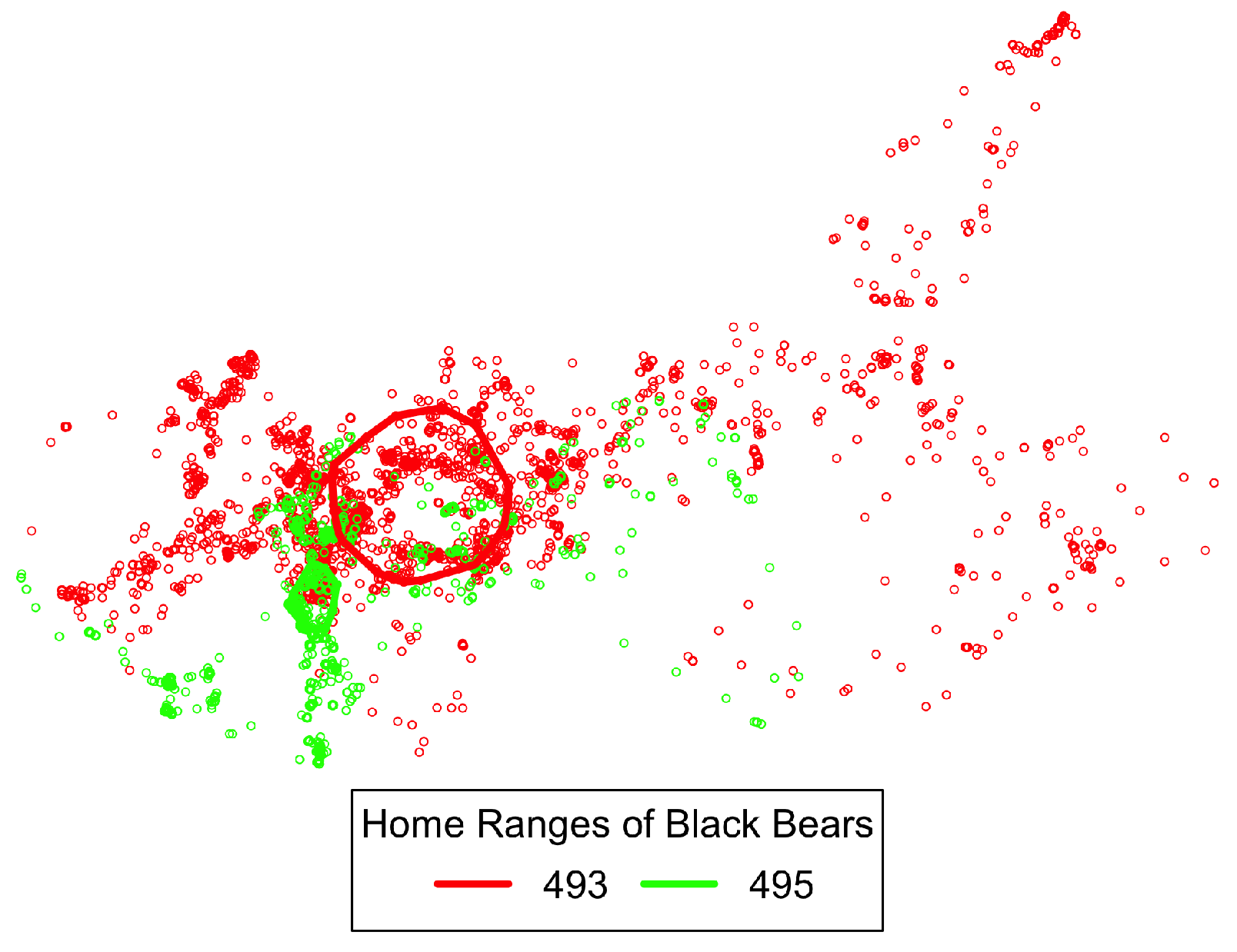}
\includegraphics[width=0.4\textwidth, height=0.5\linewidth]{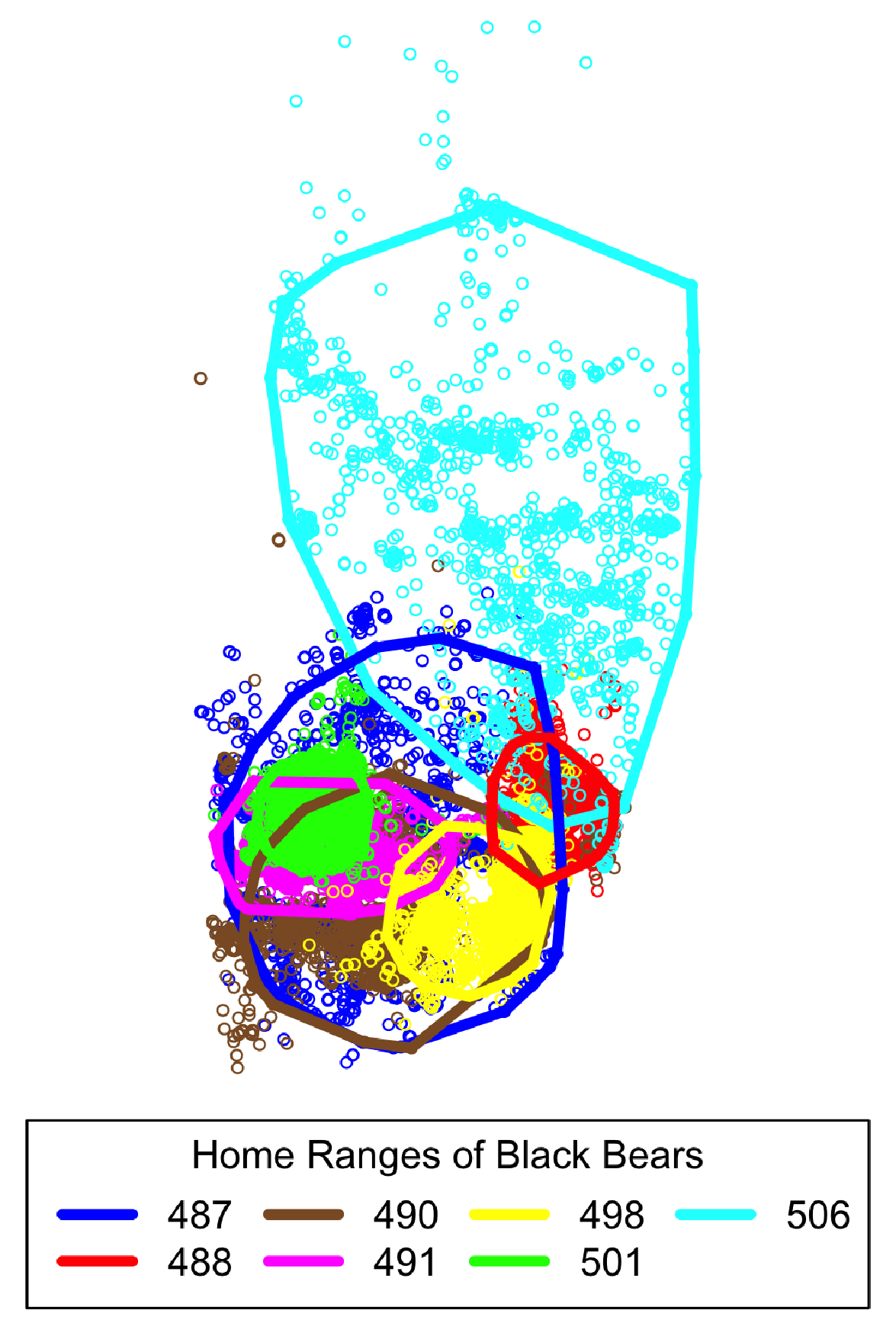}
\includegraphics[width=0.4\textwidth, height=0.5\linewidth]{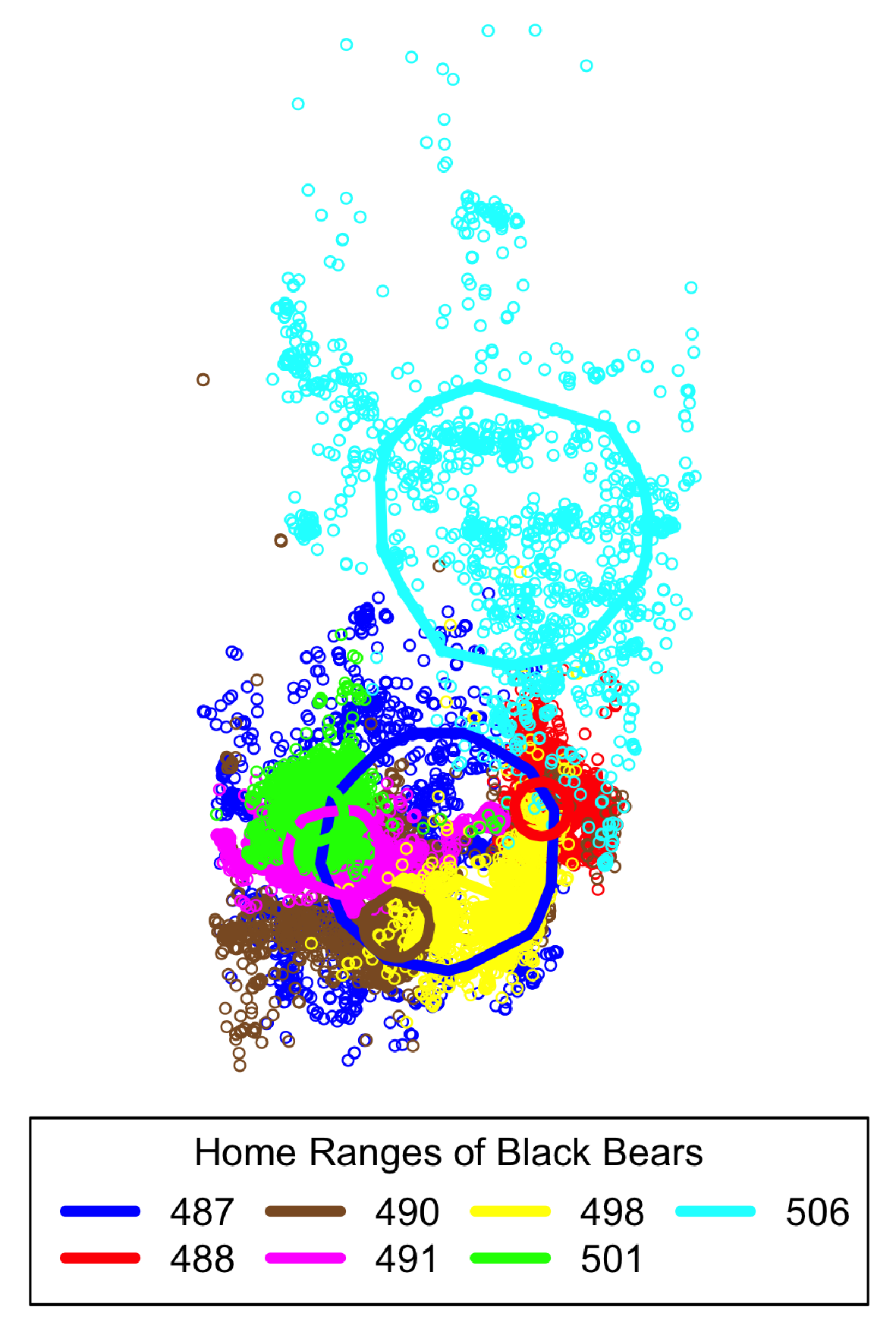}
\caption{The 95\% (left column) and 50\% (right column) home range estimates of black bears in Mount Vernon City and Mobile County (top row) and Chatom, Fruitdale, and Wagarville Cities of southern Alabama (bottom row) using the minimum convex polygon estimation. The numbers in the legends of the figure represent the black bear codes}
\label{Or214}
\end{figure}
We want to emphasize again that minimum convex polygon estimation can lead to overestimation of home ranges due to its convexity constraint.
\subsubsection{Kernel density estimation}
Bivariate normal kernels are also used for the home range estimations, and their smoothing parameters are selected in an \emph{ad hoc} manner \citep{silvermann1986density}. Figure \ref{Or21434} shows the estimated home ranges of black bears in  Mount Vernon, Chatom, Fruitdale, and Wagarville Cities  and Mobile County using bivariate normal kernel density estimators. In comparison to the estimated home ranges in Figure \ref{Or214}, the home range estimates in Figure \ref{Or21434} may be more accurate, as the non-convex nature of these estimates potentially excludes empty spaces where no relocation points of black bears are present.

\begin{figure}[h]
\centering
\includegraphics[width=0.4\textwidth, height=0.35\linewidth]{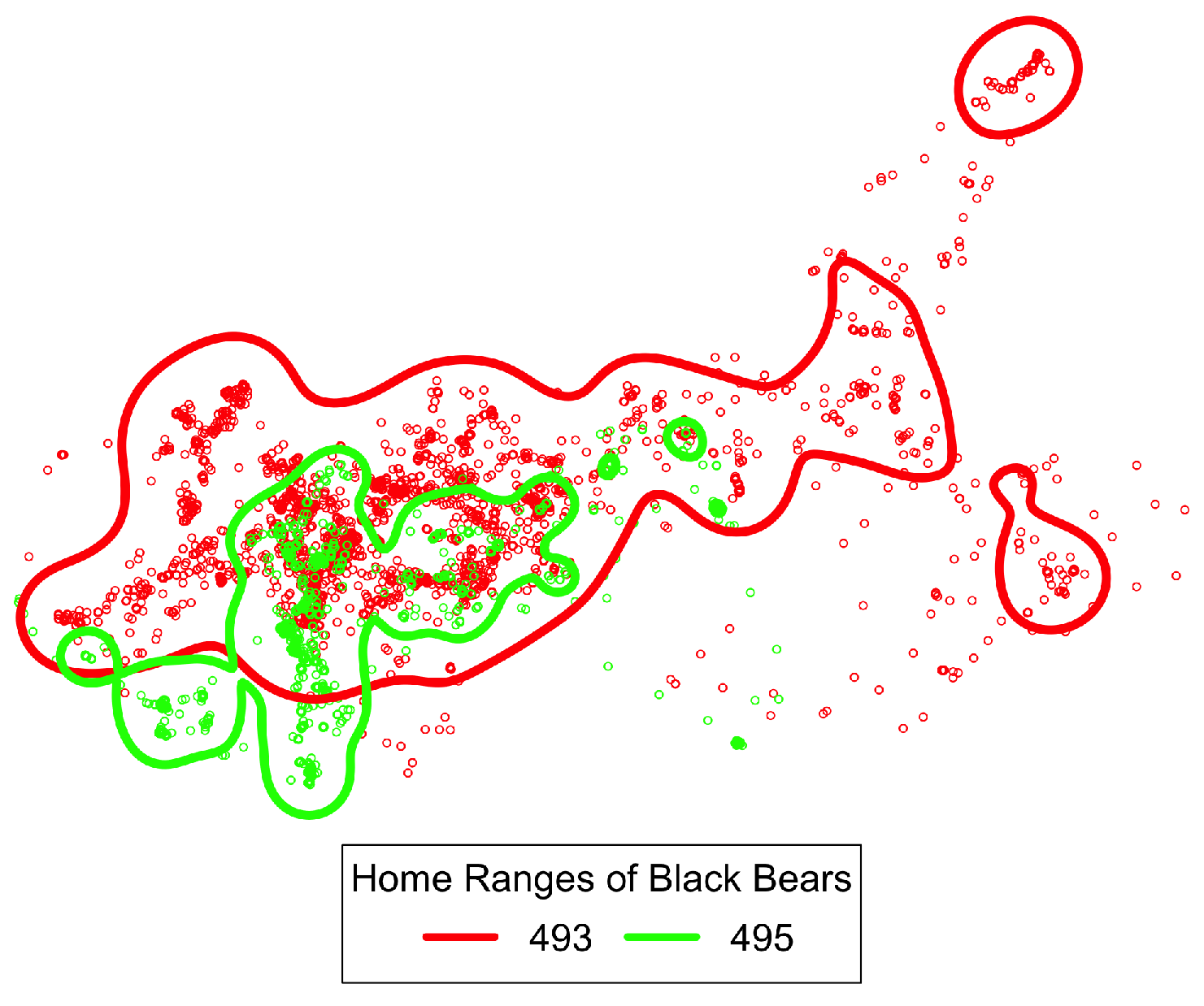}
\includegraphics[width=0.4\textwidth, height=0.35\linewidth]{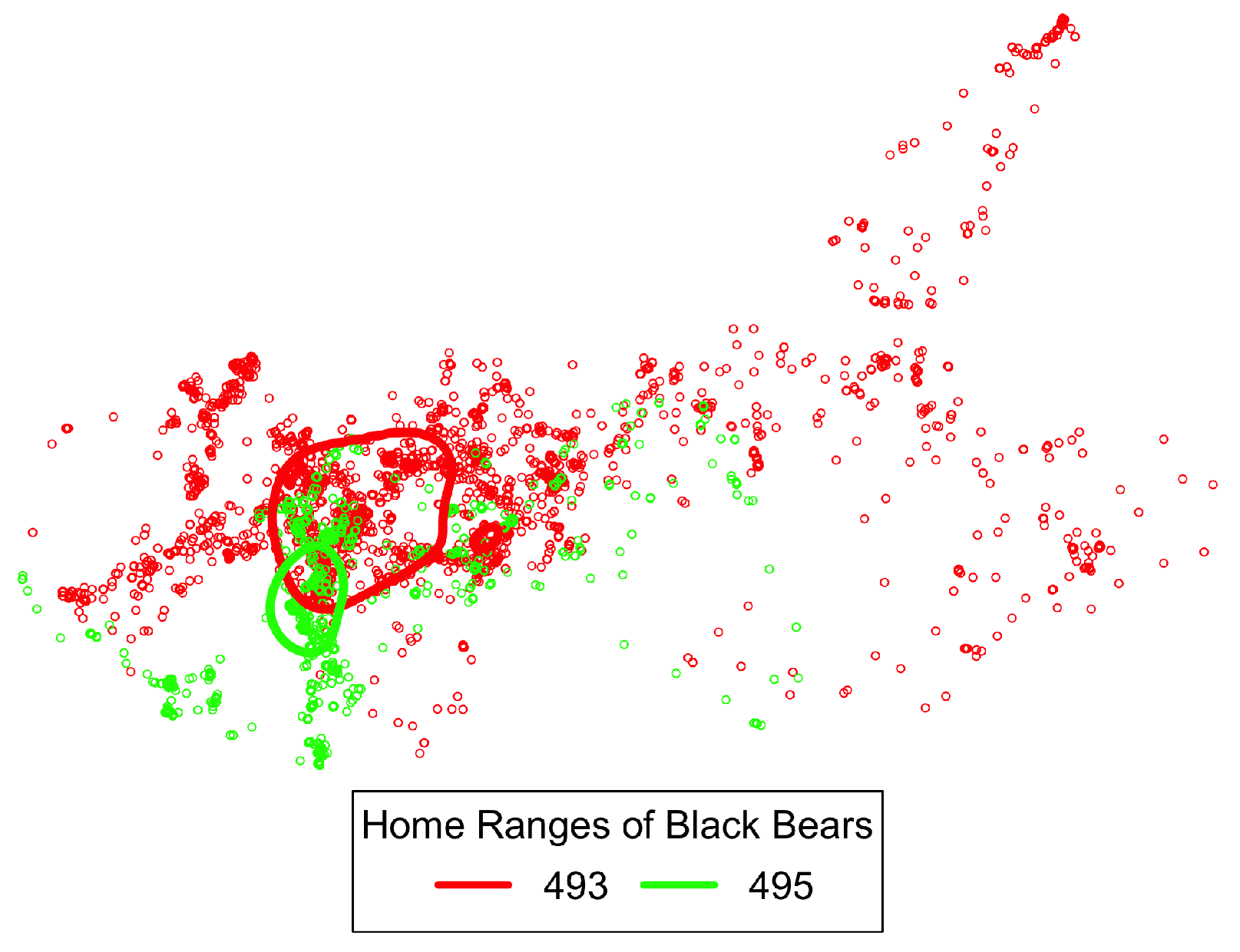}
\includegraphics[width=0.4\textwidth, height=0.55\linewidth]{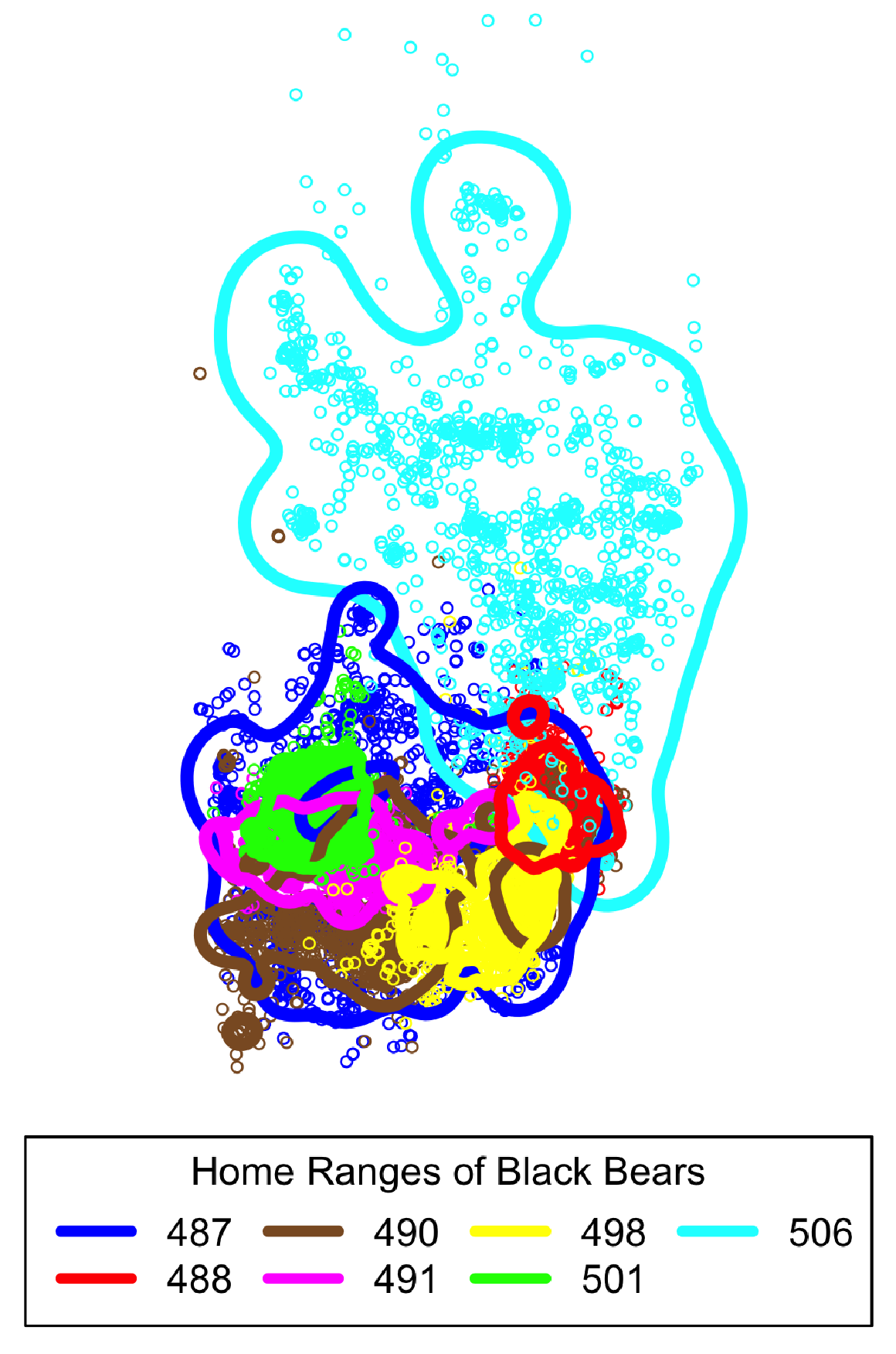}
\includegraphics[width=0.4\textwidth, height=0.55\linewidth]{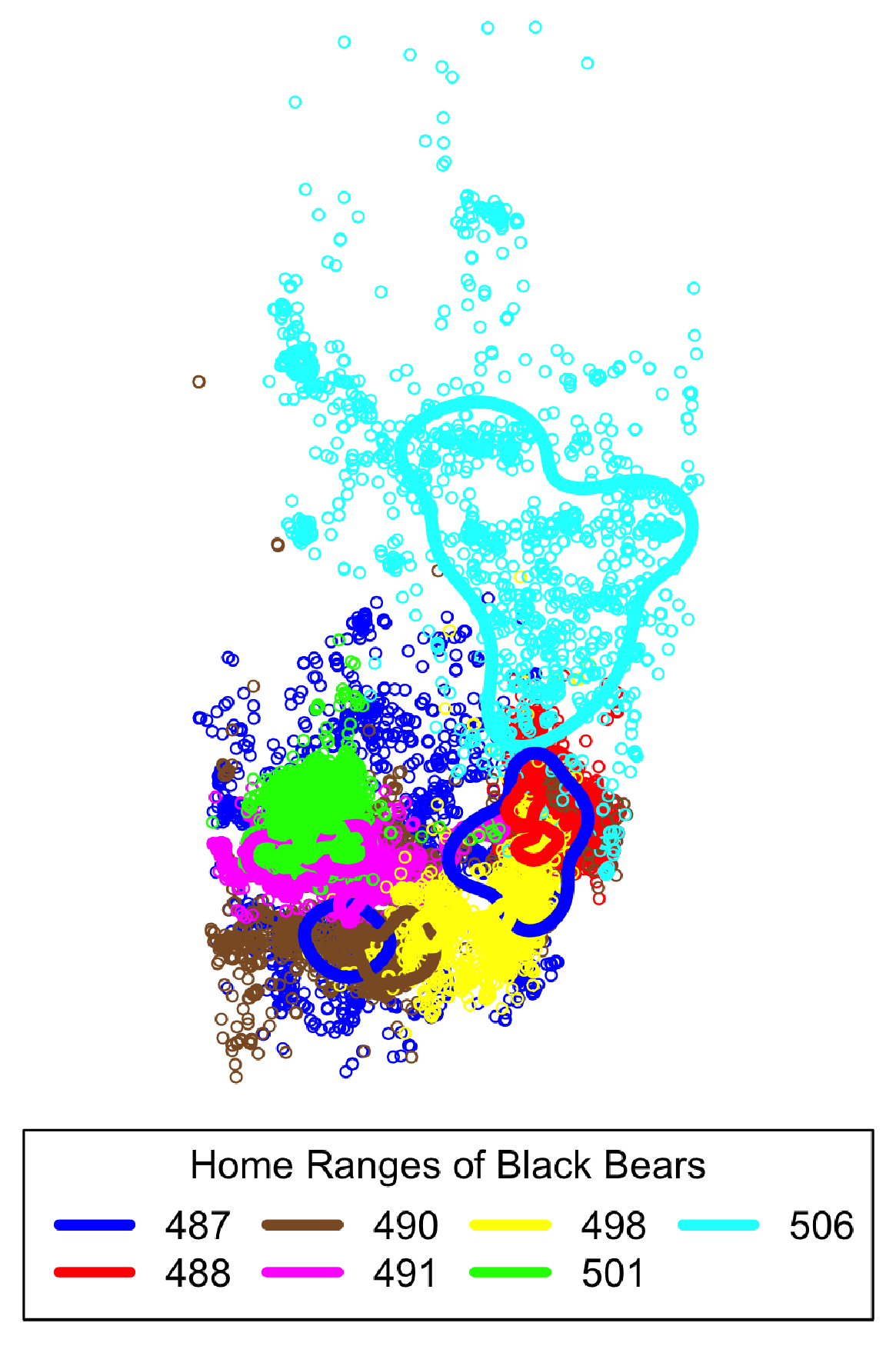}
\caption{The 95\% (left column) and 50\% (right column) home range estimates of black bears in Mount Vernon City and Mobile County (top row) and Chatom, Fruitdale, and Wagarville Cities of southern Alabama (bottom row) using bivariate normal kernel density estimators}
\label{Or21434}
\end{figure}
\subsubsection{Autocorrelated kernel density estimation}
Autocorrelated bivariate normal kernel density estimators may yield more accurate home range estimates.  In order to use autocorrelated bivariate normal kernel density estimators for home range estimation, we must first select theoretical semivariance models that best fit the empirical semivariances. When the underlying processes are \emph{iid}, the autocorrelated bivariate normal kernel density estimators become  the conventional (i.e., uncorrelated) bivariate normal kernel density estimators.\\

The semivariance models (or functions) can be used to assess whether the black bears are range residents or not.  They quantify the variations in distances between all relocation pairs at the same time lag.  When a black bear exhibits home range residence behaviour, the distance covered by the black bear approaches  an asymptote as the time lag between relocations increases. This is because black bears would normally not travel further than their home-range, even given more time. Thus, this method can detect home range residence behaviour.\\

We fit theoretical semivariance functions (\emph{iid}, \emph{OU}, \emph{OUF}) to empirical semivariances to identify optimal models using the \emph{AIC}. Visual assessment and $\Delta$\emph{AIC} aid model selection, with the optimal models  integrated into Figure  \ref{Bearsve} and used for home range estimation via autocorrelated bivariate normal kernel density estimators. Figure \ref{akdeOr21434} displays home ranges derived from these estimators, based on relocation datasets and selected semivariance function estimates. For \emph{iid} processes, these estimators reduce to conventional bivariate normal kernel density estimators. Home ranges in Figure \ref{akdeOr21434} are estimated for \emph{OU} anisotropic (487, 488, 491, 493, 495, 506), \emph{OUF} anisotropic (498, 501), and \emph{iid} anisotropic (490) processes.\\

\begin{figure}[h]
\centering
\includegraphics[width=0.4\textwidth, height=0.35\linewidth]{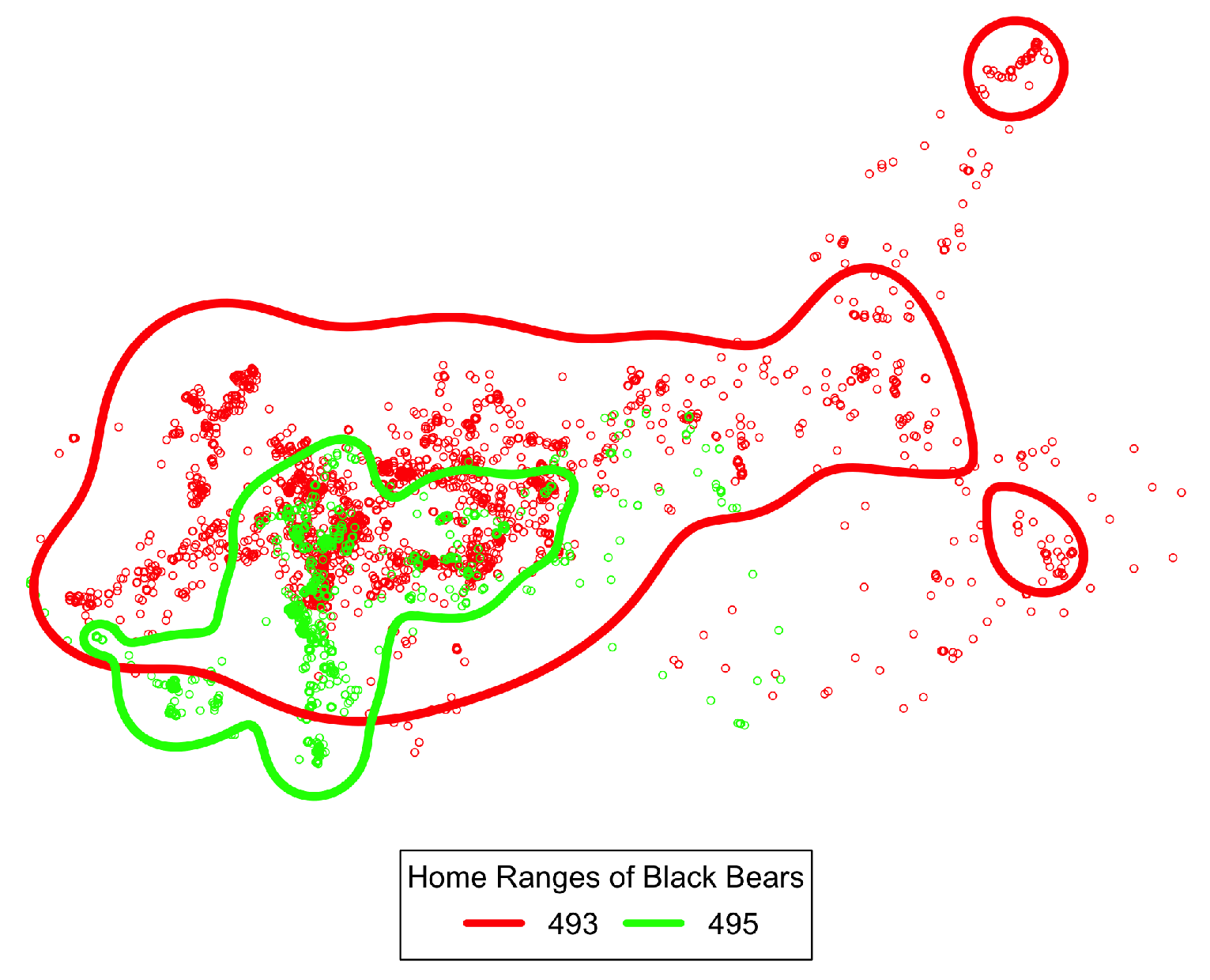}
\includegraphics[width=0.4\textwidth, height=0.35\linewidth]{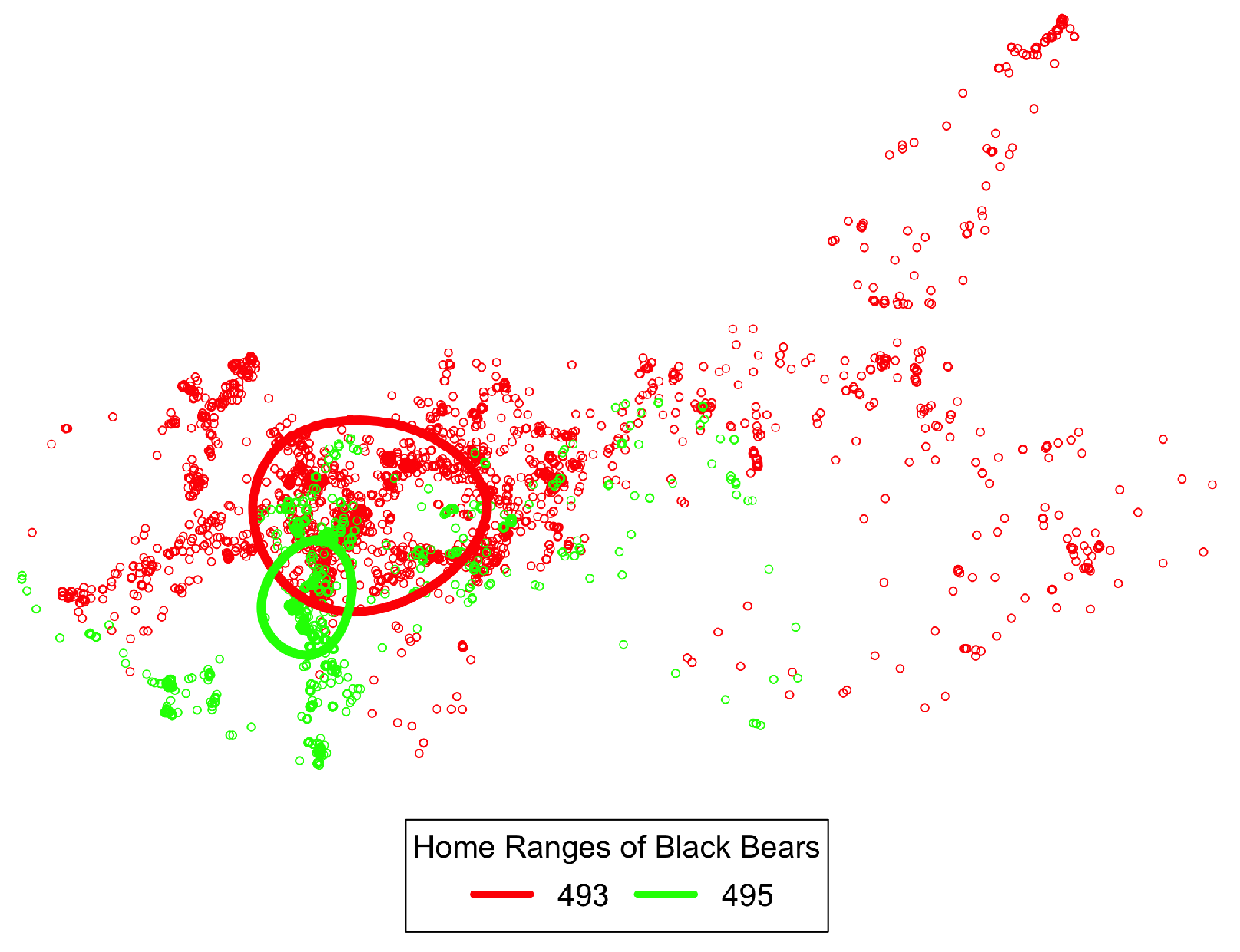}
\includegraphics[width=0.4\textwidth, height=0.55\linewidth]{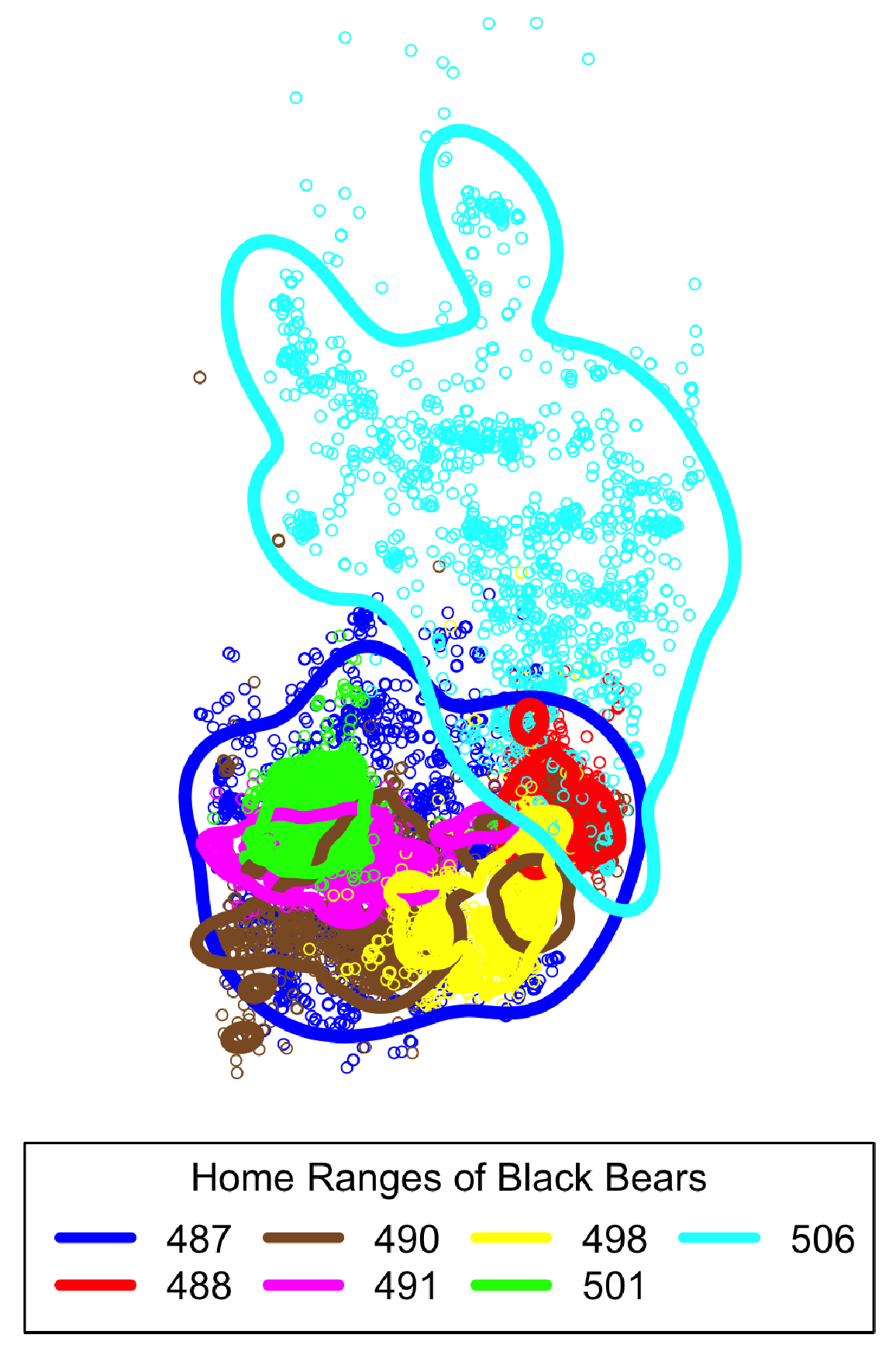}
\includegraphics[width=0.4\textwidth, height=0.55\linewidth]{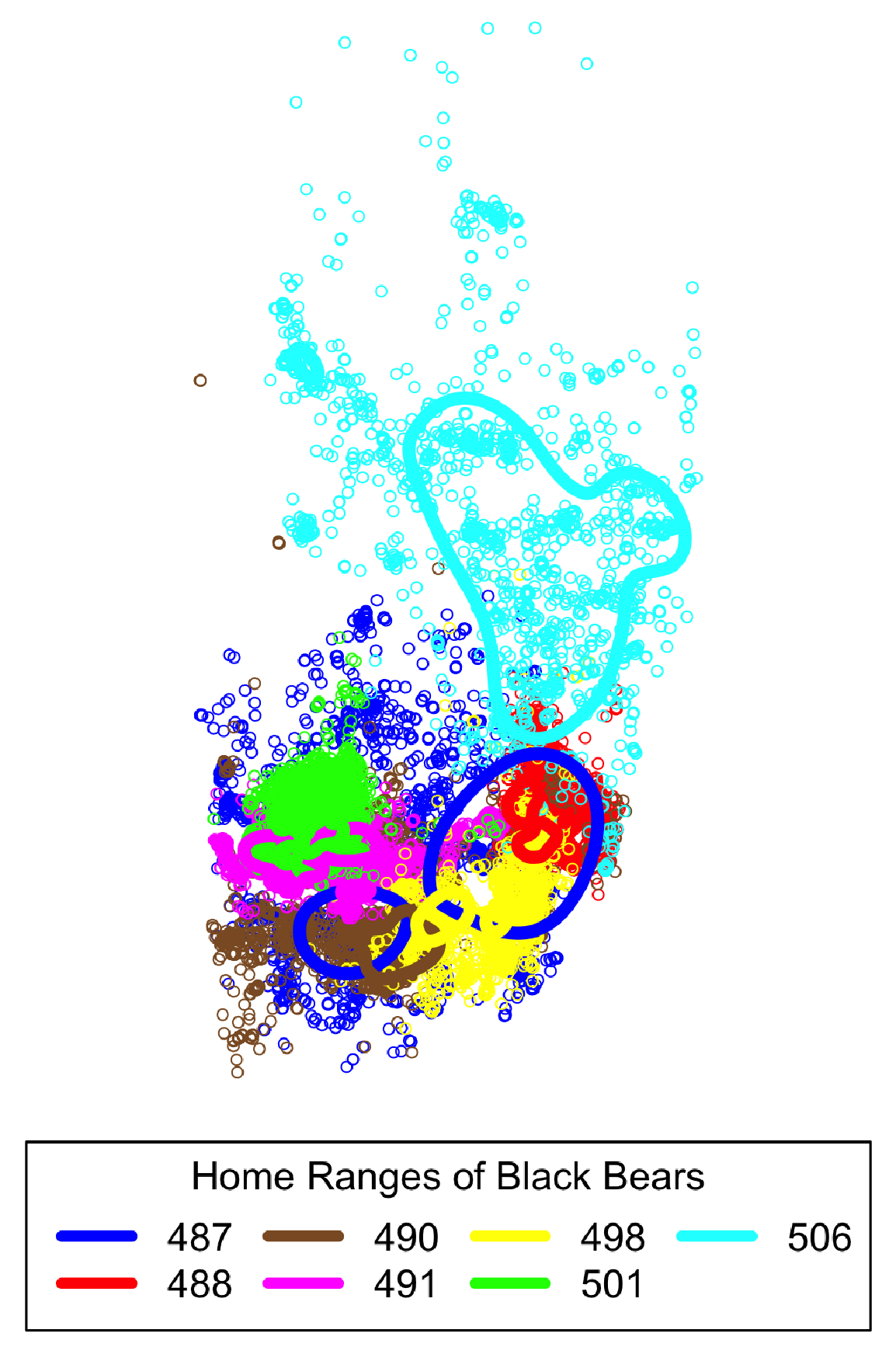}
\caption{The 95\% (left column) and 50\% (right column) home range estimates of black bears in Mobile County and Mount Vernon City (top row) and Chatom, Fruitdale, and Wagarville Cities in southern Alabama (bottom row) using autocorrelated kernel density estimators}
\label{akdeOr21434}
\end{figure}
Next, in Table \ref{tab:EstimatedModels1}, we present the estimated home range sizes (in km$^{2}$) based on the three methods described above. Notice that the kernel density estimators produce home range estimates when the underlying processes are \emph{iid} isotropic processes; see Section \ref{KDEsub}.
\begin{table}[h]
\centering
\caption{Estimated home range sizes (in square kilometers) by the minimum convex polygon (\emph{mcp}), kernel density estimator (\emph{kde}), and autocorrelated kernel density estimator (\emph{akde})}
\label{tab:EstimatedModels1}
\begin{tabular}{|c|c|c|c|c|c|c|c|c|}
\hline \multicolumn{1}{|c|}{No.} & \multicolumn{1}{|c|}{Black bear} & \multicolumn{2}{c|}{mcp} & \multicolumn{2}{c|}{kde}  & \multicolumn{2}{c|}{akde}\\\hline 
&& 95\% &50\%&95\%  &50\%&95\%&50\%\\\hline\hline
1&493 &228.885&16.115&163.272&17.120&206.559&24.226\\\hline
2&495 &47.639 &1.459&55.582&4.105&60.914&5.557\\\hline
3&487&42.700&17.012&43.684&7.446&50.570&11.409\\\hline
4&488 &5.174 &0.893&4.224 &0.668&4.076&0.726\\\hline
5&490&23.140 &1.300&15.539&1.469&14.755&1.395\\\hline
6&491 &10.051&2.432&8.065&1.847&7.090&1.686\\\hline
7&498 &8.304&1.154&6.216 &0.750&6.402&0.899\\\hline
8&501&4.001 &0.952&4.102& 0.835&3.994&0.901\\\hline
9&506&77.342&22.785&85.080& 21.166&78.082&19.823\\\hline\hline
\end{tabular}
\end{table}
The bivariate normal kernel is used both in kernel density estimation and autocorrelated kernel density estimation.
\subsection{Exploratory analysis of the pairwise interaction between black bears}
This section explores whether two black bears demonstrate independence, positive association (clustering), or negative association (avoidance/regularity/segregation). In other words, we want to study the preferences of black bears regarding their proximity to other black bears. Specifically, we intend to investigate whether a black bear is indifferent to the presence of another black bear nearby, seeks to coexist with another black bear, or actively avoids proximity to another black bear. To this end, we identified pairs of black bears with shared core home ranges in order to assess the nature of their spatial interaction.  The core home ranges are obtained by the three home range estimation methods discussed in Section \ref{HRCA}.  We only considered pairs of black bears that shared core home ranges for at least five months, as this provides sufficient data to make inferences about the nature of their spatial interactions. Table \ref{tab:PBBSHRFM} presents the pairs of black bears that share their core home ranges for at least five months. As the table shows, black bear 487 spatially interacts (static interaction) with most of the other black bears in its group. Similarly, Figure \ref{Or214} (bottom left) supports this conclusion, as its 95\% estimated home range encompasses the estimated home ranges of nearly all other black bears within the same group.  The permutations in Table \ref{tab:PBBSHRFM}  arise because the estimators \( L_{\text{inhom}}^{ij}(r) \) and \( L_{\text{inhom}}^{ji}(r) \) may not be identical due to the inclusion of edge effects. A similar rationale applies to \( J_{\text{inhom}}^{ij}(r) \). Therefore, the objective is to explore whether, for instance, black bear 487 interacts with 488 when black bear 488 is nearby, and conversely, whether black bear 488 interacts with 487 when black bear 487 is nearby. This explains why both \(\left( 487, 488\right)\) and \(\left(488, 487\right)\) are included in the table for the analysis of spatial interactions between pairs of black bears sharing their core home ranges using the summary statistics.

\begin{table}[h]
\centering
\caption{Pairs of black bears that share their core home ranges for at least five months}
\label{tab:PBBSHRFM}
\begin{tabular}{|cccc|}
\hline 
(487, 488)& (487, 490)&(487, 491)&(491, 501)\\
(488, 487) &(490, 487) &(491, 487)&(501, 491)\\\hline
\end{tabular}
\end{table}
A closer look at the relocation data plots of the black bears in Figure  \ref{Or213}  shows that their relocations are not evenly distributed across the study area. That is,  the point patterns corresponding to each mark are not homogeneous. Following that, we investigate whether there is attraction, repulsion, or no interaction between pairs of black bears using the inhomogeneous cross-type summary functions $L$ and $J$  presented in Section \ref{Summary2080}.  The spatial intensities must be known to estimate the summary functions $L$ and $J$. They are, however, unknown, and we must estimate them from the observed multitype point patterns. To accomplish this, we employ the parametric modelling described in Equation \eqref{LinearModel}. That is, we adopt the recommendation provided by \cite{baddeley2015spatial} to utilize the estimated spatial intensities derived from the observed point pattern.\\

To determine the type of spatial interaction between pairs of black bears, we compare the inhomogeneous cross-type $L$- and $J$-function estimates, obtained using border edge correction, from the observed spatial point patterns with those derived from simulated spatial point patterns generated by randomly shifting the observed spatial point patterns. When considering the null hypothesis of component or type independence, random shifting of the pattern and intensity function of one black bear in relation to the other black bears keeps the marginal structures unchanged and only affects the interactions between the two black bears. We are interested in running Monte Carlo tests to assess whether  the degrees of deviation of the estimated summary functions of the observed point pattern from the estimated summary functions of the simulated spatial point patterns are significant. To do so, we use two-sided hypotheses that test no spatial interaction between pairs of black bears  against spatial interaction alternatives; the procedures are detailed in Section \ref{MCET2080}.  We generate 2500 Monte Carlo simulations under the null hypothesis to construct the envelopes and test the hypotheses.\\

We present a sample of the results here, namely for one pair of black bears, and the remaining results can be found in the Online Resource. Figure \ref{Result495to503} shows the pointwise envelopes for the inhomogeneous cross-type $L$- and $J$-functions.  Both summary statistics indicate that the presence of black bear 487 near black bear 488 does not bother the latter, and vice versa.\\

\begin{figure}[H]
\centering
\includegraphics[width=0.35\textwidth, height=0.2\linewidth]{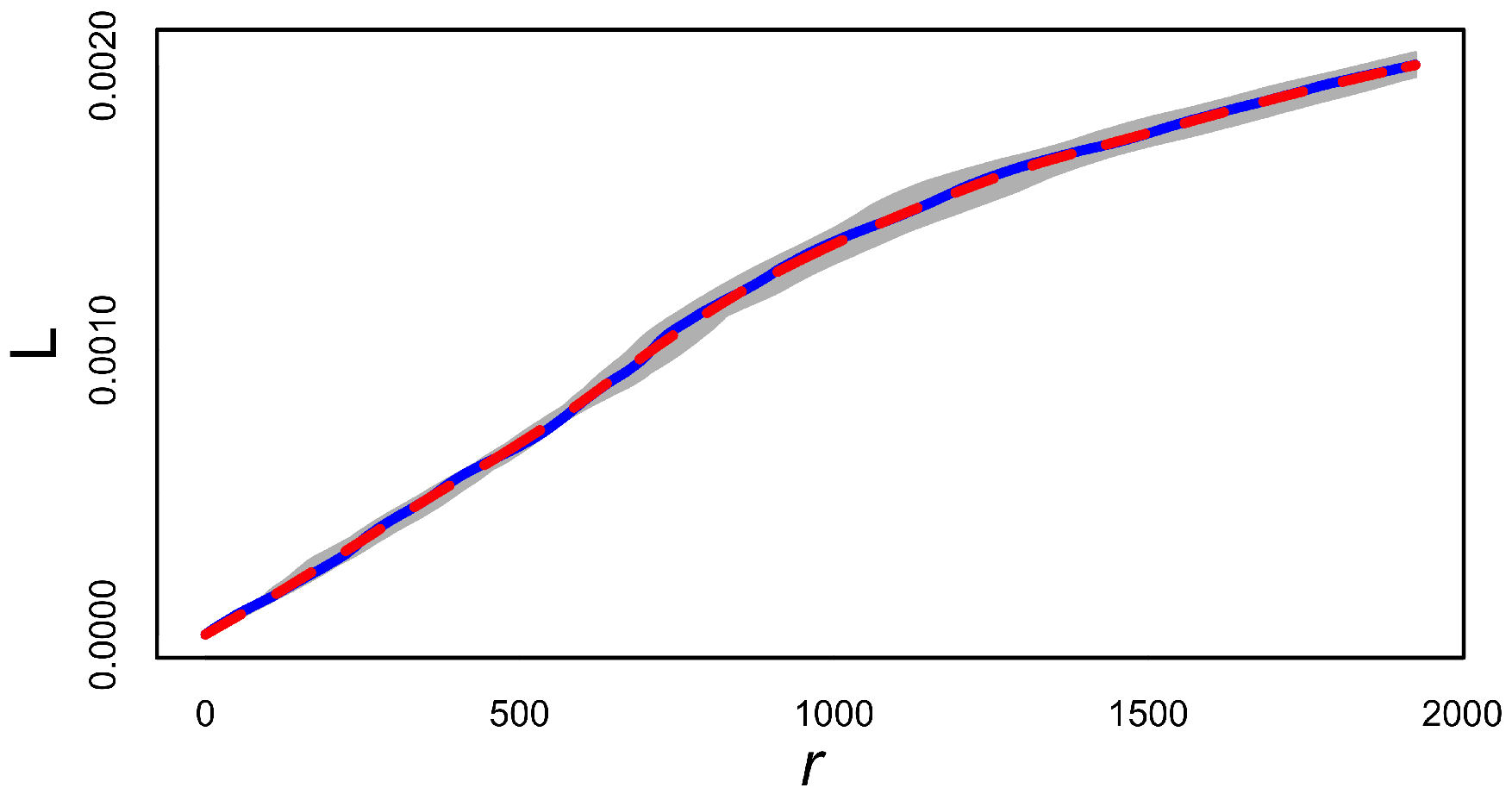}
\includegraphics[width=0.35\textwidth, height=0.2\linewidth]{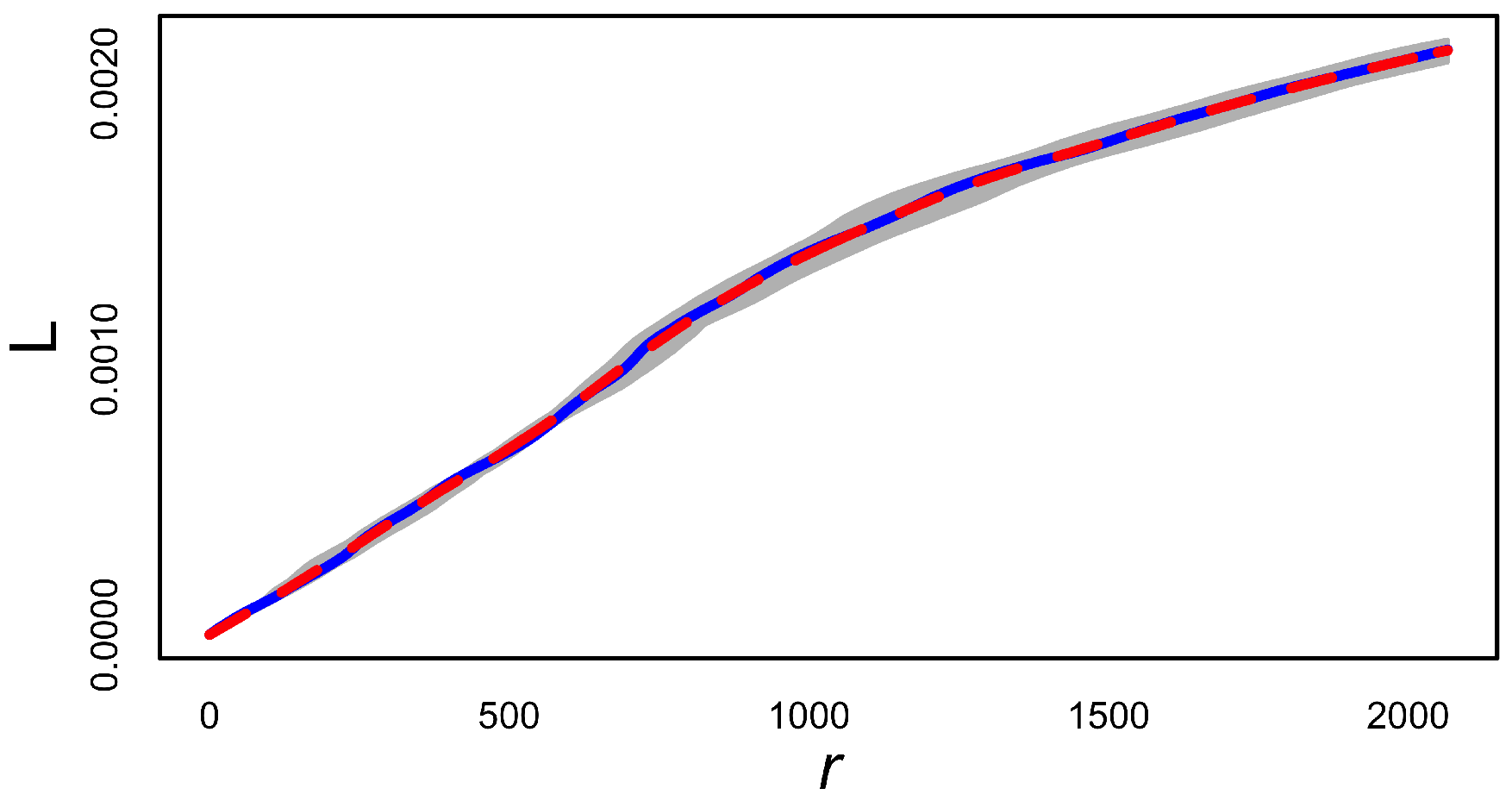}
\includegraphics[width=0.35\textwidth, height=0.2\linewidth]{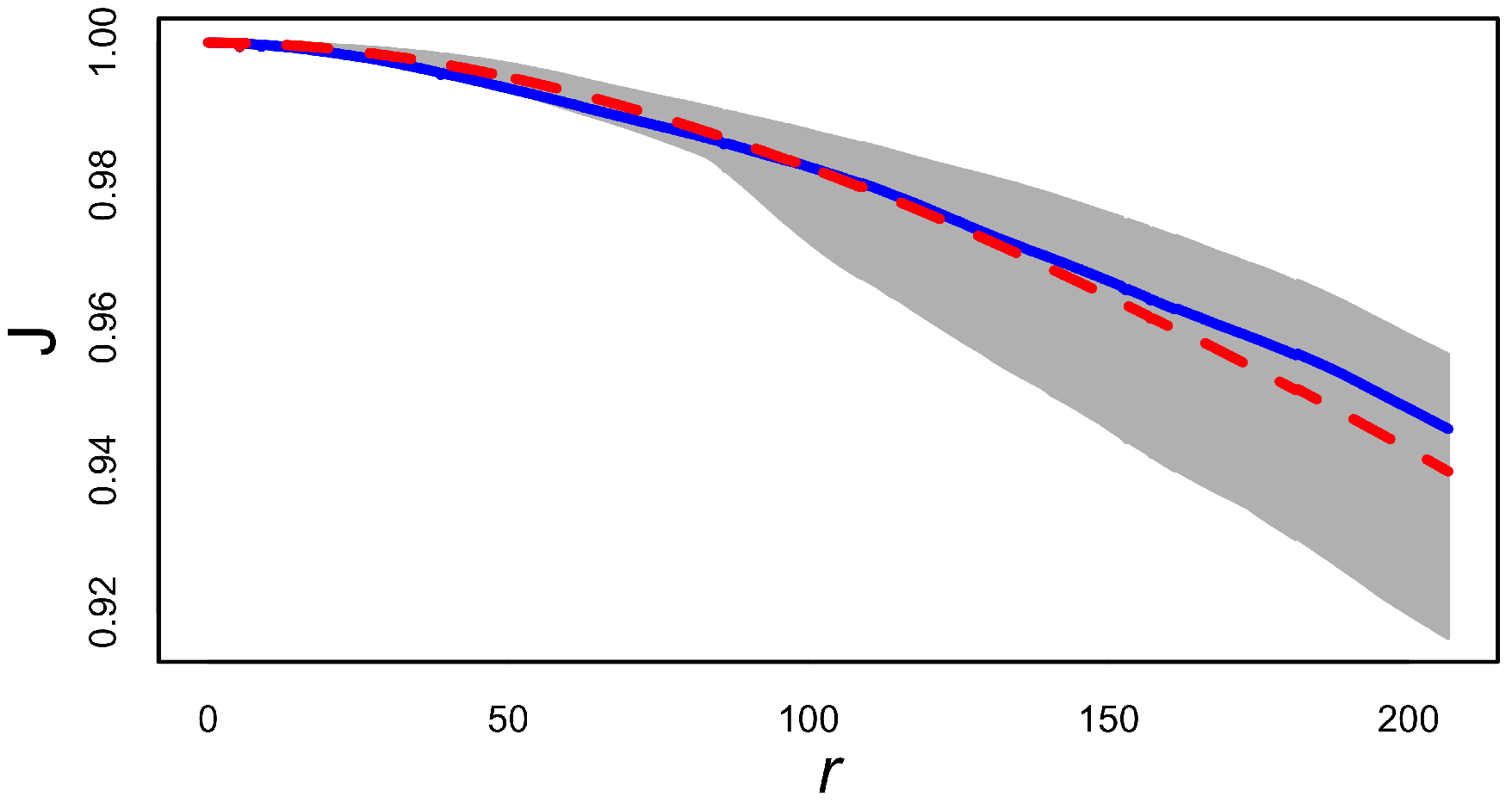}
\includegraphics[width=0.35\textwidth, height=0.2\linewidth]{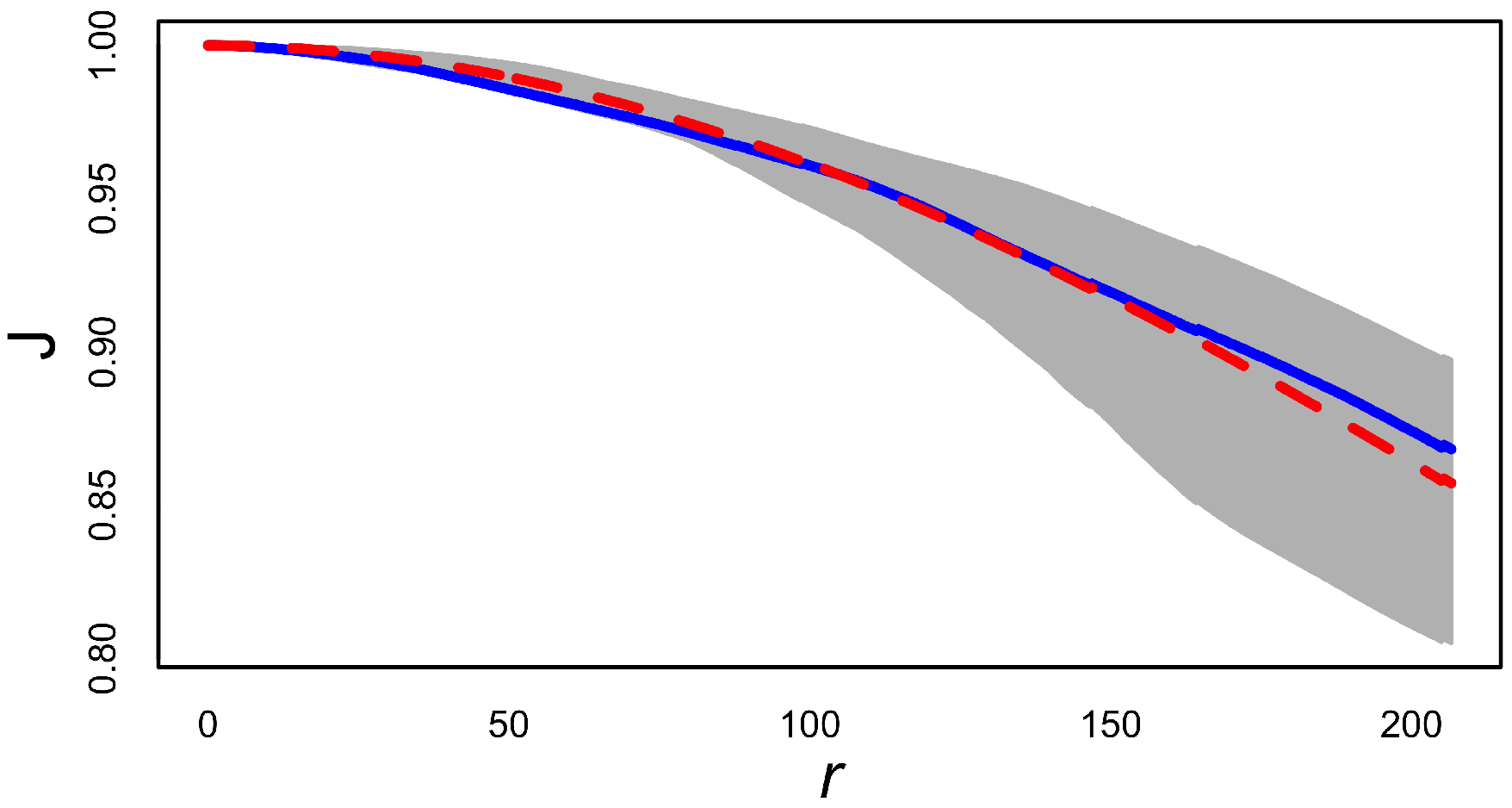}
\caption{Simulation envelopes for estimated inhomogeneous cross-type $L$- function (first row) and $J$-function (second row) from black bear 487  to 488  (first column)  and  from 488 to 487 (second column) based on 2500 simulations. The solid curve depicts estimates of the summary function for observed data, while the broken lines represent means of Monte Carlo simulations}
\label{Result495to503}
\end{figure}

Two-sided \textit{MAD} and \textit{DCLF} tests are used to test for lack of spatial interaction between the black bears 487 and 488.  Table \ref{tab:SummaryStatistic} presents the $p$-values for two-sided \textit{MAD} and \textit{DCLF} tests based on the inhomogeneous cross-type $L$-and $J$-functions. The $p$-values, which correspond to the interactions (as can be learned from the envelopes) shown in the Figure \ref{Result495to503}  from top left to bottom right (row-wise) for each test, are based on the range of $r$ values, specifically over the ranges $\left[0, 1925.200\right]$ and $\left[0, 206.570\right]$ in meters.

\begin{table}[H]
\centering
\caption{Two-sided \textit{MAD} and \textit{DCLF} tests for lack of spatial interaction based on  inhomogeneous cross-type $L$-and $J$-functions, with corresponding $p$-values}
\label{tab:SummaryStatistic}
\begin{tabular}{|c|c|c|c|c|c|}
\hline \multicolumn{1}{|c|}{Summary statistic}& \multicolumn{2}{c|}{MAD test} & \multicolumn{2}{c|}{DCLF test} \\\hline\hline
& 487 to 488&488 to 487&487 to 488 &488 to 487\\\hline\hline
L&0.985&0.986&0.990&0.993\\\hline
J&0.679&0.797&0.682& 0.769\\\hline\hline
\end{tabular}
\end{table}
Based on the $p$-values in Table \ref{tab:SummaryStatistic}, we conclude that there is no reason to reject the null hypothesis of independent components of the types/black bears 487 and 488 at 0.05 significance level. We would like to emphasize that both the envelope analysis and the hypothesis testing approaches indicate that black bears 487 and 488 are indifferent to living in close proximity to each other.

\section{Discussion and Conclusion}\label{Discussion}
The purpose of this study is to model the home range of wild animal species and the spatial interactions that may exist between pairs of the same-species wild animals. Understanding the home ranges of  wild animal species  and how they partition the areas they share can help with effective management of the wild animal species.   To illustrate the methodology, we use black bear relocation data from southern Alabama, USA.  \\

The data set at hand consists of spatial data for each of the twelve black bears that are tracked using GPS collars between the years 2015 and 2017. The black bear locations are tracked approximately hourly by GPS, and the data collection duration ranges from  197.041 to 414.334 days. The data set for a given black bear can be viewed in two ways. To begin, the data set can be viewed as a collection of spatial locations collected over a specific time period. Here, we ignore any autocorrelation that might be inherent in the data. In this case, the data set can be used to model black bear home ranges.  The data set can also be used to model the spatial interaction between pairs of black bears. The data set for a given black bear can also be viewed as a collection of the time series of the locations of the black bear, which is also called \emph{black bear relocation data}. In this instance, we can also use the data set to model black bear home ranges, taking the autocorrelation information that may exist in the black bear relocation data into account. \\

Our modelling strategy is to treat the data set, i.e., the spatial locations of the black bears, as a collection of realizations from inhomogeneous multitype point processes. The minimum convex polygon techniques and bivariate normal kernel density estimators are used to estimate home ranges. Outliers in the dataset can affect the minimum convex polygon estimation method, leading to the inclusion of empty spaces where there are no black bear relocation points. In this regard, the home range estimates in Figure \ref{Or21434} using bivariate normal kernels may be more accurate than those in Figure \ref{Or214}.   Autocorrelated bivariate normal kernel density estimators, as detailed in \cite{fleming2015rigorous}, use autocorrelation information from the relocation data to estimate home ranges. We reported the estimated home range sizes using the minimum convex polygon method, kernel density estimators, and autocorrelated kernel density estimators in Table \ref{tab:EstimatedModels1}.   The differences in the estimated home range sizes may be due to the modelling approaches, specifically whether the autocorrelation information from the data set is useful or not, and there could also be other factors as well. Accurate home range estimation is critical in ecology, as it might provide insight into the underlying ecological processes and suggest a promising course of action for future research. For example, it can aid in the development of management plans to adequately protect black bears.\\

The sizes of the estimated home ranges of black bears vary significantly, as shown in Figures \ref{Or214} to \ref{akdeOr21434} and Table \ref{tab:EstimatedModels1}, influenced by factors such as habitat quality, food availability, population density, etc. Some home ranges are notably larger, with certain ones entirely lying within others.  Larger home ranges may belong to male black bears, which typically cover more extensive areas while searching for mates and resources \citep{gautrelet2024first}. Despite significant overlap, black bears generally avoid direct interactions through strategies like temporal avoidance to reduce conflict. The smaller estimated home ranges likely belong to female black bears, whose ranges are typically more compact and may overlap with those of their offspring or related individuals, reflecting familial spatial use \citep{carroll2024evaluating}. Conversely, male home ranges often encompass those of several females, consistent with their polygynous mating system \citep{vidal2024negative}. The overlapping and nested home ranges suggest shared resources or movement corridors, highlighting the complex social and ecological (or spatial) dynamics of black bear.\\

\cite{fronville2024performance} have assessed three types of interactions among moving individuals, namely avoidance, attraction, and neutral movement, using dynamic interaction methods applied to data from a predator-prey system simulation.  Intraspecies interaction modelling of black bears can help understand how they partition the shared space in which they live. The shared space of the black bears can be partitioned through attraction/clustering, repulsion/avoidance/regular patterns, or random/no interaction. To study how black bears interact in this context, inhomogeneous cross-type summary functions can be employed. These functions aid in evaluating whether the black bear interactions exhibit positive association/clustering, negative association/repulsion, or random/no interaction. Simulation-based statistical methods can be utilized to evaluate the nature of interactions between pairs of black bears. To this end, when utilized appropriately, pointwise envelopes can be employed for testing the null hypothesis of no interaction. However, it carries a higher risk of misinterpretation due to the issue of multiple testing \citep{baddeley2015spatial}. As a result, in addition to a pointwise envelope, a global envelope is used to test the null hypothesis. In general, our results suggest there may not be spatial interaction between pairs of black bears in our dataset. That is, black bears do not exhibit any preference for or aversion to living in close proximity to each other. Although the study suggests a lack of spatial interaction between pairs of black bears, caution is advised before drawing any firm conclusions. The generalization of the findings to other black bear populations may depend on various factors. A larger data set is primarily needed for increased test power, better estimation, and generalizability. Furthermore, we need to keep in mind that the results of the Monte Carlo tests can be affected by a number of factors, including the type of edge correction used, the estimated intensity function, the bandwidth selected, and the range of interaction considered. The robustness of the results in relation to these choices should also be investigated. Nonetheless, we believe that this paper provides a sound foundation for black bear interaction modelling (or modelling the same-species wild animal spatial interactions).\\

Our study conducted a comparison of home range estimation methods in black bears, revealing variations in the sizes of their home range estimates.  The variations in home range sizes may suggest gender diversity within the studied black bear population and could also be influenced by the roles or responsibilities of the bears within the population, as well as the availability of resources and the landscape in the area \citep{braunstein2020black}.  We observed differences in the movement patterns of black bears tracked using GPS collars. Our findings suggest that black bears display an overall indifference towards living in close proximity or dispersed areas. These findings contribute to our understanding of black bear ecology and have significant implications for their management and conservation. By using these insights, we can develop effective strategies for the management and conservation of black bear populations.

\section{Future work}\label{FutureWork}
We put forward three approaches to further investigate the home range and spatial interactions of wild animal species, focusing on black bears in particular.\\

We suggest an adaptive kernel density estimator to extend the exploration of the home range of wild animal species. Let $Z = \{(\z_i, t_i) \}_{i=1}^n$ be the set of spatiotemporal relocation data of a wild animal species, where $\z_i$ represents the spatial location at time $t_i$. Assume that $\Lambda = \{\lambda(\z, t) \}$ is the spatiotemporal intensity function of the home range of a wild animal species, varying with both location $\z$ and time $t$, influenced by dynamic environmental covariates $E(t)$. The home range (H) of the wild animal species can be estimated by an adaptive kernel density estimator that incorporates these dynamic covariates. Define the adaptive kernel density estimator $\hat{p}_{\Lambda, \;E}(\z, t)$ as:
\[\hat{p}_{\Lambda, \;E}(\z, t) = \frac{1}{n} \sum_{i=1}^n \frac{1}{|\Lambda(\z_i, t_i)|^{1/2}} K \left( \Lambda(\z_i, t_i)^{-1/2} (\z - \z_i) \right) \cdot g\left(E(t_i), E(t)\right), \]
where $K$ is a bivariate kernel function, $\Lambda(\z_i, t_i)$ is the adaptive bandwidth matrix at point $(\z_i, t_i)$, and $g(E(t_i), E(t))\ge 0$ is a function representing the influence of environmental covariates at times $t_i$ and $t$.
\begin{theorem}
Under the assumption that the movement patterns of a wild animal species are influenced by dynamically changing environmental covariates, the adaptive kernel density estimator $\hat{p}_{\Lambda, E}(\z, t)$ provides a more accurate estimation of the home range $H$ of a wild animal species compared to traditional kernel density estimators that do not account for temporal changes in environmental covariates.
\end{theorem}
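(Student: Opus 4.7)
The plan is to interpret ``more accurate'' via the mean integrated squared error (MISE), so the assertion becomes $\text{MISE}(\hat p_{\Lambda,E}) \le \text{MISE}(\hat p_{\Lambda})$ under the stated assumption, with strict inequality whenever the covariate process $E(t)$ carries non-trivial information about the intensity $\lambda(\z,t)$. First I would formalise the data-generating mechanism: treat $\{(\z_i,t_i)\}_{i=1}^n$ as a realisation of an inhomogeneous spatiotemporal point process with intensity $\lambda(\z,t)=\lambda_0(\z)\,\phi(\z,E(t))$, where $\phi$ encodes the covariate effect, and identify the ``traditional'' estimator as the special case $g\equiv 1$ of the formula for $\hat p_{\Lambda,E}$. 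This makes the two estimators comparable within a single family indexed by the weight $g$.

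Next, I would carry out the bias-variance decomposition $\text{MISE}(\hat p)=\int \text{Var}(\hat p(\z,t))\,d\z+\int\bigl(\E[\hat p(\z,t)]-p(\z,t)\bigr)^2 d\z$ and expand the expectation of each estimator via a Taylor expansion around $(\z_i,t_i)=(\z,t)$, as in \cite{wand60jones}. The leading bias of the adaptive estimator factors as a product of (i) the usual $\tfrac12\operatorname{tr}(\Lambda H_p)$ smoothing bias in space and (ii) a temporal term controlled by $|g(E(t_i),E(t))-1|$, which vanishes as $t_i\to t$ when $g$ is chosen consistently with the environmental dependence of $p$. For the traditional estimator ($g\equiv 1$) this second factor is replaced by a time-averaged quantity that does \emph{not} vanish as $\Lambda\to 0$, producing an irreducible ``environmental-mixing'' bias analogous to the bias incurred by standard KDE applied to autocorrelated relocations discussed above.

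For the variance, I would invoke the standard $O\!\bigl(n^{-1}|\Lambda|^{-1/2}\bigr)$ bound and verify that, with the normalisation $\int g(E(t_i),E(t))\,dt_i = 1 + o(1)$, the weighting by $g$ inflates the leading variance by at most a bounded constant factor; any such inflation is dominated by the bias reduction provided the covariate process is informative (formally, that the conditional variance of $\phi(\z,E(t))$ across $t$ is bounded below). Combining the two estimates yields $\text{MISE}(\hat p_{\Lambda,E}) = \text{MISE}(\hat p_{\Lambda}) - \Delta + o(\Delta)$ with $\Delta>0$, which is the desired strict improvement. The argument mirrors the rationale used in the autocorrelated KDE framework of \cite{fleming2015rigorous}, but with temporal dependence induced through $E(t)$ rather than through an autocorrelation function on the trajectory itself.

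The hard part will be twofold. First, because the adaptive bandwidth $\Lambda(\z_i,t_i)$ depends on the sample points themselves, the classical KDE expansions do not apply verbatim; a careful linearisation or a U-statistic-style decomposition is required, together with control of the error incurred by plugging in a consistent bandwidth estimator. Second, and more fundamentally, the statement as written does not specify \emph{how} $g$ is chosen, and without a condition linking $g$ to the true covariate effect $\phi$ the conclusion fails (take $E(\cdot)$ independent of the point pattern and $g$ arbitrary). I would therefore prove the theorem under the added hypothesis that $g$ is selected so that $g(E(s),E(t))\to 1$ precisely when $\phi(\z,E(s))/\phi(\z,E(t))\to 1$, and handle the general case by a Kullback-Leibler-type projection onto the best $g$ within a parametric family. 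Making this assumption rigorous is the principal methodological obstacle; once in place, the remaining steps reduce to standard kernel-estimator calculus.
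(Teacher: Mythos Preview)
Your proposal is substantially more rigorous than what the paper offers. The paper's ``proof'' is explicitly labelled a \emph{Proof sketch} and sits in the \emph{Future work} section; it contains no mathematics beyond the verbal claim that traditional KDE assumes a static environment and therefore incurs bias, while the adaptive estimator ``adjusts the weight of each point based on the similarity of environmental conditions'' and hence ``reduces bias and improves accuracy.'' There is no MISE computation, no Taylor expansion, no variance control, and no explicit condition on $g$. In other words, the paper is proposing a direction for future research, not proving a theorem.

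Your route---formalising accuracy via MISE, embedding both estimators in a single family indexed by $g$, doing a bias--variance decomposition, and identifying the irreducible environmental-mixing bias of the $g\equiv 1$ case---is exactly the kind of argument needed to turn the paper's heuristic into a genuine result. You are also right to flag the two real obstacles the paper simply ignores: the data-dependence of the adaptive bandwidth $\Lambda(\z_i,t_i)$, and, more seriously, the absence of any stated link between $g$ and the true covariate effect (without which the conclusion can fail, as your counterexample with $E(\cdot)$ independent of the pattern shows). So your proposal is not merely a different route; it is a first attempt at an actual proof where the paper supplies only a motivating paragraph. What the paper's version buys is brevity and intuition; what yours buys is a path to a theorem that could actually be stated and proved.
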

\begin{proof}[Proof sketch]
The spatiotemporal intensity function $\Lambda(z, t)$ is adapted to reflect changes in environmental covariates $E(t)$, which influence the movement and habitat preference of a wild animal species. By integrating the function $g(E(t_i), E(t))$, the estimator $\hat{p}_{\Lambda, E}(z, t)$ adjusts the weight of each point based on the similarity of environmental conditions at times $t_i$ and $t$. The adaptive bandwidth matrix $\Lambda(\z_i, t_i)$ ensures that the kernel density estimator is sensitive to local variations in spatiotemporal intensity, providing a more accurate and localized estimate of the home range. Traditional kernel density estimators assume a static environment, leading to potential biases in the presence of dynamic environmental factors. The adaptive estimator accounts for these dynamics, reducing bias and improving accuracy.
\end{proof}

We propose spatial point process models with covariates and hierarchical Bayesian models to further investigate the spatial interaction of a wild animal species.  Spatial point process models with covariates can be used to incorporate environmental covariates into spatial point process models, such as generalized additive models (GAMs) or point process models with spatial covariates. These models enhance the understanding of species interactions and habitat preferences by accounting for habitat heterogeneity and environmental gradients, offering precise insights into how different factors influence spatial patterns. For instance, GAMs can model the intensity of point processes based on environmental variables like vegetation type and elevation, thus capturing complex, non-linear effects. On the other hand,  hierarchical Bayesian models provide a flexible framework that can incorporate various sources of uncertainty and hierarchical structures within the data. These models can simultaneously account for individual and group-level processes, integrate prior information, and offer comprehensive posterior distributions of parameters. An example application of a hierarchical Bayesian model is its use to capture individual variability in movement patterns while also estimating broader population-level trends.

\section*{Supplementary Information}
The supplementary material is available in the appendix section.
\section*{Acknowledgments}
Statistical computations are conducted using Auburn University and Alabama Supercomputer High Performance Computing Centers, centrally managed resources available to faculty, staff, students, and collaborators. We thank them for their support during our research.
\section*{Statements and Declarations}

\begin{itemize}[left=-9pt]
\item[]\textbf{Funding} The Alabama Department of Conservation and Natural Resources provided funding and technical assistance for data collection on bears.
\item[]\textbf{Competing interest} The authors declare no competing interest.
\item[]\textbf{Data availability} The bear data is available upon request from the authors.
\item[]\textbf{Code availability} The R code to obtain the results in the work can be found at https://github.com/harmee2020/FekaduBayisa.
\item[]\textbf{Author contribution} Conceptualization: Fekadu L. Bayisa, Elvan Ceyhan, and Todd D. Steury; Methodology: Fekadu L. Bayisa and Elvan Ceyhan; Data Collection: Christopher L. Seals, Hannah J. Leeper, and Todd D. Steury; Formal Data Analysis and Investigation: Fekadu L. Bayisa; Writing: Original Draft Preparation: Fekadu L. Bayisa; Writing: Review and Editing: Elvan Ceyhan; Todd D. Steury; Funding Acquisition: Todd D. Steury; Supervision: Elvan Ceyhan and Todd D. Steury.
\end{itemize}

\bibliography{sn-bibliography}
\clearpage 

\appendix
\section*{Supplementary Information}
This section presents additional information about the contents of the article.

\captionsetup[algorithm]{labelfont=rm,labelsep=period}
\captionsetup[figure]{labelfont={bf},name={Fig.},labelsep = space}

\renewenvironment{table}[1][]%
{\tableorg[#1]%
\tablebodyfont%
\renewcommand\footnotetext[2][]{{\removelastskip\vskip3pt%
\let\tablebodyfont\tablefootnotefont%
\hskip0pt\if!##1!\else{\smash{$^{##1}$}}\fi##2\par}}%
}{\endtableorg}

\def\P{\mathbb P}
\def\E{\mathbb E}
\def\R{\mathbb R}
\def\Var{\mathbb Var}
\def\Cov{\mathbb Cov}
\def\du{du}
\def\P{\mathbb P}
\def\R{\mathbb R}
\def\1{\mathbf 1}
\def\x{\mathbf x}
\def\z{\mathbf z}
\def\y{\mathbf y}
\def\E{\mathbb E}
\def\W{\mathbb W}
\def\w{\mathbf w}

\DeclareFontFamily{U}{BOONDOX-calo}{\skewchar\font=45 }
\DeclareFontShape{U}{BOONDOX-calo}{m}{n}{ <-> s*[1.05] BOONDOX-r-calo}{}
\DeclareFontShape{U}{BOONDOX-calo}{b}{n}{ <-> s*[1.05] BOONDOX-b-calo}{}

\parindent=0pt

\setcounter{secnumdepth}{3}

\setcitestyle{aysep={}} 

\hypersetup{colorlinks = true, linkcolor = blue, filecolor = blue, urlcolor = blue}

\section{Stochastic movement processes}
This section provides an overview of the statistical methods used to estimate the home ranges of wild animal species. Towards this end, let $ \w = \left\lbrace \z_{t_{i}} =\left(x_{t_{i}}, y_{t_{i}}\right)\mid t_{i} \ge 0, \; i = 1, 2, \ldots, n\right\rbrace  \subset \W$ denote the relocation data of a wild animal in a region $ \W\subset \mathbb{R}^{2}$ at times $t_{i}$, $i = 1, 2, \ldots, n$. Here, $x_{t_{i}}$ and $y_{t_{i}}$ represent the Longitude and Latitude coordinates of the wild animal, respectively, in the Global Positioning System (GPS) at times $t_{i}$.\\

Relocation data $\w$ for wild animal species can be viewed as repeated observations of the locations of individual wild animals. The relocation data are usually collected with a constant sampling rate, resulting in a fixed time interval between each collection. Current approaches for identifying multiple movement behaviours or movement modes in such data sets proceed locally through each path \citep{mcclintock2012general}. The relocation data for wild animal species can be taken as a realization of  a stochastic process, a sequence of time-indexed random variables $\mathbf{z}\left(t\right)$ with potential time correlation. The mean location $\boldsymbol\mu\left( t\right)$ of a nonstationary stochastic movement process $\mathbf{z}\left(t\right)$ can indicate shifts over time, reflecting movement behaviours like movement within-home ranges. In such cases, analyzing the autocorrelation function, or equivalently, the \emph{semivariance function} of the stochastic movement process can provide a comprehensive understanding of the movement behaviours.

\subsection{Semivariance function}
Let $\mathbf{z}(t_1)$ and $\mathbf{z}(t_2)$ denote the spatial locations of a wild  animal species at times $t_1$ and $t_2$. The semivariance function $\gamma(t_1, t_2)$ measures spatial distance variability between locations $\mathbf{z}(t_1)$ and $\mathbf{z}(t_2)$. That is, the semivariance function quantifies average dissimilarity between two locations of a wild animal species across all time lags ($\tau = t_2 - t_1$) in the dataset. The plot of estimated semivariance against time lag reveals empirical semivariance, offering insights into the mix of processes represented in the relocation dataset. The semivariance function holds the majority of information found in the autocorrelation function, signifying their partial equivalence. As per \cite{fleming2014fine}, avoiding direct estimates of mean and variance benefits the semivariance function, providing unbiased estimators—unlike the autocorrelation function.  Estimating the semivariance function with confidence intervals contains information for all possible time lags in the dataset. \\

Traditional time-series methods for semivariance functions assume stationarity, but ecological systems with daily, seasonal, or annual cycles violate this assumption. Analyzing movement data of wild animals in such systems necessitates a nonstationary approach. In nonstationary processes of wild animal movement, the semivariance between locations at times $t_{1}$ and $t_{2}$ can be influenced by both the lag $\tau = t_{2}-t_{1}$ and the absolute times. We view nonstationarity as a nuisance factor, and to mitigate its impact, we calculate the average time $\bar{t} = \left( t_{1} + t_{2}\right)/2$ for each location pair. By averaging over dependence on $\bar{t}$, we modify the semivariance function to depend solely on the time lag, yielding a time-averaged semivariance function analogous to the time-averaged autocorrelation function. Reliable semivariance estimates are limited to the lag range $t_{d} < \tau \ll T$ in the analysis of individual wild animal movement, where $t_{d} $ is the sampling time step and $T$ is the sampling duration. The semivariance between two sample locations of a wild animal tends to increase with distance but plateaus beyond a certain point, reaching the variance around the average value and ceasing further increase. The time-averaged semivariance function is averaged over pairs of locations with a specified time lag $ \tau$ to quantify wild animal movement across a broader range of timescales. Matheron's method of moments estimator is a widely used formula for semivariance computation \cite{cressie2015statistics}, and its implementation as an algorithm depends on the data configuration. For evenly sampled data with a scalar lag $\tau$, semivariances are computed at integral multiples of the sampling interval. Utilizing the nonstationary approach, the method-of-moments estimator for semivariance in evenly sampled data is expressed as:
\begin{align}\label{Semivariance}
\hat{\gamma}\left(\tau\right)  = \displaystyle\frac{1}{2n\left(\tau\right) }\sum_{\bar{t}}\left[ \mathbf{z}\left(\frac{\bar{t} + \tau}{2}\right)-\mathbf{z}\left(\frac{\bar{t} - \tau}{2}\right)\right] ^{2}.
\end{align}
Here, $n\left(\tau\right)$ denotes the number of wild animal location pairs with a lag of $\tau$, and $\hat{\gamma}\left(\tau\right)$ is computed by summing over the time-average values $\bar{t}$ for the given lag $\tau$. In other words, $\hat{\gamma}\left(\tau\right)$ represents the average squared distance between two wild animal locations observed at different times with a time lag $\tau$. Averaging over time values $\bar{t}$ at a specific lag $\tau$ helps compute the semivariance estimator, addressing nonstationarity. For more details, refer to \cite{fleming2014fine}. With increasing lag, fewer wild animal locations are available for semivariance estimation. Thus, the most reliable semivariance estimates emerge from shorter lags in evenly sampled relocation data, underscoring the significance of treating fine-scale features at smaller lags as equally or more critical than larger-scale features at larger lags.\\

The empirical semivariance, derived from equations \eqref{Semivariance}, can be plotted against the time lag between relocations. This provides an unbiased visualization of the autocorrelation structure in the relocation data. As per \cite{fleming2014fine}, analyzing semivariance behaviour near the origin, its shape for intermediate lags, and long-lag patterns can help diagnose model fitting issues. It is important to note that semivariance typically exhibits a linear increase over immediate time lags, signifying autocorrelation in relocation data. On the other hand,  the semivariance tends to flatten out after a long-lag, namely the \emph{range}, if it exists. The long-lag behaviour of semivariances reveals information about the use of space by the  wild animal species. In range-resident wild animals, the semivariance of their relocation data is expected to plateau, reaching an asymptote proportional to their home range.  If the semivariance does not approach an asymptote with increasing time lag, the relocation data may be unsuitable for home range analysis. This could suggest insufficient tracking duration for the wild animal or that the wild animal is shifting its range. Therefore, if there is no evidence of constrained space use (or no semivariance asymptote is reached), any home range analysis may be inappropriate \citep{calabrese2016ctmm}.\\

Wild animal locations closer in time show higher similarity than those farther apart. Directional persistence in animal motion leads to autocorrelated velocities, implying that an animal's direction and speed at one point correlate with those at other points. According to \cite{calabrese2016ctmm}, position autocorrelation, velocity autocorrelation, and range residency are valuable for classifying continuous-time stochastic processes or movement models.

\subsection{Independent and identically distributed process}
This approach assumes that wild animal locations, as well as velocities, are uncorrelated, a simplification commonly used in traditional home range estimation. Despite its unrealistic nature, this \emph{independent and identically distributed (IID)} process is included in this work for the sake of completeness in the modelling of the home ranges.

\subsection{Brownian process}
The Brownian process extends the discrete-time random walk to continuous time, serving as a fundamental model for movement. It exhibits position autocorrelation but lacks velocity autocorrelation, meaning  the speed and direction of a wild animal are not correlated across adjacent times. In this model, wild animals engage in a simple random search within an infinite and uniform resource distribution. Notably, the model lacks drift and attraction components, rendering it less flexible in capturing diverse movement patterns \citep{turchin1998quantitative}.

\subsection{Ornstein-Uhlenbeck process}
The Ornstein-Uhlenbeck process, an extension of the Brownian process, includes attraction to a point or mean. This mean-reverting model exhibits a tendency to drift towards its long-term mean, with increased attraction as movement deviates from the center. The Ornstein-Uhlenbeck process is suitable for modelling data lacking directional persistence but demonstrating confined space use. Parameters like home range crossing time and position variance are associated with this model, with home range crossing time representing the timescale for a wild animal to cover its home range, as detailed in \cite{fleming2014fine}.
\subsection{Ornstein-Uhlenbeck with foraging process}
The \emph{Ornstein-Uhlenbeck with foraging process} can be a suitable model for analyzing relocation data featuring correlated velocities and restricted use of space. It can be applicable to diverse datasets with fine sampling, revealing velocity autocorrelation and prolonged durations for range residence. This model involves three key parameters: position autocorrelation, velocity autocorrelation, and position variance \citep{fleming2014fine, calabrese2016ctmm}. If the optimal model is the \emph{Ornstein-Uhlenbeck process}, the home range estimation provides values for home ranges and crossing times. Conversely, if the selected model is the \emph{Ornstein-Uhlenbeck with foraging process}, the estimation includes home ranges, crossing times, velocity autocorrelation timescales, and average distance traveled for each wild animal. Wild animal relocation data can show either isotropic or anisotropic behaviour. In an isotropic process, the relocation lacks directional dependence, whereas in an anisotropic process, the relocation of the wild animal varies based on the direction of interest.\\

To sum up, empirical semivariance plots, derived from equation \eqref{Semivariance}, provide insights into wild animal movement behavior. They aid in assessing theoretical semivariance models for IID, Ornstein-Uhlenbeck, and Ornstein-Uhlenbeck with foraging processes. These theoretical models are fitted to empirical data using maximum likelihood, and their comparison via the Akaike information criterion (AIC) determines the most suitable model. The chosen model can subsequently be employed for estimating home ranges through autocorrelated kernel density estimation.
\section{Summary functions for pairs of types} 
The estimator $\hat{K}^{ij}_{inhom}\left(r\right)$ of the inhomogeneous cross-type $K$-function $K^{ij}_{inhom}\left(r\right)$ can be given by 
\begin{align*}
\hat{K}^{ij}_{inhom}\left(r\right) = \displaystyle\frac{1}{\abs{\W}}\sum_{\mathbf{v}\in\mathbfcal{Z}_{i}}\sum_{\mathbf{z}\in\mathbfcal{Z}_{j}}\frac{e\left(\mathbf{v}, \mathbf{z}\right) 1_{\left\lbrace \norm{\mathbf{z}-\mathbf{v}}\le r\right\rbrace}}{\lambda_{i}\left(\mathbf{v} \right)\lambda_{j}\left(\mathbf{z} \right)},
\end{align*} 
where $\mathbfcal{Z}_{i}\subset\W$ is the marginal point process (or sub-process of points) of type (or wild animal) $i$ with intensity $\lambda_{i}\left(\cdot\right)$, $\mathbfcal{Z}_{j}\subset\W$ is the sub-process of points of type (or wild animal) $j$ with intensity $\lambda_{j}\left(\cdot\right)$, $e\left(\mathbf{v}, \mathbf{z}\right)$ is an edge correction weight, and $\abs{\W}$ is the area of the study region $\W$. Edge correction methods can be found in \cite{baddeley2015spatial}.
\section{Spatial interaction modelling results}
This section presents the estimated inhomogeneous cross-type $L$- and $J$-functions. The plots depict the nature of spatial interactions between two black bears, which can be attraction, repulsion/avoidance, or no interaction. If the plot of the estimated inhomogeneous cross-type $L$- function from the observed data (the blue-colored curve in the figures below) entirely lies within the envelope, then there is no spatial interaction between the pair of black bears. If the curve of the estimated cross-type $L$-function wanders outside of the envelope from above, then there is an attraction between the pair of black bears. Otherwise, there is a repulsion/avoidance between the pairs of black bears. If, on the other hand, the estimated cross-type $J$-function from the observed data wanders outside of the envelope from below, then there is an attraction between the pairs of black bears. If the estimated cross-type $J$-function from the observed data wanders outside of the envelope from above, the pair of black bears have a repulsive/avoidance interaction. Figures \ref{Result5} to \ref{Result3} show the spatial interactions that may exist between the stated black bears. However, this conclusion must be supported by the statistical test of the interactions.\\

We also conducted a hypothesis test of the null hypothesis, which states that there is no spatial interaction between pairs of black bears. To test the hypothesis, we used statistical methods such as the maximum absolute deviation test (MAD test) and the Diggle-Cressie-Loosmore-Ford test (DCLF test).  Tables from \ref{tab:487To491} to \ref{tab:491To501} depict the hypothesis testing results of the study.  For a given inhomogeneous cross-type function and statistical test method, the $p$-values in each cell of the tables correspond to the rows of the figure panels.  In general, the statistical analyses using the inhomogeneous cross-type $J$- and $L$-function indicate that black bears do not show a preference for living in close proximity or at a distance from one another.

\begin{figure}[H]
\centering
\includegraphics[width=0.35\textwidth, height=0.2\linewidth]{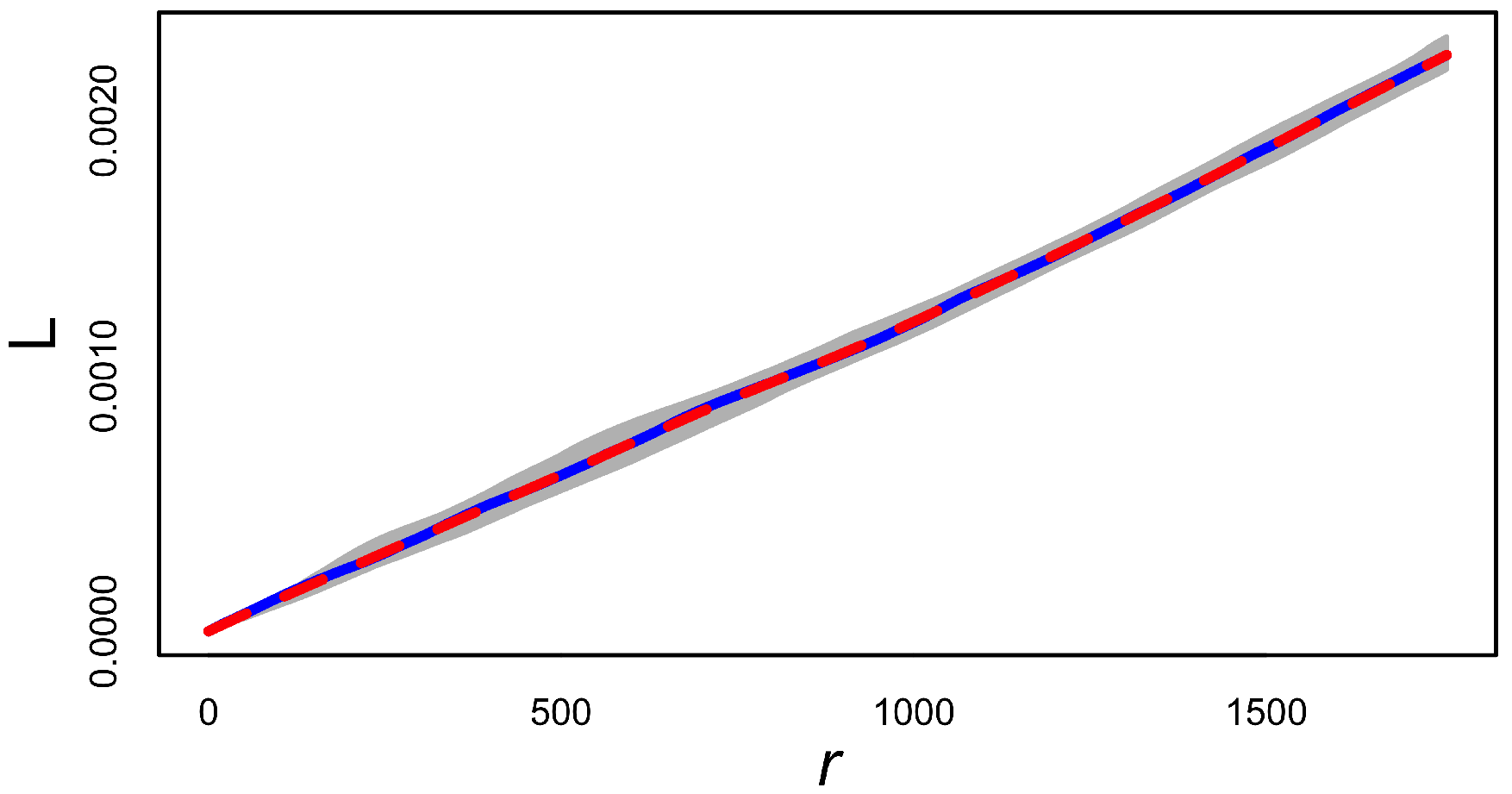}
\includegraphics[width=0.35\textwidth, height=0.2\linewidth]{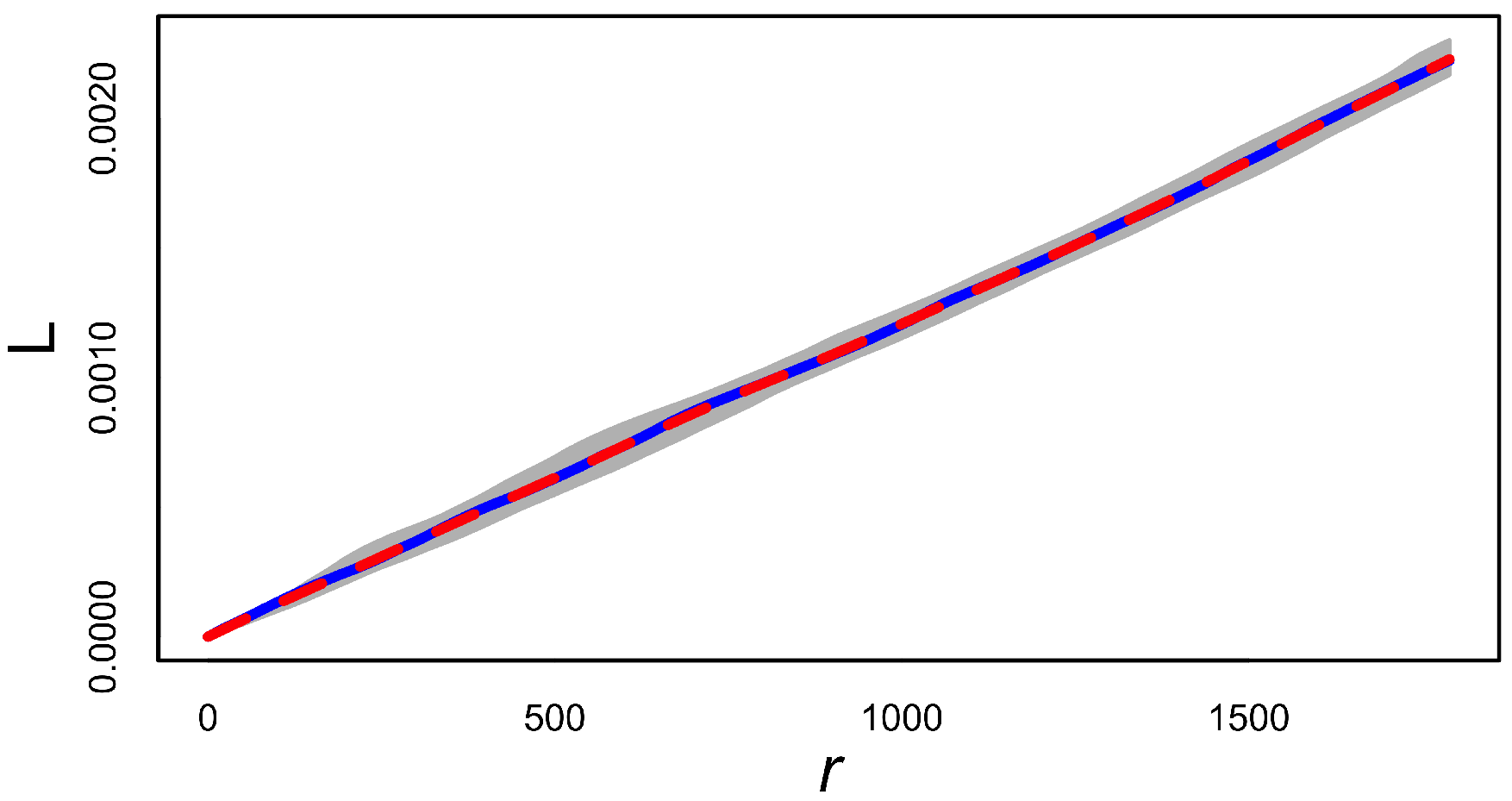}
\includegraphics[width=0.35\textwidth, height=0.2\linewidth]{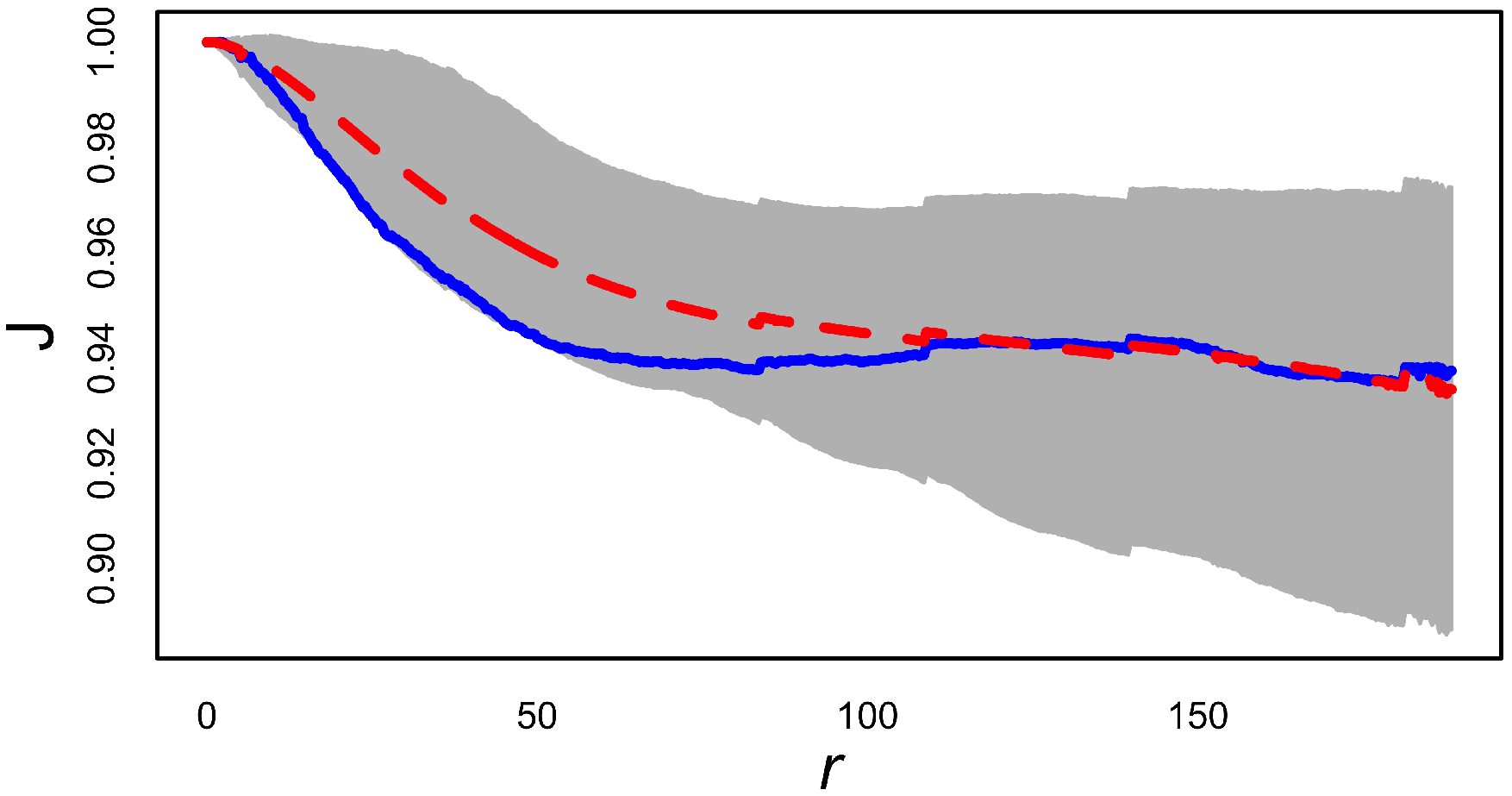}
\includegraphics[width=0.35\textwidth, height=0.2\linewidth]{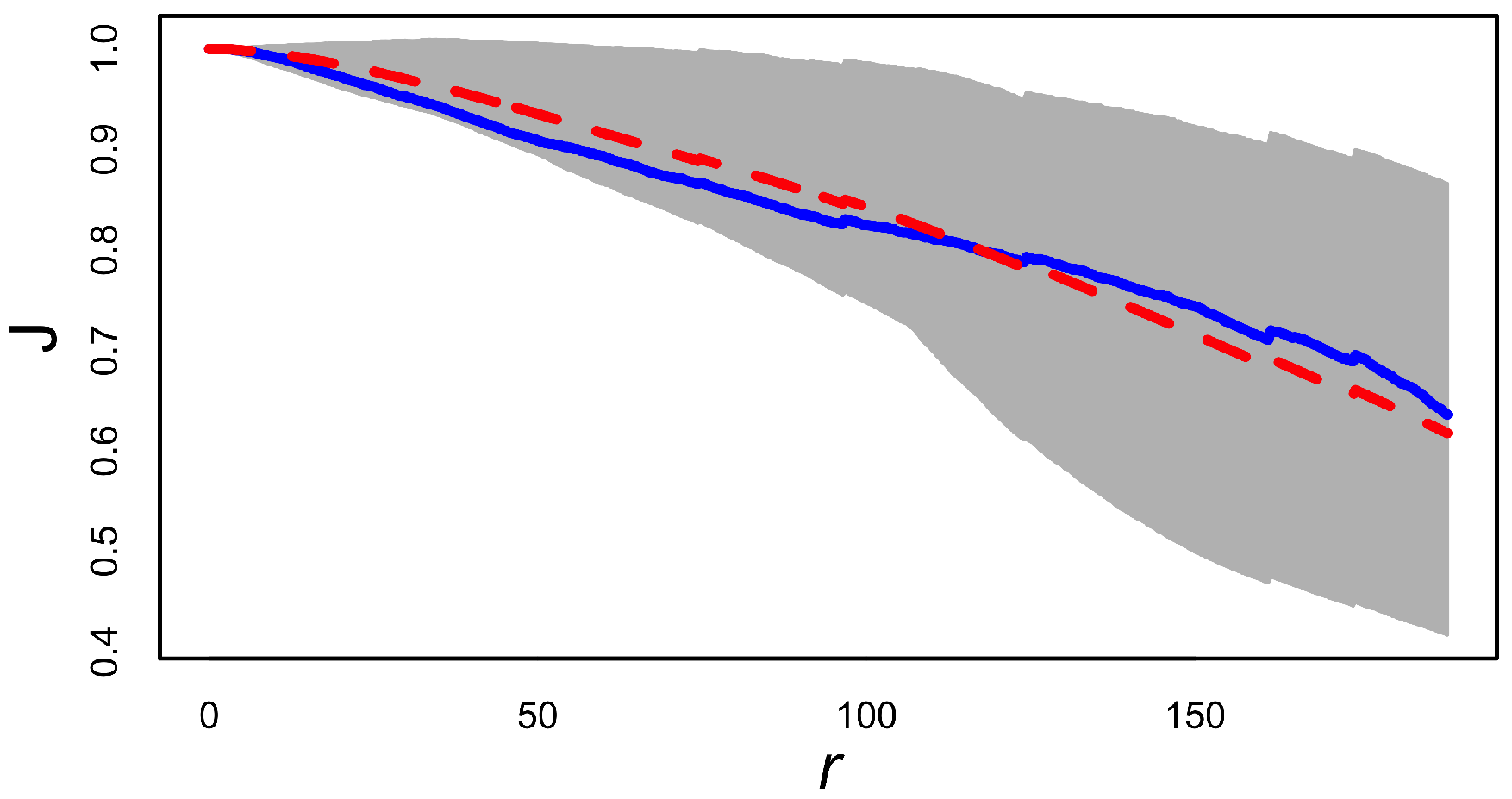}
\begin{table}[H]
\centering
\caption{Two-sided \textit{MAD} and \textit{DCLF} tests for lack of spatial interaction}
\label{tab:487To491}
\begin{tabular}{|c|c|c|c|c|c|}
\hline \multicolumn{1}{|c|}{Summary statistic}& \multicolumn{2}{c|}{MAD test} & \multicolumn{2}{c|}{DCLF test} \\\hline\hline
& 487  to 491 &491 to 487&487  to 491  &491 to 487\\\hline\hline
L&0.994&0.994&0.997&0.998\\\hline
J&0.701&0.946&0.677& 0.904\\\hline\hline
\end{tabular}
\end{table}
\caption{Simulation envelopes for estimated inhomogeneous cross-type L- function (first row) and J-functions (second row) from black bear 487  to 491 (first column)  and  from 491 to 487 (second column) based on 2500 simulations. The table exhibits the $p$-values of the statistical test methods  for inhomogeneous cross-type $L$-and $J$-functions.}
\label{Result5}
\end{figure}

\begin{figure}[H]
\centering
\includegraphics[width=0.35\textwidth, height=0.2\linewidth]{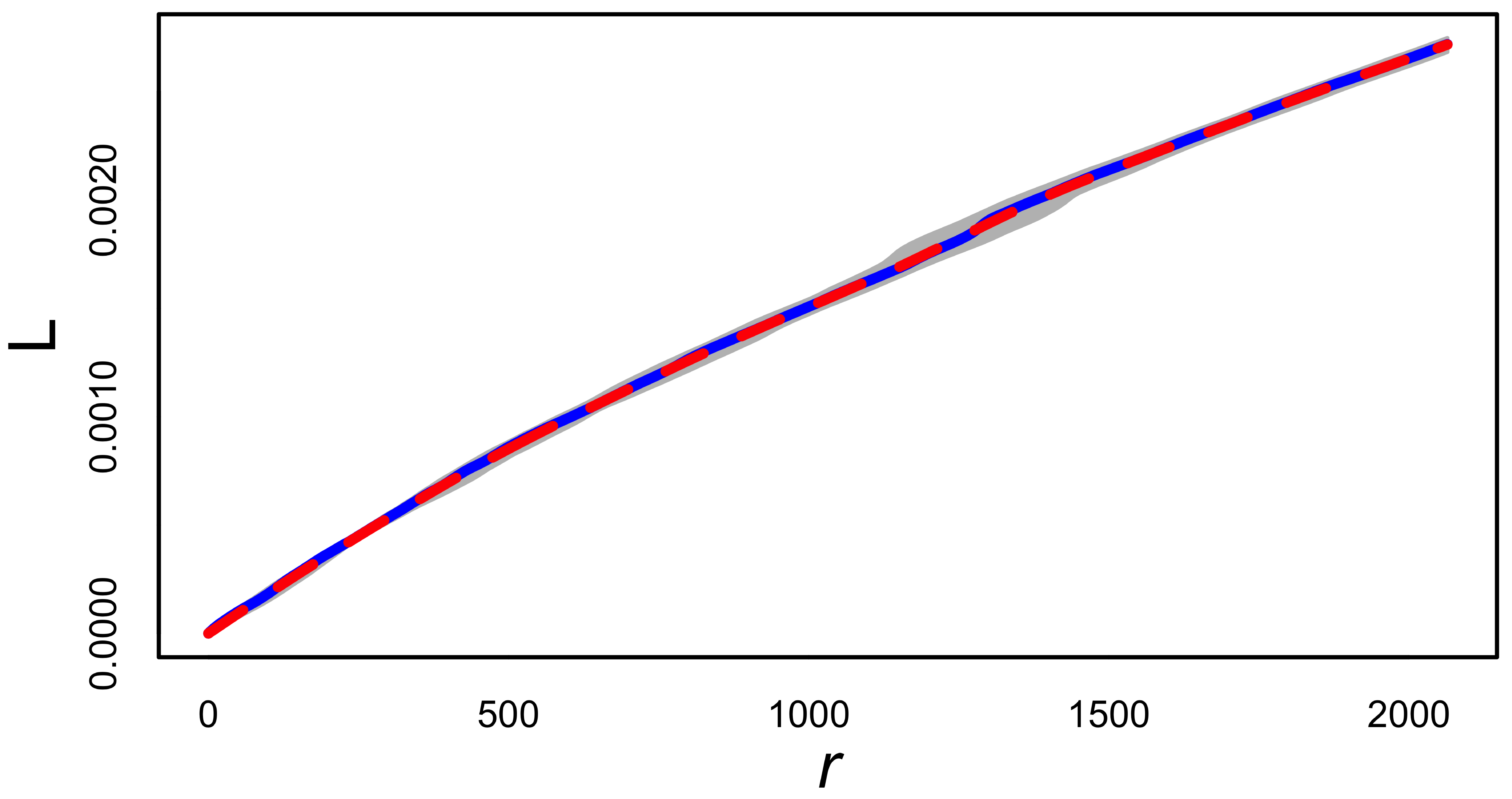}
\includegraphics[width=0.35\textwidth, height=0.2\linewidth]{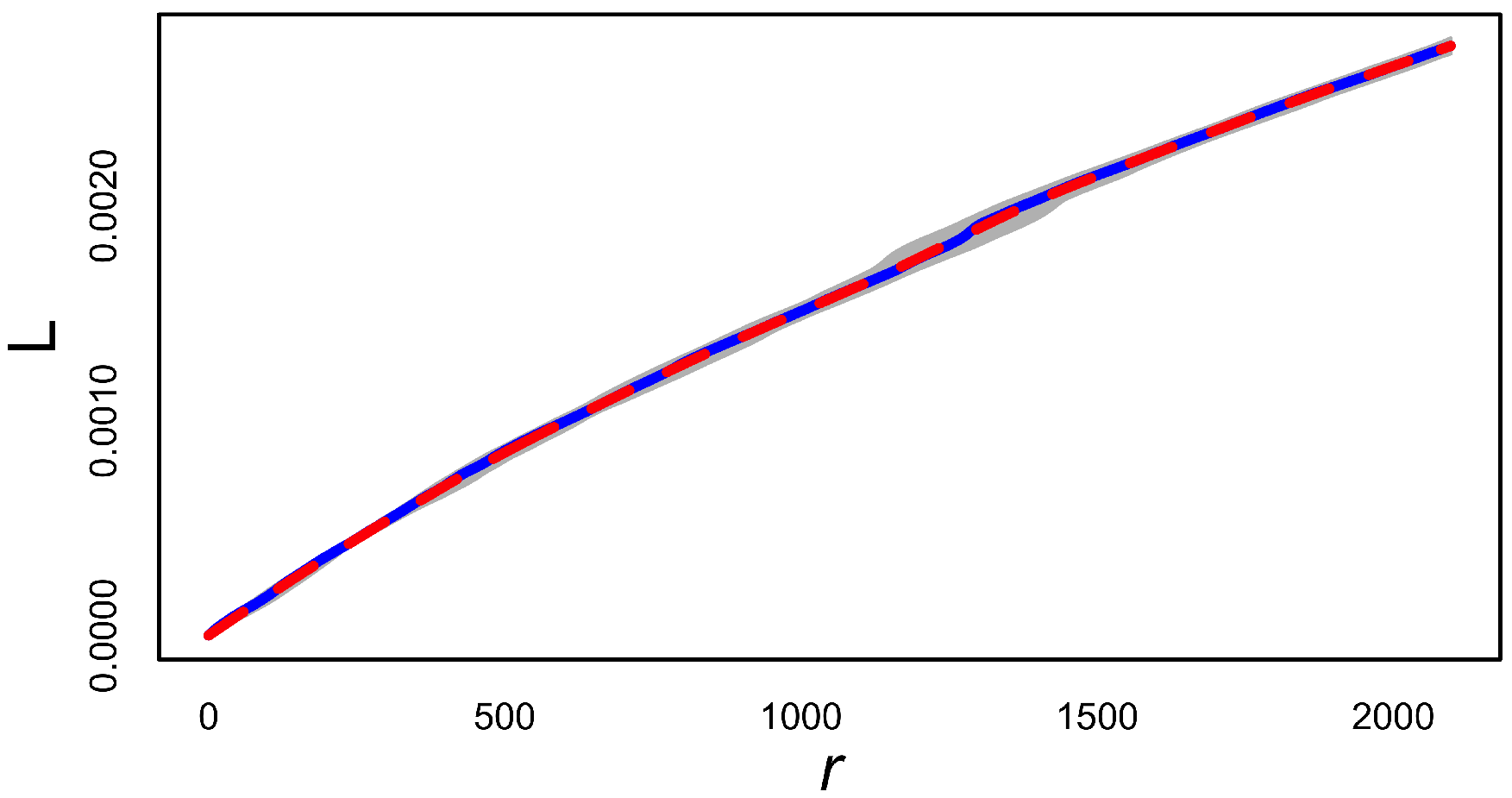}
\includegraphics[width=0.35\textwidth, height=0.2\linewidth]{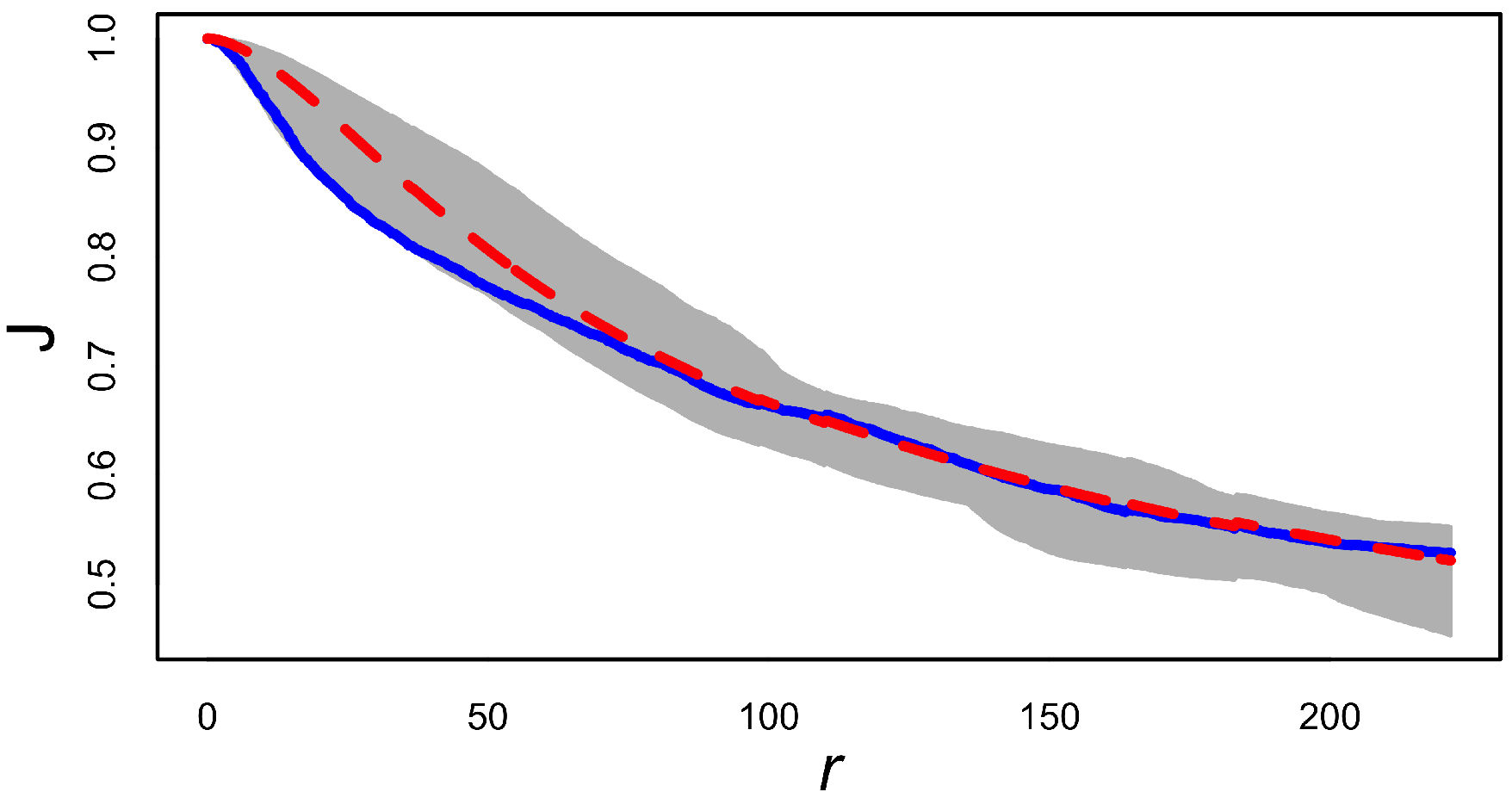}
\includegraphics[width=0.35\textwidth, height=0.2\linewidth]{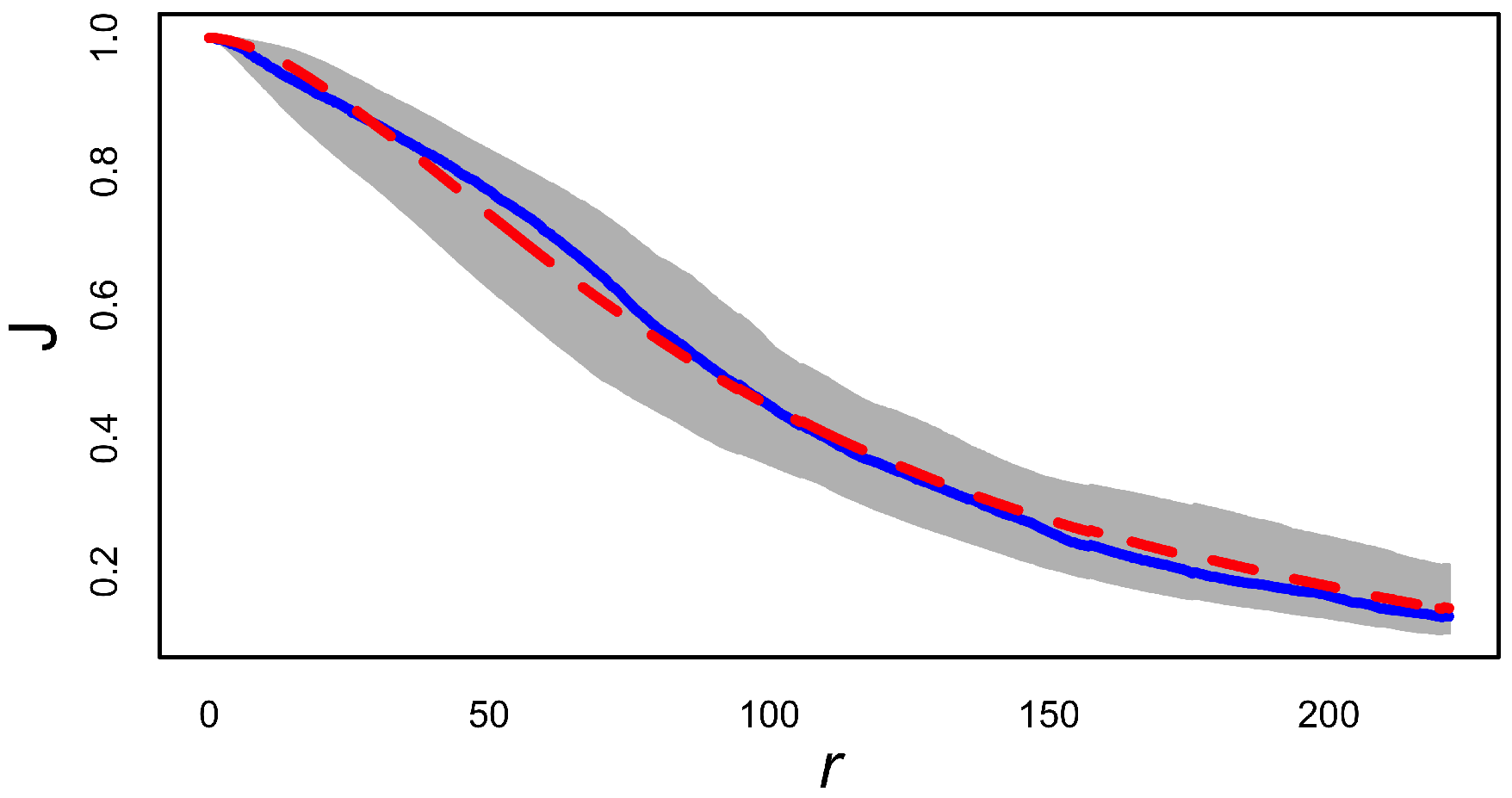}
\begin{table}[H]
\centering
\caption{Two-sided \textit{MAD} and \textit{DCLF} tests for lack of spatial interaction}
\label{tab:487To490}
\begin{tabular}{|c|c|c|c|c|c|}
\hline \multicolumn{1}{|c|}{Summary statistic}& \multicolumn{2}{c|}{MAD test} & \multicolumn{2}{c|}{DCLF test} \\\hline\hline
& 487  to 490  &490 to 487& 487  to 490  &490 to 487\\\hline\hline
L&0.987&0.981&0.992&0.992\\\hline
J&0.052&0.867&0.196& 0.914\\\hline\hline
\end{tabular}
\end{table}
\caption{Simulation envelopes for estimated inhomogeneous cross-type L- function (first row) and J-functions (second row) from black bear 487  to 490 (first column)  and  from 490 to 487 (second column) based on 2500 simulations. The table demonstrates the $p$-values of the statistical test methods  for inhomogeneous cross-type $L$-and $J$-functions.}
\label{Result4}
\end{figure}

\begin{figure}[H]
\centering
\includegraphics[width=0.45\textwidth, height=0.3\linewidth]{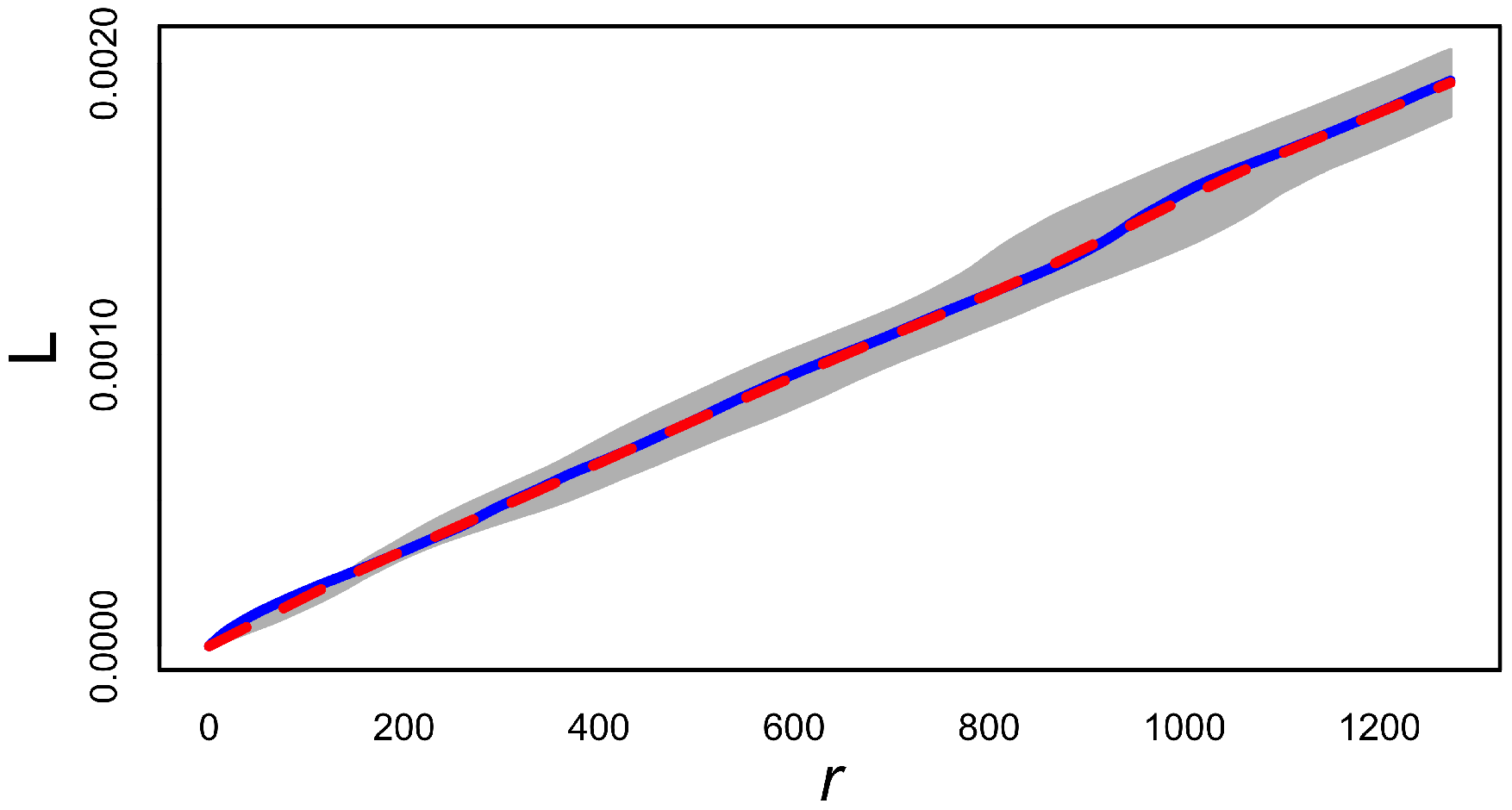}
\includegraphics[width=0.45\textwidth, height=0.3\linewidth]{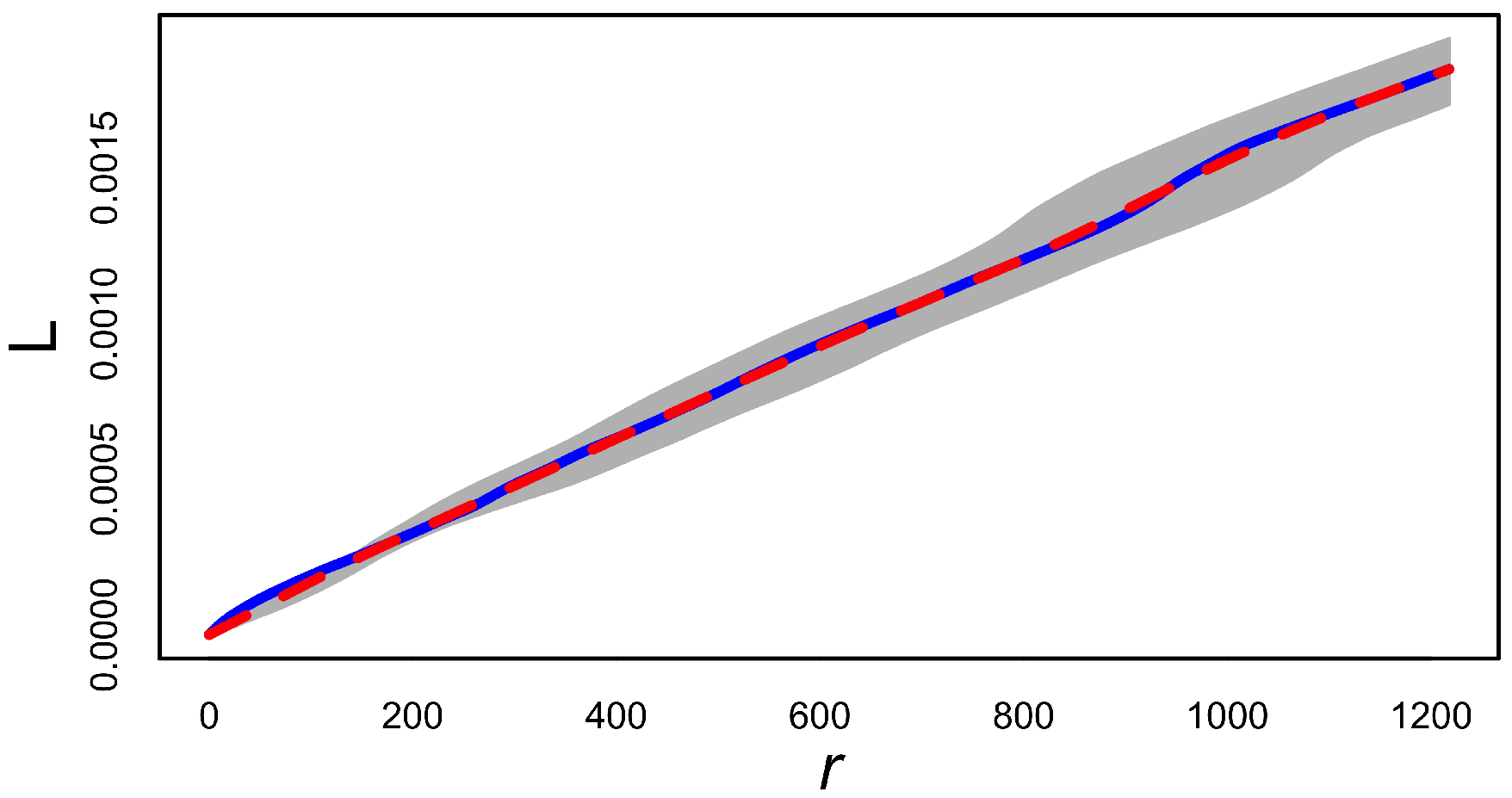}
\includegraphics[width=0.45\textwidth, height=0.3\linewidth]{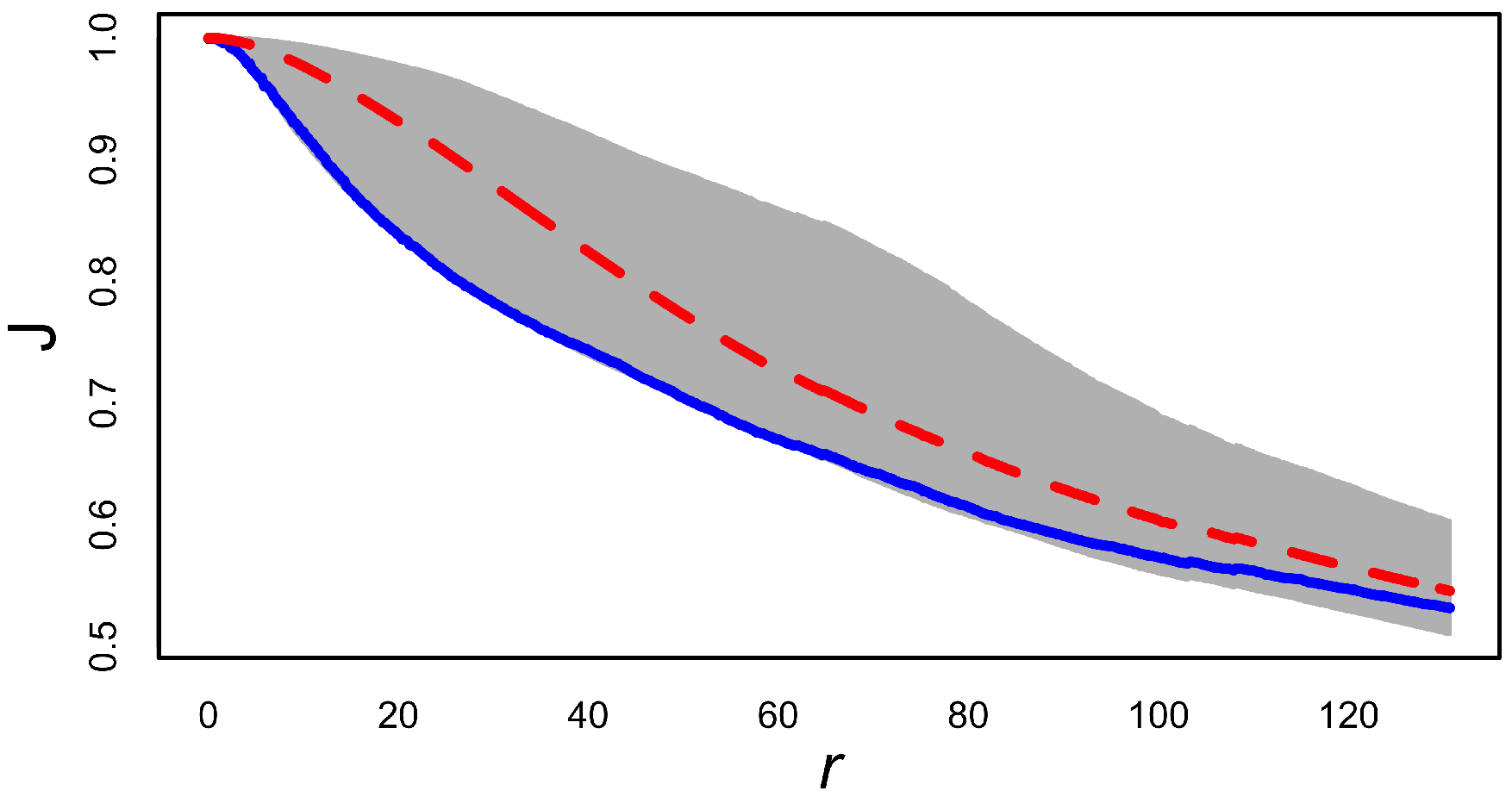}
\includegraphics[width=0.45\textwidth, height=0.3\linewidth]{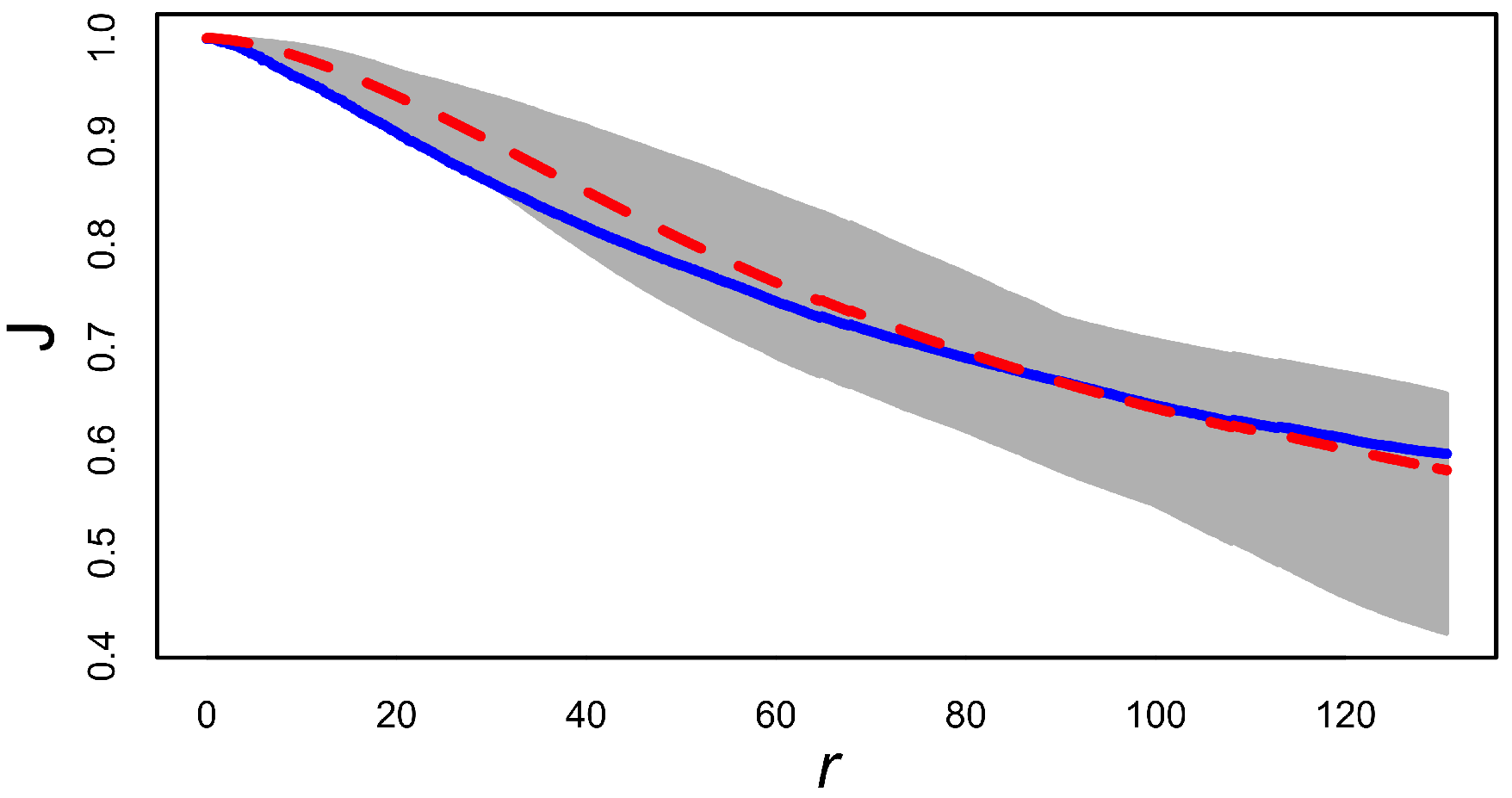}
\begin{table}[H]
\centering
\caption{Two-sided \textit{MAD} and \textit{DCLF} tests for lack of spatial interaction}
\label{tab:491To501}
\begin{tabular}{|c|c|c|c|c|c|}
\hline \multicolumn{1}{|c|}{Summary statistic}& \multicolumn{2}{c|}{MAD test} & \multicolumn{2}{c|}{DCLF test} \\\hline\hline
& 491  to 501   &501 to 491 & 491  to 501   &501 to 491 \\\hline\hline
L&0.928&0.922&0.981&0.972\\\hline
J&0.029&0.527&0.042& 0.630\\\hline\hline
\end{tabular}
\end{table}
\caption{Simulation envelopes for estimated inhomogeneous cross-type L- function (first row) and J-functions (second row ) from black bear 491  to 501 (first column)  and  from 501 to 491  (second column) based on 2500 simulations. The table illustrates the $p$-values of the statistical test methods  for inhomogeneous cross-type $L$-and $J$-functions.}
\label{Result3}
\end{figure}

\end{document}